\providecommand{\noqed}{\sbox{0}{\popQED}}
\newcommand{\maxstep}{max-step}
\newcommand{\IlIf}[2]{\KwSty{if} #1 #2}
\newcommand{\Class}[1]{{\AlCapSty{\small Object} \FuncSty{#1}}}
\newcommand{\Method}[1]{\caption{\AlCapSty{\small Method} \FuncSty{#1}}}
\newlength{\myalgowidth}
\newenvironment{classfigure}[1][htb]{
  \begin{figure}[#1]\begin{minipage}{\textwidth}\centering%
        \setcounter{AlgoLine}{0}\NoCaptionOfAlgo\LinesNotNumbered%
      }{\end{minipage}\end{figure}}
\newenvironment{class}[1]{
  \RestyleAlgo{plain}\@twocolumnfalse\begin{algorithm}[H]\Class{#1}\;
}{\end{algorithm}\@twocolumntrue}
\newenvironment{method}[1]{%
  \vskip1ex\hfill%
  \centering%
  \begin{minipage}{\myalgowidth}%
  \LinesNumbered%
  \@twocolumnfalse
  \RestyleAlgo{boxruled}\begin{algorithm}[H]\Method{#1}
}{\end{algorithm}\@twocolumntrue\end{minipage}}
\newtheorem{theorem}{Theorem}
\newtheorem{lemma}[theorem]{Lemma}
\newtheorem{corollary}[theorem]{Corollary}
\renewcommand{\AA}{\mathcal{A}}
\newcommand{\DD}{\mathcal{D}}
\newcommand{\EE}{\mathcal{E}}
\newcommand{\ceil}[1]{\lceil#1\rceil}
\newcommand{\floor}[1]{\lfloor#1\rfloor}
\newcommand{\Exp}{\ensuremath{\operatorname{\mathbf{E}}}}
\newcommand{\IIR}{\mathds{R}}
\newcommand{\Up}{\mathit{Up}}
\newcommand{\Down}{\mathit{Down}}
\title{Efficient Randomized Test-And-Set Implementations}
\author{
George Giakkoupis\\
INRIA, Rennes, France\\
george.giakkoupis@inria.fr
\and
\and
Philipp Woelfel\\
University of Calgary\\
woelfel@ucalgary.ca
}
\begin{document} 

\maketitle
\sloppy

\begin{abstract}
We study randomized test-and-set (TAS) implementations from registers in the asynchronous shared memory model with $n$ processes.
We introduce the problem of \emph{group election}, a natural variant of leader election, and propose a framework for the implementation of TAS objects from group election objects.
We then present two group election algorithms, each yielding an efficient TAS implementation.
The first implementation has expected \maxstep{} complexity $O(\log^\ast k)$
in the location-oblivious adversary model, and the second has expected \maxstep{} complexity $O(\log\log k)$ against any read/write-oblivious adversary, where $k\leq n$ is the 
contention.
These algorithms improve the previous upper bound by Alistarh and Aspnes~\cite{AA_TAS_2011a} of $O(\log\log n)$ expected \maxstep{} complexity in the oblivious adversary model.

We also propose a modification to a TAS algorithm by Alistarh, Attiya, Gilbert, Giurgiu, and Guerraoui~\cite{AAGGG2010a} for the strong adaptive adversary, which improves its space complexity from super-linear to linear, while maintaining its $O(\log n)$ expected \maxstep{} complexity.
We then describe how this algorithm can be combined with any randomized TAS algorithm that has expected \maxstep{} complexity $T(n)$ in a weaker adversary model, so that the resulting algorithm has $O(\log n)$ expected \maxstep{} complexity against any strong adaptive adversary and $O(T(n))$ in the weaker adversary model.

Finally, we prove that for any randomized 2-process TAS algorithm, there exists a schedule determined by an oblivious adversary such that with probability at least $1/4^t$ one of the processes needs at least $t$ steps to finish its TAS operation.
This complements a lower bound by Attiya and Censor-Hillel~\cite{Attiya2010-consensus} on a similar problem for $n\geq 3$ processes.
\end{abstract}

%
%
%

\section{Introduction}
\label{sec:into}

In this paper we study time and space efficient implementations of \emph{test-and-set  (TAS)} objects from atomic registers in asynchronous shared memory systems with $n$ processes.
The TAS object is a fundamental synchronization primitive, and has been used in algorithms for classical problems such as mutual exclusion and renaming~\cite{KRS1988a,PPTV1997a,EHW1998a,BPSV2006a,AAGGG2010a,AAGG2011a,AACGZ2011a}.

A TAS object stores a bit that is initially 0, and supports the operation \TAS{}, which sets the bit (or leaves it unchanged if it is already set) and returns its previous value; the process whose call returns 0 is the \emph{winner} of the object.
TAS objects are among the simplest natural primitives that have no deterministic wait-free linearizable implementations from atomic registers, even in systems with only two processes.
In fact, in systems with exactly two processes, a consensus protocol can be implemented deterministically from a TAS object and vice versa.

The TAS problem is very similar to the problem of \emph{leader election}.
In a leader election protocol, every process decides for itself whether it becomes the leader (it returns \win) or whether it loses (it returns \lose).
At most one process can become the leader, and not all participating processes can lose.
I.e., if all participating processes finish the protocol, then exactly one of them returns \win and all others return \lose.
Obviously, any TAS object immediately yields a leader election protocol:
Each process executes a single \TAS{} operation and returns \win if the \TAS{} call returns 0, or \lose if \TAS{} returns 1.
Similarly, a leader election algorithm, together with one additional register, can be used to implement a linearizable TAS object with just a constant increase in the number of steps \cite{Art-GHW2010a}.
Similar transformations from leader election to linearizable TAS objects are implicit in several TAS algorithms, e.g., \cite{Afek1992,AA_TAS_2011a}.

Early randomized TAS implementations assumed a \emph{strong adaptive adversary} model, where the adversary bases its scheduling decisions on the entire past history of events, including the coin flips by processes.
Tromp, and Vit\'{a}nyi~\cite{TV1990a,Tromp2002} presented a randomized implementation for two processes which has constant expected \maxstep{} complexity and constant space complexity against any strong adaptive adversary.
(The \maxstep{} complexity of an execution is the maximum number of steps any process needs to finish its algorithm in the execution.
See Section~\ref{sec:prelim-complexity} for formal definitions and a discussion.)
Afek, Gafni, Tromp, and Vit\'{a}nyi~\cite{Afek1992} gave a deterministic implementation of a TAS object for $n$ processes, from $O(n)$ 2-process TAS objects.
Any execution of this algorithm has \maxstep{} complexity $O(\log n)$.
Using Tromp and Vit\'{a}nyi's randomized 2-process TAS implementation, one obtains a randomized implementation of a TAS object from registers with $O(\log n)$ expected \maxstep{} complexity in the strong adaptive adversary model.
Alistarh, Attiya, Gilbert, Giurgiu, and Guerraoui~\cite{AAGGG2010a}
presented an \emph{adaptive} variant of that algorithm, called \RatRace, in which the expected \maxstep{} complexity is logarithmic in the contention $k$, i.e., the total number of processes accessing the TAS object.
The space requirements of \RatRace are higher, though, as $\Theta(n^3)$ registers are used.

%
No TAS algorithm with a sub-logarithmic expected \maxstep{} complexity against any strong adaptive adversary has been found yet, and no non-trivial time lower bounds are known either.
The strong adaptive adversary, however, may be too strong in some settings to model realistic system behavior.
Motivated by the fact that consensus algorithms benefit from weaker adversary models, Alistarh and Aspnes~\cite{AA_TAS_2011a} devised a simple and elegant TAS algorithm with an expected \maxstep{} complexity of $O(\log\log n)$ for the \emph{oblivious adversary} model, where the adversary has to make all scheduling decisions at the beginning of the execution.
We will refer to this algorithm as the \emph{AA-algorithm}.
Although not explicitly mentioned in~\cite{AA_TAS_2011a}, the AA-algorithm works even for a slightly stronger adversary, the \emph{read/write-oblivious (r/w-oblivious)} adversary.
Such an adversary can take all past operations of processes, including coin flips, into account when making scheduling decisions, but it cannot see whether a process will read or write in its next step, if that decision is made by the process at random.
The space complexity of the AA-algorithm is super-linear, as it uses \RatRace as a component.

\begin{table*} [tb]
\begin{center}
    \begin{tabular}{| c | c | c | c | l |}
    \hline
    \textbf{Adversary}
    &
    \textbf{Time}
    &
    \textbf{Space}
    &
    \textbf{Reference}
    &
    \textbf{Comments}
    \\
    \hline
    \hline
    strong adaptive
    &
    unbounded
    &
    $\lceil\log n\rceil+1$
    &
    \cite{Styer1989}
    &
    deadlock-free only
    \\
    \hline
    strong adaptive
    &
    $O(1)$
    &
    $\Theta(1)$
    &
    \cite{TV1990a,Tromp2002}
    &
    2-process implementation
    \\
    \hline
    strong adaptive
    &
    $O(\log n)$
    &
    $\Theta(n)$
    &
    \cite{Afek1992}
    &
    -
    \\
    \hline
    strong adaptive
    &
    $O(\log k)$
    &
    $\Theta(n^3)$
    &
    \cite{AAGGG2010a}
    &
    $k$ is the contention
    \\
    \hline
    r/w-oblivious
    &
    $O(\log\log n)$
    &
    $\Theta(n^3)$
    &
    \cite{AA_TAS_2011a}
    &
    -
    \\
    \hline
    location-oblivious
    &
    $O(\log^\ast k)$
    &
    $\Theta(n)$
    &
    Theorem~\ref{thm:location-oblivious}
    &
    -
    \\
    \hline
    r/w-oblivious
    &
    $O(\log\log k)$
    &
    $\Theta(n)$
    &
    Theorem~\ref{thm:rw-oblivious}
    &
    -
    \\
    \hline
    strong adaptive
    &
    $O(\log k)$
    &
    $\Theta(n)$
    &
    Theorem~\ref{thm:ratrace}
    &
    -
    \\
    \hline
    oblivious
    &
    $O(\log^\ast k)$
    &
    $\Theta(\log n)$ 
    &
    \cite{GHHW15a} 
    &
    uses impl.\ of Theorem~\ref{thm:location-oblivious}
    \\
    \hline
    \end{tabular}
    \caption{Randomized TAS implementations. In the second column  we give the expected \maxstep{} complexity of the algorithm.
    }
    \label{tab:}
\end{center}
\end{table*}

\paragraph{Our contribution.}
In view of their AA-algorithm, Alistarh and Aspnes asked whether any better TAS algorithm exists for the oblivious or even stronger adversary models.
We answer this question in the affirmative: We present an adaptive algorithm that has an expected \maxstep{} complexity of $O(\log^\ast k)$ in the oblivious adversary model, where $k$ is the
contention.
In fact, our result holds for the slightly stronger \emph{location-oblivious} adversary.
This adversary makes scheduling decisions based on all past events (including coin flips), but it does not know which register a process will access in its next step, if this decisions is made at random.

This algorithm, however, is not efficient in the r/w-oblivious adversary model.
For such adversaries, we devise a different algorithm that has expected \maxstep{} complexity $O(\log\log k)$, and uses $O(n)$ registers.
It is similar to the AA-algorithm, but introduces a new idea that makes it adaptive.

Our two TAS algorithms above are the first ones with sub-logarithmic expected \maxstep{} complexity that need only $O(n)$ registers.

Both algorithms rely on a novel framework that uses  a variant of the leader election problem, called \emph{group election}, in which more than one process can get elected.
We present a TAS implementation based on multiple such group election objects.
The performance of the implementation is determined by the \emph{effectiveness} of the group election objects used, which is measured in terms of the expected number of processes that get elected.

The AA-algorithm has the desirable property that its performance degrades gracefully when the adversary is not r/w-oblivious, and against a strong adaptive adversary it still achieves an expected \maxstep{} complexity of $O(\log k)$.
In their basic form, our algorithms do not exhibit such a behavior---a strong adaptive adversary can find a schedule where processes need $\Omega(k)$ steps to complete their \TAS{} operation.
To rectify that, we present a general method to combine any TAS algorithm with \RatRace, so that
if the algorithm has expected \maxstep{} complexity $T(k)$ against any r/w-oblivious or location-oblivious adversary, then the combined algorithm has  expected \maxstep{} complexity $O\bigl(T(k)\bigr)$ in the same adversary model, and $O(\log k)$ against any strong adaptive adversary.
Further, we propose a modification of \RatRace that improves its space complexity from $O(n^3)$ to $O(n)$, without increasing its expected \maxstep{} complexity.
Thus, combining this algorithm with any of our two algorithms for weak adversaries, yields an algorithm with linear space complexity.


Finally, we show for any randomized TAS implementation for two processes, that the oblivious adversary can schedule processes in such a way that for any $t>0$, with probability at least $1/4^t$
one of the processes needs at least $t$ steps to finish its \TAS{} operation.
This result immediately implies the same lower bound on 2-process consensus.
Attiya and Censor-Hillel~\cite{Attiya2010-consensus} showed
that with probability at least $1/c^{t}$, for some constant $c$, any randomized $f$-resilient $n$-process consensus algorithm does not terminate within a total number of $t(n-f)$ steps.
However, the lower bound proof in~\cite{Attiya2010-consensus} only works for $n\geq 3$ processes.
Thus our result fills in the missing case of $n=2$.

In the conference version of this paper~\cite{GW2012b}, we also proved a lower bound of $\Omega(\log n)$ for the number of registers needed to implement nondeterministic solo-terminating TAS.
After the conference paper was published, Dan Alistarh made us aware that a proof by Styer and Peterson from 1989~\cite{Styer1989} implies this result.
In particular, Styer and Peterson~\cite{Styer1989} showed that any implementation of deadlock-free leader election requires at least $\lceil\log n\rceil+1$ registers.
They also described a deadlock-free (deterministic) leader election algorithm that uses $\lceil\log n\rceil+1$ registers.
However, this algorithm is not wait-free (and thus has unbounded step complexity).

Until recently, it was not unknown whether any randomized wait-free (or obstruction-free) TAS implementation exists that uses fewer than $O(n)$ registers.
After completion of the draft of this paper, Giakkoupis, Helmi, Higham, and Woelfel~\cite{GHHW2013a,GHHW15a} presented deterministic obstruction-free algorithms that use only $O(\sqrt n)$ and $O(\log n)$ registers, respectively.
As the authors observed, these algorithms can be turned into randomize wait-free ones, and can be combined with the first algorithm proposed in this paper to achieve $O(\log^\ast n)$ expected \maxstep{} complexity in the oblivious adversary model, with $O(\sqrt n)$ and $O(\log n)$ space complexity, respectively.

\section{Preliminaries}
\label{sec:prelims}

We consider an asynchronous shared memory model where up to $n$ processes, with IDs $1,\dots,n$, communicate by reading and writing to atomic shared multi-reader multi-writer registers.
Registers can store values from an arbitrary countable domain.
Algorithms are randomized and use local coin flips to make random decisions.
A coin flip is a step that yields a random value from some countable space $\Omega$, using an arbitrary but fixed probability distribution $\DD$. 
Coin flips are private, i.e., only the process that executes the coin flip gets to see the outcome.
For the model description we will assume (w.l.o.g.) that processes alternate between coin flip steps and shared memory steps (i.e., reads or writes), and that their first step is always a coin flip.
Our algorithm descriptions do not always follow this convention, because in the given programs processes may execute multiple consecutive shared memory steps without any coin flips in-between.
Obviously one can simply add ``dummy'' coin flip steps in order to achieve an alternation.

An \emph{execution} is a possibly infinite sequence, where the $i$-th element contains all information describing the $i$-th step.
That comprises the ID of the process taking that step, the type of step (read, write, or coin flip), the affected register in case of a read or write, the value returned in case of a read or coin flip, and the value written in case of a write.
A \emph{schedule} is a sequence of process IDs in $\{1,\ldots,n\}$, and a \emph{coin flip vector} $\omega$ is a sequence of coin flip values in $\Omega$; these sequences may be infinite.
Every execution $\EE$ uniquely defines a schedule $\sigma(\EE)$ that is obtained from $\EE$ by replacing each step with the ID of the process performing that step, and a coin flip sequence $\omega(\EE)$, which is the sequence of coin flip values defined by $\EE$.
Similarly, for a given algorithm $M$, a schedule $\sigma$ together with an infinite coin flip vector $\omega=(\omega_1,\omega_2,\dots)$ uniquely determine an \emph{execution} $\EE_M(\sigma,\omega)$, in which processes execute their shared memory and coin flip steps in the order specified by $\sigma$, and the value returned from the $i$-th coin flip (among all processes) is $\omega_i$.
If a process has finished its algorithm, it does not take any more steps, even if it gets scheduled
(alternatively, one can think of the process continuing to execute only no-ops).


\subsection{Adversary Models}
\label{sec:prelim-adversary}

An adversary decides at any point of an execution, which process will take the next step.
Formally, an \emph{adversary} $A$ is a function that maps a finite execution $\EE$ of some algorithm $M$ to a process ID $A(\EE)$, which identifies the process to take the next step following $\EE$.
This way, adversary~$A$ and algorithm~$M$, together with an infinite coin flip vector $\omega$, yield a unique infinite schedule $\sigma_M(A,\omega)=(\sigma_1,\sigma_2,\dots)$,
where $\sigma_1=A(\varepsilon)$ for the empty execution $\varepsilon$, and
\begin{displaymath}
  \sigma_{i+1}=A\Bigl(\EE_M\bigl((\sigma_1,\dots,\sigma_i),\omega\bigr)\Bigr).
\end{displaymath}
Thus, given algorithm $M$ and adversary $A$ we can obtain a random schedule $\sigma_M(A,\omega)$ and the corresponding random execution $\EE_M(\sigma_M(A,\omega),\omega)$ by choosing a coin flip vector $\omega$ at random
according to the product distribution $\DD^\infty$ over the set $\Omega^\infty$ of infinite coin flip vectors.
The coin flip vector $\omega$ is the only source of randomness, here.
We denote the random execution $\EE_M(\sigma_M(A,\omega),\omega)$ by $\EE_{M,A}$, and call it the \emph{random execution of $M$ scheduled by $A$}.
We are interested in random variables and their expectation defined by $\EE_{M,A}$, e.g., the maximum number of shared memory steps any process takes (see Section~\ref{sec:prelim-complexity}).

An \emph{adversary model $\AA$} maps each algorithm $M$ to a family $\AA(M)$ of adversaries.
We 
say that an algorithm $M$ has certain properties
against any adversary in $\AA$
to denote that these properties are satisfied for any adversary
$A\in \AA(M)$.
The \emph{strong adaptive adversary model} is defined for any algorithm as the set of all adversaries.
Here, the next process scheduled to take a step is decided based on the entire past execution (including the results of all coin flip steps so far).
The \emph{oblivious adversary model} is the weakest standard adversary model,
where each adversary $A$ is a function of just the length of the past execution, i.e., $A(\EE) = A(\EE')$, if $|\EE|=|\EE'|$.
Therefore, an oblivious adversary results in a schedule that is fixed in advance and is independent of the coin flip vector.

Several \emph{weak}  adaptive adversary models have been proposed, which are stronger than the oblivious model but weaker than the strong adaptive model.
We will consider two such models.
An adversary $A$
for algorithm $M$
is \emph{location-oblivious} if for any finite execution $\EE$ of $M$, the next processes $A(\EE)$ scheduled by $A$ to take a step can depend on the following information:
\begin{enumerate}[label=(\roman*)]
  \item the complete past schedule $\sigma(\EE)$;
  \item the return values of all coin flip steps performed by each process $p$ \emph{preceding} $p$'s latest shared memory step in $\EE$; and
  \item
    for each process $p$ that does not finish in $\EE$ and its next step is a shared memory step, the information whether that step will be a read or a write operation, and, in case of a  write, the value that $p$ will write.
\end{enumerate}
In particular, the location-oblivious adversary does not make a scheduling decision based on \emph{which register} each process $p$ will access in its next shared memory step, if that register is determined at random based $p$'s coin flip \emph{after} its latest shared memory step in $\EE$.

Similar but incomparable to the location-oblivious adversary model is the \emph{r/w-oblivious} adversary model.
An adversary $A$
for algorithm $M$
is r/w-oblivious if for any finite execution $\EE$ of $M$, $A(\EE)$ can depend on (i) and (ii) above, and also on the following information:
\begin{enumerate}
  \item[(iii$'$)]

    for each process $p$ that does not finish in $\EE$ and its next step is a shared memory step, the register that $p$ will access in that step.
\end{enumerate}
In particular, the adversary does not make a scheduling decision based on
whether a process $p$'s next shared memory step is a read or a write operation, if this decision is made at random based on $p$'s coin flip after its last shared memory step in $\EE$.


\subsection{Complexity Measures}
\label{sec:prelim-complexity}

We use the following standard definitions.
The \emph{space complexity} of an implementation is the number of registers it uses.
An event occurs \emph{with high probability (w.h.p.)}, if it has probability ${1-1/m^{\Omega(1)}}$ for some parameter $m$, as $m\to \infty$.
In our case, $m$ will be either $n$, the total number of processes, or $k$, a notion of congestion defined in Section~\ref{sec:prelim-complexity}.

We are interested in randomized leader election, and a variant of it called group election.
These problems are \emph{one-time} in the sense that each process can participate in a leader (or group) election at most once.
The following definitions are thus limited to one-time operations~$op$.

Let $M$ be an algorithm in which a processes may call some operation $op$
(possibly in addition to other operations).
For any process $p$ and any execution $\EE$ of algorithm $M$, let $T_{op,p}(\EE)$ be the number of shared memory steps that $p$ executes in $\EE$ during its $op$ call, and let $T_{op,p}(\EE)=0$ if $p$ does not call $op$.
The \emph{\maxstep{} complexity} of $op$ in execution $\EE$ is defined as
\[
    \max_{p} T_{op,p}(\EE).
\]
The \emph{expected \maxstep{} complexity} of $op$ in algorithm $M$ against an adversary $A$ is
\begin{equation}
    \label{eq:maxTopp}
    \Exp\left[\max_{p} T_{op,p}(\EE_{M,A})\right],
\end{equation}
where $\EE_{M,A}$ is a random execution of $M$ scheduled by $A$ (see Section~\ref{sec:prelim-adversary}).
The expected \maxstep{} complexity of $op$ against $A$ is the supremum of the quantity in~\eqref{eq:maxTopp} over all algorithms $M$. 
The expected \maxstep{} complexity of $op$ against an adversary model $\AA$ is the supremum of the quantity in~\eqref{eq:maxTopp} over all $M$ and all $A\in \AA(M)$.

In previous works~\cite{AAGGG2010a,AA_TAS_2011a}, the terms ``expected individual step complexity'' or simply ``expected step complexity'' have been used to denote what we refer to as ``expected \maxstep{} complexity.''
We prefer to use a new and thus unambiguous term to clearly distinguish this measure from other step complexity measures, and in particular, from $\max_{p}\Exp[T_{op,p}(\EE_{M,A})]$.
It follows immediately from the definition of expectation that $\max_{p}\Exp[T_{op,p}(\EE_{M,A})]\leq \Exp[\max_{p}T_{op,p}(\EE_{M,A})]$.

Our  implementations of group and leader election
objects are \emph{adaptive} with respect to contention, i.e., their \maxstep{} complexity depends on the number of participating processes rather than $n$, the number of processes in the system.
In fact, the only way in which $n$ is used in the design of our algorithms is to determine the number of registers that must be used.
If we allow the implementation to use unbounded space, then $n$ can be unbounded, too.

Expressing the \maxstep{} complexity in terms of contention requires some care.
We are interested in the conditional expectation of the \maxstep{} complexity of an operation $op$, given that the number of processes calling $op$ is limited by some value $k$.
A straightforward idea
to limit contention would be
to consider
$
  \Exp [\max_{p} T_{op,p}(\EE_{M,A}) \mid  K \leq k ],
$
where $K$ is the actual number of processes that execute $op$ in $\EE_{M,A}$.
But this does not yield satisfying results, as an adaptive adversary may be able to force that conditional expectation to be unreasonably large for any given $k<n$.
An adversary might achieve that, e.g., by letting $k$ processes start their operation $op$, and if it sees during the execution that the coin flips are favorable (i.e., will yield a fast execution), it can schedule one more process to invoke $op$, increasing the contention to more than $k$ processes.
This would prevent ``fast'' executions from contributing to $\Exp[\max_{p} T_{op,p}(\EE_{M,A}) \mid K \leq k]$.

We define a measure of contention, called \emph{max-contention}, that the adversary cannot change once the first process is poised to invoke operation $op$.
Let $\EE$ be an execution of algorithm $M$,
and let $\EE'$ be the prefix of $\EE$ ending when the first process becomes poised to invoke $op$; $\EE':=\EE$ if no such process exists.
The \emph{max-contention} of $op$ in execution $\EE$ of algorithm $M$, denoted $k_{\max}^{M,op}(\EE)$, is the maximum number of processes that invoke $op$, in \emph{any execution of $M$ that is an extension of $\EE'$}.
In other words, 
$k_{\max}^{M,op}(\EE)$ is the maximum number of invocations of $op$ for any possible way of continuing execution $\EE'$ of $M$.

Let $\mathit{Exec}_{M,A,op}(k)$
be the set of all possible executions $\EE'$ of algorithm $M$ that can result for a given adversary $A$, and have the properties that: (i)~$\EE'$ ends when the first process becomes poised to invoke $op$; and (ii)~$k_{max}^{M,op}(\EE')\leq k$.
We define the \emph{adaptive expected \maxstep{} complexity} of $op$ in algorithm $M$
against adversary~$A$ to be a function $\tau:\{1,\dots,n\}\to\IIR_{\geq 0}$, where
\begin{equation}
    \label{eq:def_max_contention}
    \tau(k)
    :=
    \sup_{\EE' \in \mathit{Exec}_{M,A,op}(k)}
    \Exp\left[\max_{p} T_{op,p}(\EE_{M,A}) \ \middle\vert\
    \text{$\EE_{M,A}$ is an extension of $\EE'$}
    \right]
    .
\end{equation}
The adaptive expected \maxstep{} complexity of $op$ against adversary $A$ (or against an adversary model $\AA$)
is defined similarly to $\tau(k)$, except that the supremum is taken also
over all algorithms $M$ (respectively, over all $M$ and all $A\in \AA(M)$).
%
We say that the \emph{adaptive \maxstep{} complexity} of $op$ in algorithm $M$
against adversary~$A$ is bounded by $b(k)$ with probability $q(k)$, if
\[
    \Pr\left(\max_{p} T_{op,p}(\EE_{M,A}) \leq b(k) \ \middle\vert\
    \text{$\EE_{M,A}$ is an extension of $\EE'$}
    \right)
    \geq
    q(k)
    ,
    \text{ for all }
        \EE' \in \mathit{Exec}_{M,A,op}(k)
    .
\]
We also say that the adaptive \maxstep{} complexity of $op$ against ~$A$ (or $\AA$) is bounded by $b(k)$ with probability $q(k)$, if the above holds for all algorithms $M$ (respectively, all $M$ and all $A\in \AA(M)$).
Throughout the remainder of the paper, when we say \emph{(expected) \maxstep{} complexity}, we mean \emph{adaptive (expected) \maxstep{} complexity}.
%

In the terminology introduced in this section, we will often replace operation $op$ by the object $G$ that supports this operation, if $op$ is the only operation that $G$ provides.

\subsection{Some Basic Objects}
\label{sec:prelim-objects}

We now describe several simple objects that we use as building blocks for our TAS algorithms.

A \emph{doorway} object
supports the operation \enter{} which takes no parameters and returns a boolean value, \True or \False.
Each process calls \enter{} at most once, and we say that it \emph{enters} the doorway when it invokes \enter{}, and \emph{exits} when the \enter{} method responds.
The process \emph{passes through} the doorway if its \enter{} method returns \True, and is \emph{deflected} if it returns \False.
A doorway object satisfies the following two properties:
\begin{enumerate}[label=(D\arabic*)]
  \item \label{D1} Not all processes entering the doorway are deflected; and
  \item \label{D2} If a process passes through the doorway, then it entered the doorway before any process exited the doorway.
\end{enumerate}
A simple, wait-free implementation of a doorway object is given in Figure~\ref{fig:doorway}.
It is straightforward that the implementation satisfies properties \ref{D1} and \ref{D2}: The first process that writes to $B$ ``closes'' the doorway.
All processes that read $B$ after that will be deflected, and thus \ref{D2} is true.
But the first process that reads $B$ does not get deflected, because at the point of that read, no process has written $B$.
Therefore, \ref{D1} is also true.
The implementation uses only one register and each process finishes its \enter{} method in a constant number of steps.
\begin{classfigure}[t!]\small
  \begin{class}{Doorway}
    \shared{register $B\gets\False$}
  \end{class}
  \begin{method}{enter()}
    \If{$B.\Read{}=\False$\label{ln:DW:read}}{
      $B$.\Write{\True}\label{ln:DW:write}\;
      \Return{\True}
    }{
      \Return{\False}
    }
  \end{method}
  \caption{A doorway implementation.\label{fig:doorway}}
\end{classfigure}

\begin{classfigure}[t!]\small
    \begin{class}{Splitter}
  \shared{register $X$; \Doorway $D$}
  \end{class}
  \begin{method}{split()}
    $X$.\Write{\myID}\;
    \If{$D$.\enter{}\label{line:splitter:doorway}}{
      \IlIf{$X.\Read{}=\myID$}{
	\Return{\Stop}\;
      }
      \Return{\Right}
    }
    \Return{\Left}\;
  \end{method}\bigskip

  \begin{class}{RSplitter}
  \shared{register $X$; \Doorway $D$}
  \end{class}
  \begin{method}{split()}
    $X$.\Write{\myID}\;


    \If{$D$.\enter{}}{
      \IlIf{$X.\Read{}=\myID$}{
	    \Return{\Stop}\;
      }
    }

    Choose $dir\in\{\Left,\Right\}$ uniformly at random\;
    \Return{$dir$}\;
  \end{method}

  \caption{Deterministic and randomized splitter implementations.\label{fig:splitter}}
\end{classfigure}

A \emph{randomized 2-process TAS object} can be implemented from a constant number of registers, so that its \TAS{} method has constant expected \maxstep{} complexity.
More precisely, an implementation by Tromp and Vit{\'a}nyi~\cite{Tromp2002}  uses two single-reader single-writer registers, and guarantees for any strong adaptive adversary and any $\ell > 0$, that with probability at least $1-1/2^\ell$, both processes finish after $O(\ell)$ steps.
In our algorithms, when a process calls the \TAS{} method of a 2-process TAS object it must ``simulate'' one of two possible IDs, $1$ or $2$.
Thus, we use a 2-process TAS object \TASII that supports an operation \TAS{$i$}, where $i\in\{1,2\}$.
If two processes call the method \TAS{$i$}, they must use different values for $i$.
We will say that a process \emph{wins} (\emph{loses}) if its \TAS{} call returns 0 (respectively 1).

A \emph{splitter object}~\cite{Moir1994-splitter,Attiya2006-rsplitter} provides a single method \split{}, which takes no parameters and returns a value in $\{\Stop,\Left,\Right\}$.
If a process $p$ calls \split{}, we say that $p$ \emph{goes through the splitter}.
If the call returns \Stop, we say that $p$ \emph{stops} at the splitter; and if it returns \Left (\Right), we say $p$ \emph{turns left} (respectively \emph{right}).

A \emph{deterministic splitter}, denoted \Splitter, was proposed by Moir and Anderson~\cite{Moir1994-splitter}.
It guarantees that if $\ell$ processes go through the splitter, then at most $\ell-1$ turn left, at most $\ell-1$ turn right, and at most one stops.
Thus if only one process goes through the splitter, that process stops.

A \emph{randomized splitter}, denoted \RSplitter, was proposed by Attiya, Kuhn, Plaxton, Wattenhofer and Wattenhofer~\cite{Attiya2006-rsplitter}.
Similarly to the deterministic splitter, it guarantees that if only one process goes through the splitter, then that process must stop.
But now, any process that does not stop, turns left or right with equal probability, and independently of other processes.
Randomized and deterministic splitters are incomparable in ``strength'', as for a randomized splitter it is possible that all processes going through it turn to the same direction.

Both splitter implementations, the deterministic one by Moir and Anderson, and the randomized by Attiya et.~al., use two shared registers and have \maxstep{} complexity $O(1)$ in any execution.
For completeness we provide the implementations in Figure~\ref{fig:splitter}.
%
%
The deterministic splitter implementation has the following additional doorway-like property, which is useful for the design of our algorithms:
\begin{enumerate}[label=(S)]
  \item \label{S} If a process stops or turns right at the splitter, then its  \split{} call was invoked before any other \split{} call on the same object responded.
\end{enumerate}
This follows immediately from the use of doorway $D$ in line~\ref{line:splitter:doorway}:
Suppose a \split{} operation by process $p$ gets invoked after some other \split{} call by process $q$ responded.
Then $q$ has already exited the doorway, when $p$ enters it, so by doorway-property \ref{D2} process $p$ gets deflected, and its \split{} call returns \Left.

\section{Fast TAS for Weak Adversaries}
\label{sec:algos-oblivious}


We present implementations of TAS objects for weak adversary models.
In Section~\ref{sec:group-election}, we introduce the problem of group election, which is a natural variant of leader election, and
in Section~\ref{sec:le-from-ge}, we give a TAS implementation from group election objects.
Then, in Sections~\ref{sec:location-algo} and~\ref{sec:rw-algo}, we  provide efficient randomized implementations of group election from registers, for the location-oblivious and the r/w-oblivious adversary models, respectively.

\subsection{Group Election}
\label{sec:group-election}
In the \emph{group election problem} processes must elect a non-empty subset of themselves, but
unlike in leader election, it is not required that exactly one process gets elected.
Still it is desirable that the \emph{expected} number of processes elected should be bounded by a small function in the number of participating processes.

Formally, a group election object, denoted \GroupElect, provides the method \elect{}, which takes no parameters and returns either \win or \lose.
We say a process \emph{participates} in a group election when it calls \elect{}.
The processes whose \elect{} calls return \win get \emph{elected}.
A group election object must satisfy the following property:
\begin{enumerate}[label=(GR)]
  \item\label{GR}
  Not all participating processes' \elect{} calls return \lose.
\end{enumerate}
That is, if at least one process participates and all participating processes finish their \elect{} calls, then at least one process gets elected.

%
%

We are interested in group election objects for which the expected number of elected processes is bounded by a (small) function of the max-contention.
This function is called the \emph{effectiveness} of the group election object and is formally defined next.

Consider an $n$-process
group election object $G$.
Let $M$ be an algorithm in which processes invoke the \elect{} operation of $G$,
and let $A$ be an adversary.
For an execution $\EE$ of $M$, let $\mathit{win}(\EE)$ denote the number of processes that get elected on $G$.
Similarly to definition~\eqref{eq:def_max_contention} for \maxstep{} complexity, let $\mathit{Exec}_{M,A,G}(k)$ be the set of all possible executions $\EE'$ of $M$ that can result for adversary $A$, and have the properties that: (i)~$\EE'$ ends when the first process is poised to invoke $G$.\elect{}; and (ii)~$k_{max}^{M,G}(\EE')\leq k$.
The \emph{effectiveness} of group election object $G$ in algorithm $M$ against adversary $A$ is a function $\varphi:\{1,\dots,n\}\to [1,n]$, where
\begin{displaymath}
\varphi(k):=
\sup_{\EE'\in\mathit{Exec}_{M,A,G}(k)}
  \Exp\left[
    \mathit{win}(\EE_{M,A})
    \ \middle\vert\
    \text{$\EE_{M,A}$ is an extension of $\EE'$}
  \right]
.
\end{displaymath}
The effectiveness of  $G$ against adversary $A$ (or against an adversary model $\AA$)
is defined similarly to $\phi(k)$, except that the supremum is taken also
over all algorithms $M$ (respectively, over all $M$ and $A\in \AA(M)$).

\subsection{TAS from Group Election}
\label{sec:le-from-ge}

\begin{classfigure}[t]\small
  \begin{class}{TAS}
  \shared
  {\GroupElect $G[1\dots n]$;
    \Splitter $S[1\dots n]$; 
    \TASII $T[1\dots n]$;
    \Doorway $D$
  }
  \end{class}
  \begin{method}{TAS()}
    \IlIf{$D.\enter{} = \False$}{\Return{$1$}}\label{ln:LE:doorway}\;
    $i \gets 0$\;
    \Repeat{$s=\Stop$\label{ln:LE:until}}{\label{ln:LE:repeat}
        $i\gets i+1$\;
        \IlIf{
        $G[i].\elect{}=\lose$}{\Return{$1$}}\label{ln:LE:elect}\;
        $s\gets S[i].\split{}$\label{ln:LE:split}\;
        \IlIf{$s=\Left$}{\Return{$1$}}\;
    }
    \IlIf{$T[i].\TAS{$1$}=1$}{\Return{$1$}}\label{ln:LE:LE2_1}\;
    \While{$i>1$}{
      $i\gets i-1$\;
      \IlIf{$T[i].\TAS{$2$}=1$}{\Return{$1$}}\label{ln:LE:LE2_2}\;
    }
    \Return{$0$}\label{ln:LE:win}\;
  \end{method}
  \caption{An implementation of TAS from group election objects.} \label{fig:generalLE}
\end{classfigure}

We now present an implementation of a TAS object from $n$ group election objects.
The algorithm uses also $n$ deterministic splitters, $n$ 2-process TAS objects, and one doorway object.
All these objects can be implemented from a total number of $O(n)$ registers,
as we saw in Section~\ref{sec:prelim-objects}.

The implementation is given in Figure~\ref{fig:generalLE}.
First, each process enters a doorway, and if deflected, its \TAS{} call immediately returns 1.
Any process that passes through the doorway participates in a series of group elections, on  objects $G[1],\ldots,G[n]$.
If the process is not elected on $G[i]$, then its \TAS{} returns 1.
Otherwise, it goes through splitter $S[i]$ next.
If the process turns left at the splitter, then \TAS{} returns 1; if it turns right, it participates in the next group election, on $G[i+1]$.
Finally, if the process stops at $S[i]$, then it does not participate in any further group elections.
Instead, it tries to win a series of 2-process TAS, on $T[i],\dots,T[1]$, until it either loses in one of them and returns 1, or wins in all of them and returns 0.

The idea is that fewer and fewer processes participate in each group election, as only processes that get elected in $G[i]$ may participate in $G[i+1]$.
The rate at which the number of processes drops depends on the effectiveness of the group election objects.
The purpose of the doorway at the beginning is to achieve linearizability (without the doorway, we would obtain a leader election object instead).
The splitter objects serve two purposes.
First, they ensure that as soon as only one process remains, that process will
not participate in other group elections, and will switch to the list of 2-process TAS objects.
Second, they guarantee that the number of processes participating in each $G[i]$ \emph{strictly} decreases with $i$.
This ensures that no more than $n$ group election objects are needed.
Finally, the 2-process TAS objects ensure that (at most) one process returns $0$.

Next we prove the correctness of the implementation, and analyze its \maxstep{} complexity in terms of the  \maxstep{} step complexity and effectiveness of the group election objects used.

We use the following standard notation.
For any function $f\colon X \to Y$, where $Y\subseteq X$, and for $i\geq 0$, we denote by $f^{(i)}$ the $i$-fold composition of $f$, defined recursively by $f^{(0)}(x) = x$ and $f^{(i+1)}(x) = f(f^{(i)}(x))$.
Further, if $f$ is a real function, we define
\[
    f^\ast(x) = \inf\{i\colon f^{(i)}(x) \leq 1\}.
\]

\begin{theorem}
    \label{thm:groupelect->leaderelect}\samepage
    Figure~\ref{fig:generalLE} gives an implementation of a TAS object from a set of group election objects.
    Suppose that for each group election object $G[i]$ used in this implementation, the expected \maxstep{} complexity of $G[i]$ against a given adversary $A$ is bounded by a function $t(k)$ of the max-contention $k$ of $G[i]$,
    and the effectiveness of $G[i]$ against $A$ is bounded by a function $f(k)$.
    Suppose also that functions $f$ and $t$ are non-decreasing,
    and $f$ is concave.
    Then the expected \maxstep{} complexity of the TAS implementation against $A$ is
    $
        O\big(t(k)\cdot g^\ast(k) \big),
    $
    where $g(k) := \min\{f(k), k - 1\}$.
    Moreover, the same bound on the expected \maxstep{} complexity applies even if the assumption that the effectiveness of $G[i]$ is bounded by $f(k)$ holds only for $1\leq i \leq  g^\ast(n)$.
\end{theorem}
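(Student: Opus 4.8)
The plan is to separate the argument into a correctness part and a complexity part, using the structural properties of the doorway and the deterministic splitter that were established earlier.

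\textbf{Correctness.} First I would argue linearizability, i.e., that exactly one process returns $0$ whenever all processes finish. That at most one process returns $0$ is immediate: if a process reaches line~\ref{ln:LE:win}, it won $T[i].\TAS{1}$ and then $T[j].\TAS{2}$ for all $j = i-1, \dots, 1$, so it won $T[1]$ (via one of the two roles); since a 2-process TAS has at most one winner per role and processes arriving at $T[1]$ with ``level'' exactly $1$ use role $1$ while those coming down from higher levels use role $2$, an inductive argument on the level at which processes stop shows that at most one process survives the whole cascade. For the ``at least one winner'' direction, I would use doorway property~\ref{D1} to get that some process $p$ passes through $D$. Then I would show by induction that at each level $i$ reached by at least one non-deflected process, property~\ref{GR} of $G[i]$ guarantees some process is elected, and the deterministic splitter guarantees (since at most $\ell-1$ of $\ell$ processes turn left) that not everyone turns left; so the set of processes ``still alive'' at each level is nonempty until some process stops at a splitter, and that process then wins all the 2-process TAS objects on its way down (using the 2-process TAS correctness: a solo caller wins), hence returns $0$. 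Finally, linearizability with respect to a concurrent external schedule follows from doorway property~\ref{D2}: a process that is deflected at $D$ returns $1$, and it entered $D$ after some process that passed through had already committed to winning, so it can be linearized after the winner — this is exactly why the doorway is needed on top of plain leader election.

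\textbf{Bounding the number of levels used.} I would show that the number of processes that participate in $G[i]$ is strictly decreasing in $i$, so at most $n$ levels (hence $G[1],\dots,G[n]$) are ever used. Let $n_i$ be the number of processes that invoke $G[i].\elect{}$. A process reaches level $i+1$ only if it was elected on $G[i]$ \emph{and} turned right at $S[i]$. By the deterministic splitter guarantee, of the processes going through $S[i]$ at most one stops and the rest split between left and right with at most $\ell - 1$ going right; but more importantly, if exactly one process is elected on $G[i]$, that lone process must \emph{stop} at $S[i]$ (solo process stops), so it does not reach level $i+1$ and $n_{i+1} = 0$. If more than one process goes through $S[i]$, at most $\ell-1 < \ell$ turn right, giving $n_{i+1} \le n_i - 1$. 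Either way $n_{i+1} < n_i$ as long as $n_{i+1} > 0$, establishing the claim; in particular, with max-contention $k$ at level~$1$ (and the splitter property \ref{S} ensuring no \emph{new} external processes can sneak in to raise the effective contention at a stopped/right-turning process), the effective number of levels is at most $k$.

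\textbf{Complexity.} Here is the crux. Fix an execution prefix $\EE'$ fixing max-contention $k$. Let $N_i$ be the (random) number of processes participating in $G[i]$, with $N_1 \le k$. Conditioned on $N_i$, the max-contention of $G[i]$ is $N_i$, so $\Exp[N_{i+1} \mid N_i] \le f(N_i)$ from effectiveness — but I must also incorporate the deterministic drop $N_{i+1} \le N_i - 1$ whenever $N_{i+1} \ge 1$, giving $\Exp[N_{i+1} \mid N_i] \le g(N_i)$ with $g(k) = \min\{f(k), k-1\}$. Using concavity of $f$ (hence of $g$) and Jensen's inequality, $\Exp[N_{i+1}] \le g(\Exp[N_i])$, so by induction $\Exp[N_{i+1}] \le g^{(i)}(k)$. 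Once $g^{(i)}(k) \le 1$, i.e., for $i \ge g^\ast(k)$, a Markov argument shows the process count has essentially collapsed to $\le 1$ and will hit $0$ (everyone stopped/deflected) within $O(1)$ further levels in expectation. The total number of shared-memory steps a single process takes is $O(1)$ per level (one \elect{} plus one \split{} plus, at the end, one pass up and down the 2-process TAS lists, each of which costs $O(1)$ expected steps per object by Tromp–Vitányi) summed over all the levels it reaches, but the \maxstep{} complexity is the max over processes, so I would bound $\Exp[\max_p T_p] \le \sum_{i\ge 1} \Exp[\,\big(\text{max steps spent at level } i \text{ by any process present}\big)\,]$; conditioned on $N_i$, the level-$i$ contribution is at most $t(N_i) + O(1)$ from the \maxstep{} bound on $G[i]$ plus $O(1)$ for the splitter, and summing $\Exp[t(N_i)] \le t(\Exp[N_i]) \le t(g^{(i-1)}(k))$ (monotonicity; $t$ need not be concave, but $t$ non-decreasing and $\Exp[N_i]\le g^{(i-1)}(k)$ suffice in one direction — actually I would be careful: $\Exp[t(N_i)]$ is not $\le t(\Exp[N_i])$ in general for non-concave $t$, so instead I would bound $t(N_i) \le t(k)$ deterministically since $N_i \le k$, and separately show that with high probability all levels past $g^\ast(k) + O(\log k)$ are empty, so the effective number of levels contributing is $O(g^\ast(k))$ in expectation, giving the total $O(t(k) \cdot g^\ast(k))$). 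The final sentence of the theorem — that effectiveness need only be assumed for $i \le g^\ast(n)$ — follows because once $i$ exceeds $g^\ast(n) \ge g^\ast(k)$, the deterministic drop $N_{i+1} \le N_i - 1$ alone carries the count to $0$ within $O(1)$ more expected levels (we no longer need the $f$ bound, only $g(k) \le k-1$), so the tail of the sum is controlled without invoking effectiveness there.

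\textbf{Main obstacle.} The delicate point is handling $\Exp[t(N_i)]$ and $\Exp[f(N_i)]$ correctly: concavity of $f$ gives $\Exp[f(N_i)] \le f(\Exp[N_i])$ via Jensen, which drives the geometric-tower recursion, but $t$ is not assumed concave, so the step-complexity sum must be bounded differently — either by the crude $t(N_i) \le t(k)$ combined with a high-probability bound that only $O(g^\ast(k))$ levels are non-empty, or by a more careful tail analysis. Getting this interplay right, and making the ``extra $O(1)$ levels after the count drops below $1$'' argument rigorous (a process count that is $\le 1$ in expectation can still be, say, $2$ with small probability, so one needs a supermartingale/Markov tail bound over the remaining levels), is where the real work lies.
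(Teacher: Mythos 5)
Your overall strategy matches the paper's: correctness via the doorway, the splitters, and the chain of 2-process TAS objects, and complexity via a Jensen-driven recursion on the expected number of survivors per level, with the strict deterministic decrease at the splitters used to cap the number of levels. However, two points need repair.

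First, a slip in the correctness argument: the process that stops at $S[j^\ast]$ does \emph{not} necessarily win all the 2-process TAS objects on its way down, and it is not a solo caller of $T[j]$ for $j<j^\ast$ --- another process may have stopped at $S[j]$ and be competing on $T[j]$ in role 1. The correct claim is only that \emph{some} process wins each $T[j]$ (at most two processes ever access $T[j]$: the stopper at $S[j]$ and the winner of $T[j+1]$), so the identity of the eventual winner of $T[1]$ can change hands along the descent. The conclusion that at least one process returns 0 survives, but not via the solo-caller argument you give.

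Second, and more substantively, the step-complexity summation. You correctly observe that $\Exp[t(N_i)]\le t(\Exp[N_i])$ fails for non-concave $t$, but your proposed fix --- bound $t(N_i)\le t(k)$ deterministically and show that ``with high probability all levels past $g^\ast(k)+O(\log k)$ are empty'' --- does not go through: all the recursion yields is $\Exp[j^\ast]\le g^\ast(k)+1$, and Markov's inequality gives only a constant-probability bound on $j^\ast$, not a high-probability one; moreover, on the bad event the contribution could be as large as $n\cdot t(k)$, which you would still have to control. The paper's resolution is a Wald-type theorem for dependent variables: the prefix $\EE_j$ of the execution up to the first invocation of $G[j]$ determines whether $j\le j^\ast$, and $\Exp[t_j\mid\EE_j]\le t(k_j)\le t(k)$, whence $\Exp\bigl[\sum_{j\le j^\ast}t_j\bigr]\le t(k)\cdot\Exp[j^\ast]$ directly. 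Relatedly, the supermartingale/tail argument you anticipate for the ``$O(1)$ extra levels once the expectation drops below 1'' is unnecessary: the number of levels used beyond level $j$ is at most $e_j$ \emph{deterministically} (each splitter strictly decreases the count), so $\Exp[j^\ast]\le j+\Exp[e_j]$ by linearity of expectation, and choosing $j$ appropriately gives $\Exp[j^\ast]\le g^\ast(k)+1$. Finally, note that effectiveness is defined in terms of the \emph{max-contention} $k_j$ of $G[j]$ given $\EE_j$, not the realized participant count $N_j$, so the recursion should be run on the pair $(k_j,e_j)$ with $k_j\le e_{j-1}$ established via property \ref{S}; you hint at this but do not carry it through.
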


The assumption that functions $t$ and $f$ are non-decreasing is not restrictive, as the expected \maxstep{} complexity and the effectiveness are by definition non-decreasing functions of the max-contention.
The requirement that $f$ is concave is also reasonable, as it suggests that
the larger the max-contention, the smaller the increase in the expected number of elected processes, for the same increase in max-contention.
This assumption is needed in the analysis for the following reason:
We inductively obtain upper bounds for the expectation $E[k_j]$ of the number $k_j$ of processes participating in the group election on $G[j]$.
Concavity of $f$ allows us to bound the expected number of processes getting elected on $G[j]$ by Jensen's inequality: $E[f(k_j)]\leq f(E[k_j])$.

\subsubsection{Proof of Theorem~\ref{thm:groupelect->leaderelect}}

We first show that the implementation is correct, and then analyze its \maxstep{} complexity.

\begin{proof}[\bf Correctness]
Consider an arbitrary execution.
For $j\geq 1$, let $m_j$ be the number of processes that begin $j$ iterations of the repeat-until loop (lines~\ref{ln:LE:repeat}--\ref{ln:LE:until}), and for $1\leq j\leq n$ let $e_j$ be the number of processes that get elected on group election object $G[j]$ (in line~\ref{ln:LE:elect}).
Clearly, $e_j\leq m_j$, and at most $e_j$ processes go through splitter $S[j]$ (in line~\ref{ln:LE:split}).
Moreover, by the splitter semantics, at most $e_j-1$ of them turn right, provided $e_j\geq 1$.
Thus, if $e_j\geq 1$,  at most $e_j-1\leq m_j-1$ processes execute a $(j+1)$-th iteration of the repeat-until loop.
Hence, $m_{j+1}<m_j$, and in particular $m_{n+1}=0$.
Thus, we have
\begin{equation}\label{eq:TAS_correctnes_i*}
  j^\ast:=\max_j\{m_j\geq 1\}\leq n.
\end{equation}

Next we observe that each 2-process TAS object $T[j]$, $1\leq j\leq n$, is accessed by at most two processes: possibly a single process which stops at $S[j]$ and then calls $T[j]$.\TAS{$1$} (in line~\ref{ln:LE:LE2_1}), and possibly the winner of $T[j+1]$, if $j<n$, which calls $T[j]$.\TAS{$2$} (in line~\ref{ln:LE:LE2_2}).
At most one of them can win $T[j]$.
Since a process needs to win $T[1]$ (either in line~\ref{ln:LE:LE2_1} or in line~\ref{ln:LE:LE2_2}) in order to win the implemented \TAS{} method, it follows that at most one process wins.

We now argue that \emph{at least} one process wins, provided that at least one process calls the implemented \TAS{} method, and that all processes that do so finish their call.
Recall that by \eqref{eq:TAS_correctnes_i*}, $j^\ast\leq n$ is the largest index such that at least one process starts its $j^\ast$-th iteration of the while-loop.
By \ref{GR}, at least one process gets elected on $G[j^\ast]$, and subsequently goes through splitter $S[j^\ast]$.
I.e., $e_{j^\ast}\geq 1$.
Since $m_{j^\ast+1}=0$, none of these $e_{j^\ast}$ processes executes another iteration of the repeat-until loop, and thus they all turn left or stop at $S[j^\ast]$.
As not all of them can turn left either, at least one (and thus by the splitter semantics exactly one) process must stop at $S[j^\ast]$.
It follows that at least one process calls $T[j^\ast]$.\TAS{$1$}, and so at least one process wins $T[j^\ast]$.
Since for $1 < j \leq n$ the winner of $T[j]$ continues to $T[j-1]$, some process must win $T[1]$.
Thus some process wins the implemented \TAS{} method.

It remains to show that the TAS implementation is linearizable.
If process $p$'s \TAS{} call $z$ returns 0, then by property~\ref{D2} of the doorway, $p$ must have entered doorway $D$ during $z$ before any other process exited it.
In particular, no \TAS{} call \emph{happens before} $z$ (i.e., responds before $z$ gets invoked).
Thus, we can obtain a linearization of the execution by putting $z$ first, and adding all other \TAS{} operations after $z$ in the order of their invocation.
The resulting sequential history is valid (the first \TAS{} returns 0, and all other \TAS{} return 1), and preserves the happens-before order, because no \TAS{} happens before $z$.\noqed
\end{proof}

\begin{proof}[\bf Step Complexity]
Consider an algorithm $M$
that uses the implemented TAS object.
Let $\EE := \EE_{M,A}$ be a random execution of $M$ scheduled by adversary $A$.

For each $1\leq j\leq n$, let $\EE_j$ be the prefix of execution $\EE$, until the first process is poised to invoke $G[j].\elect{}$; or $\EE_j := \EE$ if no such process exists.
Observe that if $\EE_1 \neq \EE$, then the last step of $\EE_1$ is the step at which the first process passes through doorway $D$.
Similarly for $i>1$, if $\EE_i \neq \EE$ then the last step of $\EE_i$ is the step in which the first process turns right at splitter $S[j-1]$.

For $1\leq j \leq n$, let $k_j := k_{max}^{M,G[j]}(\EE)$ be the max-contention of $G[j]$ in $\EE$.
By definition, it is also
\[
    k_j = k_{max}^{M,G[j]}(\EE_j).
\]

Let $\EE_0$ be the prefix of $\EE$ until the first process is poised to invoke the implemented \TAS{} operation, i.e., it is poised to enter doorway $D$.
Let
$k_0:= k_{max}^{M,D}(\EE)$ be the max-contention of $D$ in $\EE$, and thus  $k_0= k_{max}^{M,D}(\EE_0)$ as well.

Observe that, for any $1\leq j\leq n$, execution $\EE_{j-1}$ is a prefix of $\EE_{j}$, and $k_{j-1}\geq k_j$.

Let $T(\EE)$ denote the \maxstep{} complexity of the implemented TAS in execution $\EE$.
To prove the expected \maxstep{} complexity bound claimed in the theorem we must show that for any given $k\geq 0$, if $k_0=k$ then
\[
    \Exp[T(\EE) \mid \EE_0] = O\big(t(k) \cdot g^\ast(k)\big).
\]
We will assume $k_0\geq1$, otherwise $T(\EE)=0$ as no process invokes the implemented \TAS{}.

First we bound the expected number of group election objects accessed by at least one process in the execution.

For $1\leq j\leq n$, let $e_j$ be the number of processes elected in the group election on $G[j]$.
From the theorem's assumption that the effectiveness of $G[j]$ is bounded by function $f$ of the max-contention of $G[j]$, it follows 
\[
    \Exp[e_j \mid \EE_{j}] \leq f(k_{j}).
\]
We take the conditional expectation given $\EE_0$ to obtain
\[
    \Exp[\Exp[e_j \mid \EE_j]\mid \EE_0] \leq \Exp[f(k_j)\mid \EE_0].
\]
The expression on the left equals $\Exp[e_j \mid \EE_0]$ by the tower rule, since $\EE_0$ is a prefix of $\EE_j$.
For the right side  we have $\Exp[f(k_{j})\mid \EE_0]\leq f(\Exp[k_{j}\mid \EE_0])$, by Jensen's inequality and the assumption that $f$ is concave.
Therefore,
\begin{equation}
    \label{eq:ExpejEE0}
    \Exp[e_{j}\mid \EE_0] \leq f(\Exp[k_{j}\mid \EE_0]).
\end{equation}

For $j=1$, \eqref{eq:ExpejEE0} yields
\[
    \Exp[e_{1}\mid \EE_0] \leq f(\Exp[k_{1}\mid \EE_0])\leq f(\Exp[k_{0}\mid \EE_0]) = f(k_0),
\]
where the second inequality holds because $k_1\leq k_0$ and $f$ is non-decreasing, and the last equation holds because $k_0$ is completely determined given $\EE_0$.

For $j>1$, we have $k_j\leq e_{j-1}$: This is trivial if $k_j=0$.
If $k_j\geq 1$ then the last step of $\EE_{j}$ is when the first process $p$ turns right at splitter $S[j-1]$.
Property~\ref{S} then implies that any other process $q$ that may participate at the group election in $G[j]$ must have already invoked $S[j-1]$.\split{}, and thus must have already been elected at $G[j-1]$.

Using the inequality $k_j\leq e_{j-1}$ we have just shown, and the assumption $f$ is non-decreasing, we obtain from~\eqref{eq:ExpejEE0} that for $j>1$,
\[
    \Exp[e_j \mid \EE_0] \leq f(\Exp[e_{j-1}\mid \EE_0]).
\]

Combining the above inequalities for $j=1,2,\ldots,$ and using that $f$ is non-decreasing we get
\[
    \Exp[e_{j} \mid \EE_0] \leq f^{(j)}(k_0).
\]
The $e_{j}$ processes elected on $G[j]$ will participate in an additional number of at most $e_{j}$ group election objects beyond the first $j$ ones, as each splitter $S[i]$ ensures $e_{i+1}\leq e_{i}-1$, if $e_{i} > 0$.
Therefore, if $j^\ast := \max\{j\colon k_j >0\}$ is the total number of group election objects accessed by at least one process, then for any $0 \leq j\leq k_0$,
\begin{equation}
    \label{eq:ExpejEE0x}
    \Exp[j^\ast \mid \EE_0] \leq j + f^{(j)}(k_0).
\end{equation}
Let
\[
    x := \max\{y \leq k_0 \colon f(y) \geq y-1\},\quad
    \lambda := \min \{i \colon f^{(i)}(k_0)\leq x \}
.
\]
Note that $x\geq 1$, as $f(y)\geq 0$ for $y\geq 0$.
Also, since $f$ is concave and non-negative, it follows
\[
    f(y) \geq y-1, \text{ for } 0 \leq y \leq x.
\]
Setting $j := \lambda$ in~\eqref{eq:ExpejEE0x} we obtain
$\Exp[j^\ast \mid \EE_0] \leq \lambda + f^{(\lambda)}(k_0)$.
Since $f(y) \geq y-1$ for $0 \leq y \leq x$, and by definition, $f(y) < y-1$ for $x < y \leq k_0$,
it follows that $\lambda + f^{(\lambda)}(k_0) \leq g^\ast(k_0)  + 1$, where $g(k) := \min\{f(k), k - 1\}$.
Therefore,
\begin{equation}
    \label{eq:expiast}
    \Exp[j^\ast \mid \EE_0]
    \leq
    g^\ast(k_0)+1.
\end{equation}

In the following we will assume that $\EE_0$ is fixed, thus so is $k_0$.

Next we will bound the expectation of the maximum number of steps any single process takes on the group election objects.
This number is bounded by $\sum_{1\leq j\leq j^\ast} t_j$, where $t_j$ is the \maxstep{} complexity of $G[j]$ in $\EE$.
We will bound the expectation of this sum using a version of Wald's Theorem (note that the number $j^\ast$ of terms in the sum as well as the terms $t_j$ are random variables.)
From the assumption that the \maxstep{} complexity of $G[j]$ is bounded by a function $t$ of the max-contention on $G[j]$, we have that
\[
    \Exp[t_j \mid \EE_j] \leq t(k_j).
\]
Since $k_j\leq k_0$ and $t$ is a non-decreasing function, it follows $\Exp[t_j \mid \EE_j] \leq t(k_0)$.
This implies 
\[
    \Exp[t_j \mid j^\ast \geq j] \leq t(k_0),
\]
as the execution prefix $\EE_j$ is sufficient to determine whether or not $j^\ast \geq j$ holds.
We will use the above inequality to apply the
following variant of Wald's Theorem, for random variables that are not independent.
A proof of this theorem can be found, e.g., in~\cite{Jae2007a}.

\begin{theorem}[Wald's Theorem]
    Let $X_1,X_2,\dots$ be a sequence of non-negative random variables and let $Y$ be a non-negative integer random variable such that the expectations of $Y$ and of each $X_j$ exist.
    If for all $j$, $\Exp[X_j \mid j\leq Y] \leq \mu$ for some $\mu \geq 0$, then
    $\Exp[X_1+\dots+X_Y] \leq \mu\cdot \Exp[Y]$.
\end{theorem}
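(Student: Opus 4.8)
The plan is to establish the inequality by the standard indicator-decomposition argument, turning the random-length sum into an ordinary series of expectations and then recognizing that series as $\mu$ times the tail-sum expansion of $\Exp[Y]$. First I would rewrite the random sum by truncating it with indicators: for every sample point,
\[
    X_1 + \dots + X_Y = \sum_{j\geq 1} X_j \indic{\{j\leq Y\}},
\]
which holds because $Y$ is a non-negative integer, so exactly the terms with index $j\leq Y$ survive. Since every $X_j\geq 0$ and every indicator is non-negative, Tonelli's theorem (equivalently, monotone convergence applied to the partial sums) permits exchanging expectation with the infinite summation, with no integrability assumption needed beyond non-negativity:
\[
    \Exp[X_1 + \dots + X_Y] = \sum_{j\geq 1} \Exp\bigl[X_j \indic{\{j\leq Y\}}\bigr].
\]

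Next I would bound each summand using the hypothesis. For any $j$ with $\Pr[j\leq Y]>0$, the definition of conditional expectation given an event gives $\Exp[X_j \indic{\{j\leq Y\}}] = \Exp[X_j \mid j\leq Y]\cdot\Pr[j\leq Y] \leq \mu\cdot\Pr[j\leq Y]$, and when $\Pr[j\leq Y]=0$ both sides are zero, so the bound holds in every case. Summing over $j$ and then invoking the layer-cake identity $\Exp[Y]=\sum_{j\geq 1}\Pr[Y\geq j]=\sum_{j\geq 1}\Pr[j\leq Y]$ for non-negative integer random variables yields
\[
    \Exp[X_1 + \dots + X_Y] \leq \mu\sum_{j\geq 1}\Pr[j\leq Y] = \mu\cdot\Exp[Y],
\]
which is the claim.

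The one point that needs care is the interchange of expectation and the infinite sum. Unlike the classical Wald identity, the proof requires neither independence of the $X_j$ nor a stopping-time/measurability condition relating $Y$ to a filtration: all of the dependence between the increment $X_j$ and the length $Y$ is already packaged into the single hypothesis $\Exp[X_j\mid j\leq Y]\leq\mu$, which is precisely how the theorem is applied in the preceding step-complexity argument. Non-negativity makes the Tonelli interchange automatic, so the argument is robust to the edge cases: if $\Exp[Y]=\infty$ the right-hand side is infinite and the statement is vacuous, and if $\mu=0$ the hypothesis forces $X_j=0$ almost surely on $\{j\leq Y\}$, so the left-hand side vanishes as well.
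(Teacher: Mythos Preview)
Your argument is correct. The paper does not give its own proof of this statement; it simply cites~\cite{Jae2007a}. The indicator-decomposition argument you present---rewriting the random sum as $\sum_{j\geq 1} X_j\indic{\{j\leq Y\}}$, interchanging sum and expectation by Tonelli (valid because all terms are non-negative), bounding each term via $\Exp[X_j\indic{\{j\leq Y\}}]=\Exp[X_j\mid j\leq Y]\Pr(j\leq Y)\leq\mu\Pr(j\leq Y)$, and closing with the tail-sum identity $\Exp[Y]=\sum_{j\geq 1}\Pr(Y\geq j)$---is the standard one and is complete as written.
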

We apply the theorem for $X_j=t_j$, $Y = j^\ast$, and $\mu = t(k_0)$ to obtain
\[
    \Exp\left[\sum_{1\leq j\leq j^\ast}t_j\right]
    \leq
    t(k_0)\cdot \Exp[j^\ast]
    .
\]

Using the same argument we can also bound the expectation of $\sum_{1\leq j\leq j^\ast} t'_j$, where $t'_j$ is the \maxstep{} complexity of the \TASII object $T[j]$ in $\EE$.
For $T[j]$ we have that its expected \maxstep{} complexity is constant (against any adversary), i.e., $\Exp[t'_j \mid \EE'_j] = O(1)$, for the prefix $\EE'_j$ of $\EE$ until some process is poised to invoke $T[j]$.\TAS{}.
Then the same reasoning as above yields
\[
    \Exp\left[\sum_{1\leq j\leq j^\ast}t'_j\right]
    =
    O(1)\cdot \Exp[j^\ast]
    .
\]

Finally, the number of remaining steps of a process in $\EE$, that are not steps on one of the objects $G[j]$ or $T[j]$, is bounded by $O(j^\ast)$.

Therefore the expected \maxstep{} complexity of the TAS implementation is bounded by
\[
    t(k_0)\cdot \Exp[j^\ast]
    +
    O(1)\cdot \Exp[j^\ast]
    +
    O(\Exp[j^\ast])
    \stackrel{\eqref{eq:expiast}}
    =
    O\big(t(k_0)\cdot g^\ast(k_0)\big).
\]

Finally, note that for the above analysis we do not need any assumptions on the effectiveness of objects $G[j]$ for $j > g^\ast(n)$, as \eqref{eq:ExpejEE0x} is used only for $j := \lambda \leq g^\ast(k_0)$.
This completes the proof of Theorem~\ref{thm:groupelect->leaderelect}.
\end{proof}

\subsection{Group Election for Location-Oblivious Adversaries}
\label{sec:location-algo}

\begin{classfigure}[t]\small
  \begin{class}{GroupElect}
  \tcc{$\ell:=\lceil\log n\rceil$}
  \shared{register $R[1\dots\ell+1]\gets [0\dots0]$}
  \end{class}
  \begin{method}{elect()}
    Choose $x\in \{1,\dots,\ell\}$ at random
    such that
    $\Pr(x = i) = 2^{-i}$ for $1\leq i<\ell$, and $\Pr(x = \ell) = 2^{-\ell+1}$
    \label{ln:location-oblivious:random}\;
    $R[x]$.\Write{$1$}\label{ln:location-oblivious:write}\;
    \IlIf{$R[x+1].\Read{}=0$}{\Return{\win}}
        \label{ln:location-oblivious:read}\;
    \Return{\lose}\;
  \end{method}
  \caption{A group election implementation for the location-oblivious adversary model.}
  \label{fig:GroupElect-location-oblivious}
\end{classfigure}

We present a simple randomized group election implementation from registers, which has effectiveness $O(\log k)$ in the location-oblivious adversary model, and constant \maxstep{} complexity.
This can be used to implement a TAS object with expected \maxstep{} complexity $O(\log^\ast k)$ against location-oblivious adversaries.

The group election implementation is given in Figure~\ref{fig:GroupElect-location-oblivious}.
Each process first writes
to a random register among the $\ell:=\lceil\log n\rceil$ registers $R[1],\ldots,R[\ell]$, where $R[i]$ is chosen with probability $1/2^i$ if $1\leq i<\ell$, and with probability $1/2^{\ell-1}$ if $i=\ell$.
Then the process reads the next register, $R[i+1]$, and gets elected if and only if no process has previously written to that register.

We have that at least one process gets elected, namely a process that writes to the rightmost register 
that gets written.
The idea for the $O(\log k)$ bound on the effectiveness is as follows.
Since the probability that a process chooses index $i+1$ equals half the probability it chooses $i$,
at most a constant expected number of processes write to $R[i]$ before some process writes to $R[i+1]$.
After a process has written to $R[i+1]$, no process that writes to $R[i]$ can still get elected.
Therefore, for every index $i$ there will only be a constant expected number of processes that choose that index and get elected.
Moreover, if at most $k$ processes participate in the group election, then with sufficiently high probability (in $k$) only the first $O(\log k)$ registers get written at all.
A simple calculation then shows that only an expected number of $O(\log k)$ processes get elected.

\begin{lemma}
    \label{lem:GroupElect-location-oblivious}
    Figure~\ref{fig:GroupElect-location-oblivious} gives a randomized implementation of a group election object 
    with effectiveness at most $2\log k + 4$ and constant \maxstep{} complexity against any location-oblivious adversary.
\end{lemma}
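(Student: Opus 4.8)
The plan is to first dispatch the easy parts---property \ref{GR} (at least one process elected) and the $O(1)$ \maxstep{} complexity---and then focus all energy on the effectiveness bound. For \ref{GR}, I would argue as sketched in the text: consider the largest index $i^{\dagger}$ such that some process writes $R[i^{\dagger}]$; since $x \le \ell$, any process writing $R[i^{\dagger}]$ then reads $R[i^{\dagger}+1]$, which was never written (by maximality of $i^{\dagger}$), so it reads $0$ and is elected. (A small subtlety: one must check the read happens after the write; this is immediate since each process writes before it reads, within its own \elect{} call.) The \maxstep{} bound is trivial: each \elect{} call performs one coin flip, one write, one read.

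For the effectiveness, fix an algorithm $M$, a location-oblivious adversary $A$, and an execution prefix $\EE' \in \mathit{Exec}_{M,A,G}(k)$; I must bound $\Exp[\mathit{win}(\EE_{M,A}) \mid \EE_{M,A} \text{ extends } \EE']$ by $2\log k + 4$. Condition on $\EE'$ throughout. Let $W_i$ be the number of processes that write to $R[i]$ and get elected. Then $\mathit{win} = \sum_{i=1}^{\ell} W_i$. The two key structural facts are: (a) once any process has written $R[i+1]$, every subsequent process that picks index $i$ reads $R[i+1]=1$ and is \emph{not} elected---so the processes counted by $W_i$ are all those choosing index $i$ strictly before the first write to $R[i+1]$; and (b) crucially, \emph{the adversary is location-oblivious}, so at the moment a process is poised to take its write step, the adversary does not know the value of $x$ (it was chosen by the coin flip \emph{after} that process's previous shared-memory step, and the adversary cannot see which register will be accessed). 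Hence I can couple the process identities to index choices in a way the adversary cannot exploit: conditioned on the set of processes scheduled to write before the first write to $R[i+1]$, each such process independently chose index $i$ with probability $\Pr(x=i)$, and chose index $i+1$ with probability $\Pr(x=i+1) = \tfrac12\Pr(x=i)$ (for $i<\ell-1$; handle $i=\ell-1,\ell$ separately using $\Pr(x=\ell)=2^{-\ell+1}$). A negative-binomial / geometric argument then gives $\Exp[W_i] \le \Pr(x=i)/\Pr(x=i+1) = 2$ for every $i$ with $1 \le i < \ell$ for which any process chooses index $i+1$; and for the largest written index one gets $\Exp[W_{i^{\dagger}}] \le$ (expected number choosing $i^\dagger$) which needs a separate small bound.

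To turn $\sum_i \Exp[W_i] \le 2\ell$ into the sharper $2\log k + 4$, I would split at index $m := \lceil \log k \rceil + 1$: for $i \le m$ use the crude bound $\Exp[W_i]\le 2$, contributing $\le 2\log k + O(1)$; for $i > m$, observe that $\Pr(x = i) \le 2^{-i} \le 1/(2k)$, so the expected number of the (at most $k$) participating processes choosing index $i$ is at most $1/2$, whence $\Pr(\text{anyone writes } R[i]) \le k \cdot 2^{-i}$, and $\sum_{i>m} \Exp[W_i] \le \sum_{i>m} k\,2^{-i} \le 1$. Adding up and absorbing the constants gives the claimed $2\log k + 4$. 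The main obstacle---and the place to be careful---is making the location-obliviousness argument in (b) fully rigorous: one needs to formalize that the sequence of process-IDs performing the "write" steps up to the first write of $R[i+1]$ is determined by the schedule together with coin flips preceding those processes' prior steps, and is therefore independent of the index choices $x$ of exactly those processes, so that a clean geometric bound on $W_i$ applies. I expect this conditioning argument to be the technical heart of the proof; everything else is bookkeeping.
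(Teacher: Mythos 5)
Your plan is correct, and it rests on exactly the same key reduction as the paper: since the adversary is location-oblivious, the indices can be treated as a list $x_1,\dots,x_k$ drawn i.i.d.\ in advance, independently of the (adversarially chosen) order in which processes perform their writes, so the sequence of index choices \emph{in write order} is i.i.d.\ and cannot be biased by the schedule. Where you differ is only in the bookkeeping: the paper sums over \emph{processes}, bounding the probability that the $i$-th writer is elected by $\Pr(\forall j<i\colon x_j\neq x_i+1)$ and evaluating the resulting double sum; you sum over \emph{registers}, bounding the expected number $W_i$ of elected processes that chose index $i$ by the geometric waiting-time ratio $\Pr(x=i)/\Pr(x=i+1)=2$ and truncating the tail $i>\log k$ by $k2^{-i}$. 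Once the paper swaps its order of summation, the two computations coincide term by term, so this is a presentational rather than substantive difference --- though your version makes the origin of the factor $2$ more transparent. Two small points to tighten. First, the elected processes choosing index $i$ are a \emph{subset} of, not identical to, those whose write precedes the first write to $R[i+1]$ (election additionally requires the subsequent \emph{read} to precede that write); you only need the inequality, so no harm, but state it as one. Also note the geometric bound $\Exp[W_i]\leq 2$ holds for every $i<\ell$ whether or not anyone ever chooses $i+1$, since the count of $i$-choices before the first $(i+1)$-choice in $k$ trials is dominated by the corresponding count in an infinite i.i.d.\ sequence; only $i=\ell$ needs the separate bound $k\cdot 2^{-\ell+1}\leq 2$. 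Second, your split at $m=\lceil\log k\rceil+1$ yields roughly $2\log k+5$; to land on the stated $2\log k+4$, split at $\log k$ as the paper does (the first $\log k$ indices contribute at most $2$ each, the geometric tail contributes at most $2$, and index $\ell$ contributes $k/2^{\ell-1}\leq 2$). Neither point affects the $O(\log k)$ effectiveness that is actually used in Theorem~\ref{thm:location-oblivious}.
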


\begin{proof}
Let $M$ be an algorithm that uses the implemented group election object, and consider any execution of $M$ in which all processes participating in the group election finish their \elect{} call.
Let $i^\ast$ be the largest index such that some process $p$ writes to $R[i^\ast]$ (in line~\ref{ln:location-oblivious:write}).
Then $p$ reads the value 0 from $R[i^\ast + 1]$ in the next line and returns \win.
Hence, at least one process gets elected.
Further, 
each process does exactly two shared memory operations, thus the \maxstep{} complexity is constant.
It remains to bound the effectiveness of the group election object.

Let $A$ be a location-oblivious adversary, and let $\EE := \EE_{M,A}$ be a random execution of $M$ scheduled by $A$.
Fix the prefix $\EE'$ of $\EE$ until the first process is poised to invoke \elect{}, and let $k := k_{max}^{M,\elect{}}(\EE')=k_{max}^{M,\elect{}}(\EE)$ be the max-contention of \elect{} in $\EE$.
Let $k'\leq k$ be the actual number of processes that execute the write operation in line~\ref{ln:location-oblivious:write} during $\EE$, and
for $1\leq i\leq k'$, let $p_i$ be the $i$-th process to execute the write operation. 

Since adversary $A$ is location-oblivious, it does not know the index of the register on which $p_i$ will write, before $p_i$ finishes that operation.
We can thus assume that a list $x_1,\ldots,x_k$ of indices is chosen in advance, right after the last step of $\EE'$, such that each index $x_i$ is drawn independently at random according to the distribution in line~\ref{ln:location-oblivious:random}, and then for each $1\leq i\leq k'$, process $p_i$ writes to register $R[x_i]$ in line~\ref{ln:location-oblivious:write}.
Note that although just the first $k'$ of the values $x_i$ are actually used, we draw $k\geq k'$ values initially as $k'$ may not be known in advance.

For each $1\leq i\leq k'$, let $X_i$ be the 0/1 random variable that is $1$ if and only if $p_i$ gets elected, i.e.,  $p_i$ reads the value 0 on register $R[x_i+1]$ in line~\ref{ln:location-oblivious:read}, and returns \win.
Further, for each $1\leq i\leq k$, let $Y_i$ be the 0/1 random variable that is $1$ if and only if $x_{j} \neq x_i+1$ for all $j < i$.
Clearly, $X_i\leq Y_i$ for $i\leq k'$, as $p_i$ reads the value 0 only if none of the processes $p_1,\ldots,p_{i-1}$ writes to register $R[x_i+1]$.
The expected number of processes that get elected is then
\begin{equation}
    \Exp\Biggl[\sum_{1\leq i\leq k'} X_i\Biggr]
    \leq
    \Exp\Biggl[\sum_{1\leq i\leq k} Y_i\Biggr]
    =
    \sum_{1\leq i\leq k} \Exp[Y_i].\label{eq:oblivious:10}
\end{equation}
Using that $x_1,\dots,x_i$ are chosen independently ($\ast$), and that $x_i=\ell$ implies $Y_i = 1$ ($\dagger$), we obtain
\begin{align*}
    \Exp[Y_i]
    &=
    \Pr\Biggl(\bigwedge_{1\leq j < i}  \left(x_{j} \neq x_i+1\right) \Biggr)
    \\ &=
    \sum_{1\leq x\leq \ell} \Pr\Biggl(\left(x_i = x\right) \,\wedge\, \bigwedge_{1\leq j < i}  \left(x_{j} \neq x+1\right)\Biggr)
    \\ & \stackrel{\text{($\dagger$)}}{=}
    \sum_{1\leq x<\ell} \Pr\Biggl(\left(x_i = x\right) \,\wedge\, \bigwedge_{1\leq j < i}  \left(x_{j} \neq x+1\right)\Biggr)
    +
    \Pr(x_i=\ell)
    \\& \stackrel{\text{($\ast$)}}{=}
    \sum_{1\leq x< \ell} \Pr(x_i = x)\prod_{j= 1}^{i-1} \Pr(x_{j} \neq x+1) +\frac1{2^{\ell-1}}
    \\& =
    \sum_{1\leq x <\ell} \frac1{2^x} \left(1-\frac1{2^{x+1}}\right)^{i-1}
    +\frac1{2^{\ell-1}}.
\end{align*}
Substituting that to (\ref{eq:oblivious:10}) yields
\begin{align*}
    \Exp\Biggl[\sum_{1\leq i\leq k'} X_i\Biggr]
    &\leq
    \sum_{1\leq j\leq k}\Biggl(
    \sum_{1\leq i<\ell} \frac1{2^i} \left(1-\frac1{2^{i+1}}\right)^{j-1}
    +\frac1{2^{\ell-1}}
    \Biggr)
    \\& =
    \sum_{1\leq i<\ell}
    \frac1{2^i}
    \sum_{1\leq j\leq k} \left(1-\frac1{2^{i+1}}\right)^{j-1}
    +
    \sum_{1\leq j\leq k}\frac 1{2^{\ell-1}}
    \notag
    \\& =
    \sum_{1\leq i<\ell}
    \frac1{2^i} \cdot
        \frac{1-\left(1-\frac1{2^{i+1}}\right)^{k}}{1/2^{i+1}}
    +
    \frac k{2^{\ell-1}}
    \notag
    \\&
    = 2\sum_{1\leq i<\ell}
    \left(1-\left(1-\frac1{2^{i+1}}\right)^{k}\right)
    +
    \frac k{2^{\ell-1}}.
\end{align*}
We bound the sum in the last line by bounding with 1 each of the first $\log k$ terms, and using for the remaining terms that $1-\left(1-\frac1{2^{i+1}}\right)^{k} \leq 1-\left(1-\frac k{2^{i+1}}\right) = \frac k{2^{i+1}}$.
We get
\begin{align*}
    \Exp\Biggl[\sum_{1\leq i\leq k'} X_i\Biggr]
    &\leq
    2\sum_{1\leq i<\log k} 1
    +
    2\sum_{\log k \leq i<\ell} \frac k{2^{i+1}}
    +
    \frac k{2^{\ell-1}}
    \\
    &\leq
    2\log k
    +
    2 \frac k{2^{\log k}}
    +
    \frac k{2^{\ell-1}}
    \\&
    \leq
    2\log k + 4,
  \end{align*}
  as $\ell = \lceil\log n\rceil\geq \log k$.
  This completes the proof of Lemma~\ref{lem:GroupElect-location-oblivious}.
\end{proof}

We can now apply Theorem~\ref{thm:groupelect->leaderelect} to obtain the following result.

\begin{theorem}
    \label{thm:location-oblivious}
    There is a randomized implementation of a TAS object from $\Theta(n)$ registers with expected \maxstep{} complexity $O(\log^\ast k)$ against any location-oblivious adversary.
\end{theorem}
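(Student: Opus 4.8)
The plan is to combine Lemma~\ref{lem:GroupElect-location-oblivious} with Theorem~\ref{thm:groupelect->leaderelect} in the most direct way possible. First I would instantiate the TAS implementation of Figure~\ref{fig:generalLE}, using for each group election object $G[i]$ a copy of the implementation from Figure~\ref{fig:GroupElect-location-oblivious}. By Lemma~\ref{lem:GroupElect-location-oblivious}, each such $G[i]$ has constant \maxstep{} complexity and effectiveness at most $f(k):=2\log k + 4$ against any location-oblivious adversary. So in the notation of Theorem~\ref{thm:groupelect->leaderelect} we may take $t(k) = O(1)$.

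Next I would check that $f$ satisfies the hypotheses of Theorem~\ref{thm:groupelect->leaderelect}: it must be non-decreasing and concave. Non-decreasing is clear, and concavity follows since $\log$ is concave and adding a constant preserves concavity. (Strictly, $f$ is defined on the integers $\{1,\dots,n\}$; one uses the natural extension $f(x) = 2\log x + 4$ on the reals, which dominates the integer values and is concave, so Jensen's inequality as invoked in the proof of Theorem~\ref{thm:groupelect->leaderelect} goes through.) Then Theorem~\ref{thm:groupelect->leaderelect} gives expected \maxstep{} complexity $O(t(k)\cdot g^\ast(k)) = O(g^\ast(k))$, where $g(k) = \min\{f(k), k-1\} = \min\{2\log k + 4,\ k-1\}$.

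It then remains to show $g^\ast(k) = O(\log^\ast k)$. Here $g^\ast(k) = \inf\{i : g^{(i)}(k)\le 1\}$. For $k$ larger than some absolute constant $k_0$, we have $2\log k + 4 \le k - 1$, so $g(k) = 2\log k + 4$; and for $k$ in a range above $k_0$ this is at most, say, $3\log k$, so iterating $g$ drops $k$ to the constant $k_0$ in $O(\log^\ast k)$ steps (the standard argument: applying $x\mapsto 3\log x$ a constant number of times more than $x\mapsto \log x$ still takes $O(\log^\ast x)$ iterations to reach a constant, since $\log^\ast$ absorbs constant factors and constant numbers of extra iterations). Once $g$ reaches a value at most $k_0$, a constant number of further iterations of $g$ brings it to $\le 1$ (using the $k-1$ branch, which strictly decreases). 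Hence $g^\ast(k) \le \log^\ast k + O(1) = O(\log^\ast k)$, and the expected \maxstep{} complexity is $O(\log^\ast k)$. The space bound is $\Theta(n)$: the framework of Figure~\ref{fig:generalLE} uses $n$ group election objects, $n$ splitters, $n$ two-process TAS objects and one doorway, and each of these uses $O(\log n)$ registers in the worst case for $G[i]$ (which has $\ell+1 = \Theta(\log n)$ registers) — wait, that would give $\Theta(n\log n)$; the resolution is that only the first $g^\ast(n) = O(\log^\ast n)$ group election objects are ever accessed, and by the ``moreover'' clause of Theorem~\ref{thm:groupelect->leaderelect} we only need effectiveness guarantees for those, so we can instantiate only $O(\log^\ast n)$ copies of $G$, each using $\Theta(\log n)$ registers, for a total of $O(\log^\ast n \cdot \log n) = o(n)$, while the $n$ splitters and $n$ two-process TAS objects contribute the dominant $\Theta(n)$. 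I expect the only mildly delicate point to be this bookkeeping for the space bound and the routine but careful verification that $g^\ast(k) = O(\log^\ast k)$; the rest is a direct appeal to the two results already proved.
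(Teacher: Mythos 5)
Your overall route is exactly the paper's: instantiate Figure~\ref{fig:generalLE} with the group election objects of Figure~\ref{fig:GroupElect-location-oblivious}, feed $t(k)=O(1)$ and $f(k)=2\log k+4$ into Theorem~\ref{thm:groupelect->leaderelect}, check $g^\ast(k)=O(\log^\ast k)$, and then fix up the space accounting using the ``moreover'' clause. The $g^\ast$ computation and the concavity check are fine. However, there is one genuine flaw in the space argument: the claim that ``only the first $g^\ast(n)=O(\log^\ast n)$ group election objects are ever accessed'' is false. The bound $\Exp[j^\ast]\leq g^\ast(k)+1$ is only an expectation; the effectiveness guarantee does not prevent \emph{all} participants from being elected in a given round (for the object of Figure~\ref{fig:GroupElect-location-oblivious} this happens, e.g., whenever every process draws the same index $x$ in line~\ref{ln:location-oblivious:random}, an event of positive probability). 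On such paths the only deterministic decrease comes from the splitters, which reduce the count by one per round, so up to $k\leq n$ group election objects can be accessed. Consequently you cannot ``instantiate only $O(\log^\ast n)$ copies of $G$'' and leave $G[j]$ undefined for larger $j$: the repeat--until loop of Figure~\ref{fig:generalLE} would then be ill-defined (or the correctness proof, which relies on each $G[j]$ satisfying \ref{GR} for all $j$ up to $j^\ast\leq n$, would break) on exactly those low-probability executions.

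The fix is the one the paper uses: for $j$ beyond the threshold (the paper takes $2\log^\ast n$), let $G[j]$ be a \emph{trivial} group election object in which every participant's \elect{} call returns \win{} immediately, using zero shared-memory steps and zero registers. Property~\ref{GR} holds trivially, so correctness is preserved; the \maxstep{} bound $t(k)=O(1)$ holds for every $j$ (this is needed for all $j\leq j^\ast$ in the Wald's-theorem step of the proof of Theorem~\ref{thm:groupelect->leaderelect}, not just the first $g^\ast(n)$ of them); and the ``moreover'' clause absorbs the fact that these objects have effectiveness as bad as $k$. With that amendment your space count is correct: $O(\log^\ast n\cdot\log n)=o(n)$ registers for the nontrivial group election objects, plus $\Theta(n)$ for the splitters and 2-process TAS objects.
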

\begin{proof}
We consider the TAS implementation of Figure~\ref{fig:generalLE}, and use the algorithm in Figure~\ref{fig:GroupElect-location-oblivious} to implement the group election objets $G[j]$, for $1\leq j \leq 2\log^\ast n$.
For $2\log^\ast n < j \leq n $, we just let $G[j]$ be a trivial group election object, where all participating processes get elected and the \maxstep{} complexity is zero.
From Lemma~\ref{lem:GroupElect-location-oblivious}, the group election objects $G[j]$, for $1\leq j\leq 2\log^\ast n$, have constant \maxstep{} complexity, and effectiveness  bounded by $f(k) = 2\log k  + 4$ against any location-oblivious adversary.
For $g(k) := \min\{2\log k  + 4, k-1\}$, we have $g^\ast(k) = \log^\ast k + O(1) < 2\log^\ast n$.
Theorem~\ref{thm:groupelect->leaderelect} then implies that the resulting TAS object has expected \maxstep{} complexity $O(\log^\ast k)$ against any location-oblivious adversary.
Moreover the algorithm uses $\Theta(n)$ registers, as each of the first $2\log^\ast n$ group election objection requires $\log n + O(1)$ registers, while the remaining trivial group election objects do not use any registers.
\end{proof}


\subsection{Group Election for R/W-Oblivious Adversaries}
\label{sec:rw-algo}

\begin{classfigure}[t]\small
  \begin{class}{GroupElect}
  \tcc{$b:=\frac32$ and $\ell:=\lceil\log_{b}\log n\rceil$}
  \shared
    {register $\Up[1\dots \ell]\gets [0\dots 0]$, $\Down[1\dots \ell-1]\gets [0\dots 0]$}
  \end{class}
  \begin{method}{elect()}
    $i\gets 0$\;
    \Repeat{$c_i = \tails$ {\bf or} $i = \ell$}{
        $i\gets i+1$\;
        Choose $c_i\in\{\heads,\tails\}$ at random such that $\Pr(c_i = \heads) = q_i :=
         1/2^{b^{i-1}}$\;
            \uIf{$c_i = \heads$}
                {$\Up[i]$.\Write{$1$}\label{ln:re-oblivious:writeUp}}
            \Else
                {\IlIf{$\Up[i].\Read{} = 1$}{\Return{\lose}}\label{ln:re-oblivious:readUp}}
    }
    \While{$i>1$}{
        $i\gets i-1$\;
        Choose $c_i'\in\{\heads,\tails\}$ at random such that $\Pr(c_i' = \heads) =
          q_i$\;
            \uIf{$c_i' = \heads$}
                {$\Down[i]$.\Write{$1$}\label{ln:re-oblivious:writeDown}\;}
            \Else
                {\IlIf{$\Down[i].\Read{} = 1$}{\Return{\lose}}\label{ln:re-oblivious:readDown}}
    }
    \Return{\win}
  \end{method}
  \caption{A group election implementation for the r/w-oblivious adversary model.}
  \label{fig:GroupElect-RW-oblivious}
\end{classfigure}

We present a randomized group election implementation from registers, which has constant effectiveness and expected \maxstep{} complexity $O(\log\log k)$ in the r/w-oblivious adversary model.
This can be used to implement a TAS object with expected \maxstep{} complexity $O(\log\log k)$ against r/w-oblivious adversaries.

The group election implementation is given in Figure~\ref{fig:GroupElect-RW-oblivious}.
The algorithm consists of two phases, the \emph{backward sifting phase} and the \emph{forward sifting phase}.
The latter phase is similar to a sifting procedure used to eliminate processes in the TAS algorithm by Alistarh and Aspnes~\cite{AA_TAS_2011a}.
Their algorithm, however, is not adaptive.
To achieve that, the backward sifting phase runs essentially the same sifting procedure but in the opposite direction.

Two shared arrays of registers are used, one in each phase, namely, $\Up[1\dots\ell]$ and $\Down[1\dots\ell-1]$, where $\ell=\lceil\log_{b}\log n\rceil$ and $b=\frac32$.
All entries in both arrays are initially 0.

In the backward sifting phase, for each $i=1,2,\dots$, each process $p$ decides at random to either read register $\Up[i]$ or to write the value 1 to it.
The probability of writing decreases with $i$, more precisely, it is $q_i = 1/2^{b^{i-1}}$.
The phase ends for $p$ as soon as it has executed a read operation or has written to all registers of $\Up$.
If $p$ reads the value 1 on $\Up[i]$, it means that some other process has written to $\Up[i]$ before, and $p$ returns \lose immediately.
If $p$ reads 0 on $\Up[i]$, then it moves on to the forward sifting phase.
If $p$ writes to $\Up[i]$ instead, then it continues to the next element of $\Up$ if $i<\ell$, or if $p$ has already reached the end of array $\Up$, it moves on to the forward sifting phase.

Suppose that process $p$ reaches the forward sifting phase after reading the value 0 on register $\Up[i_p]$ for some index $i_p\in\{1,\dots,\ell\}$, or after writing the value 1 on register $\Up[i_p]$ for $i_p=\ell$.
Then, for each $i = i_p-1,i_p-2,\dots$,1, processes $p$ either reads register $\Down[i]$ or writes the value 1 to it.
As before, the decision is made at random and the probability of writing is $q_i$.
If $p$ reads the value 1, it returns \lose.
If $p$ writes to $\Down[i]$ or reads 0 from it, then $p$ continues to $\Down[i-1]$ if $i>1$, or $p$ returns \win if $i=1$.

Let $k$ be the maximum number of processes participating in the group election.
Then with high probability no process accesses a register of array $\Up$ beyond the first $O(\log\log k)$ registers, because for larger indices $i$ the probability $q_i$ of writing to $\Up[i]$ is polynomially small in $k$.
This implies the $O(\log\log k)$ bound on the expected \maxstep{} complexity.
The bound on the effectiveness is obtained as follows.
We have that the number $r_i$ of processes that move from the backward to the forward sifting phase after reading register $\Up[i]$ is in expectation bounded by $1/q_i$:
Each of those $r_i$ processes must read register $\Up[i]$ before any process has written to $\Up[i]$, and the probability of writing to that register is $q_i$.
We show by an inductive argument that the number $s_i$ of processes that access $\Down[i]$ and do not return \lose right after the operation is $O(1/q_i)$ in expectation, thus the number $s_1+r_1$ of processes that get elected is $O(1)$ in expectation.
The inductive argument goes as follows:
The number of processes that access $\Down[i]$ is 
$s_{i+1} + r_{i+1}$ (where $s_\ell$ is defined as the number of processes that write to $Up[\ell]$).
The expectation of $s_{i+1} + r_{i+1}$ is $O(1/q_{i+1})$, by the induction hypothesis and the earlier observation that $r_i$ is bounded by $1/q_i$.
The first write operation on $\Down[i]$ occurs in expectation after $1/q_i$ accesses, and after that only processes that write to $\Down[i]$ do not return \lose.
So in total the expected number of processes that do not return \lose after accessing $\Down[i]$ is at most $1/q_i$ plus the fraction $q_i (r_{i+1}+s_{i+1})$ of processes that write to $\Down[i]$.
A simple calculation bounds that by $O(1/q_i)$.

\begin{lemma}\label{lem:GroupElect-rw-oblivious}
  Figure~\ref{fig:GroupElect-RW-oblivious} gives a randomized implementation of a group election object with effectiveness at most $16$ and expected \maxstep{} complexity $O(\log\log k)$ against any r/w-oblivious adversary.
\end{lemma}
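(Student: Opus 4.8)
The plan is to analyze the two phases of Figure~\ref{fig:GroupElect-RW-oblivious} separately, establishing the expected \maxstep{} complexity from the backward phase and the effectiveness from a combined inductive argument over both phases. Fix an r/w-oblivious adversary $A$, a prefix $\EE'$ ending when the first process is poised to invoke \elect{}, and let $k:=k_{\max}^{M,\elect{}}(\EE')$. The crucial observation, exactly as in Lemma~\ref{lem:GroupElect-location-oblivious}, is that because $A$ is r/w-oblivious it cannot see, for a process about to take a shared-memory step on $\Up[i]$ (or $\Down[i]$), whether that step will be a read or a write, since that is determined by the private coin $c_i$ (resp.\ $c_i'$). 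Hence we may assume all the coins $c_i,c_i'$ for all (up to $k$) participants are drawn in advance, independently, with $\Pr(c_i=\heads)=\Pr(c_i'=\heads)=q_i=1/2^{b^{i-1}}$, and the adversary's schedule is then a fixed function of $\EE'$ and these coins only through the process IDs, not through the read/write choices.

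For the \maxstep{} bound: a process reaches register $\Up[i]$ only if it wrote to $\Up[1],\dots,\Up[i-1]$, which for a single process has probability $\prod_{j<i} q_j = 2^{-\sum_{j<i} b^{j-1}} = 2^{-\Theta(b^{i-1})}$. Taking $i_0$ to be the smallest index with $b^{i_0-1}\geq c\log k$ for a suitable constant $c$, a union bound over the $\leq k$ participants shows that with probability $1-k^{-\Omega(1)}$ no process ever touches $\Up[i]$ for $i>i_0 = \log_b\log k + O(1)$; conditioned on this event every process does $O(\log\log k)$ shared-memory steps, and on the complementary (polynomially small) event the \maxstep{} complexity is trivially at most $2\ell = O(\log\log k)$ anyway since the arrays have only $2\ell-1$ entries — so the expected \maxstep{} complexity is $O(\log\log k)$.

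For the effectiveness: let $r_i$ be the number of processes that leave the backward phase by reading $0$ on $\Up[i]$, and let $s_\ell$ be the number that write to $\Up[\ell]$ (these are exactly the processes that enter the forward phase with index $i_p=\ell$); for $i<\ell$ let $s_i$ be the number of processes that access $\Down[i]$ and do not immediately return \lose. Every one of the $r_i$ processes must have read $\Up[i]$ strictly before any process wrote to it, and each access to $\Up[i]$ is a write independently with probability $q_i$, so a geometric-series argument gives $\Exp[r_i]\le 1/q_i$ (and likewise $\Exp[s_\ell]\le 1/q_\ell$, counting writes rather than the reads preceding the first write). The number of processes that access $\Down[i]$ is $r_{i+1}+s_{i+1}$; among them, all accesses before the first write to $\Down[i]$ number $\le 1/q_i$ in expectation, and after that only writers (a $q_i$-fraction in expectation) survive, giving the recurrence $\Exp[s_i] \le 1/q_i + q_i\bigl(\Exp[r_{i+1}]+\Exp[s_{i+1}]\bigr) \le 1/q_i + q_i\bigl(1/q_{i+1}+\Exp[s_{i+1}]\bigr)$. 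Since $q_i/q_{i+1} = 2^{b^i - b^{i-1}} = 2^{b^{i-1}(b-1)}$ with $b=3/2$, one checks $q_i\cdot(1/q_{i+1}) = 2^{b^{i-1}/2}$, which is at most a constant times $1/q_i = 2^{b^{i-1}}$ — indeed it is much smaller — and more carefully that $\Exp[s_i]\le C/q_i$ for an absolute constant $C$ follows by induction (the $q_i\Exp[s_{i+1}]\le q_i\cdot C/q_{i+1} = C\cdot 2^{b^{i-1}/2}$ term is dominated since $2^{b^{i-1}/2}\ll 2^{b^{i-1}}=1/q_i$, so the recurrence is a contraction). Then the number of elected processes is $r_1 + s_1$, whose expectation is $O(1/q_1) = O(1)$; tracking constants gives the stated bound of $16$. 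Property~\ref{GR} holds because the process that last writes to the highest-indexed register touched in array $\Up$ (or, if no write ever occurs, some bookkeeping still forces a winner via the forward phase) reaches \win — more precisely one argues as in the text that at least one process survives both phases.

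The main obstacle I expect is the effectiveness induction: making the recurrence for $\Exp[s_i]$ rigorous requires care because $r_{i+1}$, $s_{i+1}$, and the coins governing $\Down[i]$ are not independent — the set of processes reaching $\Down[i]$ depends on earlier coins. The clean way around this is to condition on the identities and arrival order of the $r_{i+1}+s_{i+1}$ processes that reach $\Down[i]$ (these are determined by the backward phase and the forward phase down to level $i+1$), and then observe that their $\Down[i]$-coins $c_i'$ are still fresh i.i.d.\ $\Bernoulli(q_i)$ coins, so conditionally the number of survivors is (number before first write) $+$ (number of writers among the rest), yielding the stated conditional bound; taking expectations and using the induction hypothesis closes the argument. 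A secondary subtlety is that the forward-phase coin $c_i'$ for a process is a \emph{second} independent coin at level $i$ (not the same as its backward-phase coin $c_i$), which is exactly what makes the two phases' analyses decouple cleanly, and this should be stated explicitly.
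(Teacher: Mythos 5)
Your effectiveness analysis follows the paper's proof essentially step for step: the same quantities $r_i$, $s_i$, $s_\ell$, the same recurrence $\Exp[s_i]\leq 1/q_i+q_i(\Exp[r_{i+1}]+\Exp[s_{i+1}])$ justified by the same observation that an r/w-oblivious adversary cannot see the pending read/write choice, so the coins governing each register may be treated as fresh i.i.d.\ Bernoulli$(q_i)$ variables in access order, and the same induction $\Exp[s_i]\leq C/q_i$ (the paper takes $C=7$, giving $\Exp[r_1+s_1]\leq 8/q_1=16$). Two points you leave unverified do go through but deserve a line each: the base case $\Exp[s_\ell]\leq kq_\ell\leq 1/q_\ell$ rests on the computation $kq_\ell^2\leq n\cdot n^{-4/3}\leq 1$, which is exactly where the choice $\ell=\lceil\log_b\log n\rceil$ is used; and your one-sentence justification of \ref{GR} (the deepest $\Up$-writer wins) needs the preliminary claim that, under the assumption that all processes lose, no process writes to any $\Down[i]$ --- otherwise the deepest $\Up$-writer could still read a 1 during its forward phase.

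The genuine gap is in the \maxstep{} bound. You bound the expected \maxstep{} complexity by $O(\log\log k)$ on the good event plus the bad-event probability $k^{-\Omega(1)}$ times a worst-case bound of $2\ell$, asserting $2\ell=O(\log\log k)$. But $\ell=\lceil\log_b\log n\rceil$, so the worst-case bound is $O(\log\log n)$, and your estimate is $O(\log\log k)+k^{-\Omega(1)}\cdot O(\log\log n)$, which is \emph{not} $O(\log\log k)$ when $k$ is small relative to $n$ (take $k=O(1)$ and $n$ arbitrary: the second term is $\Omega(\log\log n)$, while the lemma promises $O(1)$). Since the entire point of the lemma is adaptivity in $k$, a single threshold plus a crude bound on the complementary event does not suffice. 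The paper instead computes $\Exp[i^\ast]=\sum_{i\geq 1}\Pr(i^\ast\geq i)$, bounds $\Pr(i^\ast\geq i)$ by the expected number of processes reaching $\Up[i]$, and uses the doubly-exponential decay of the $q_i$ to show that the tail of this sum beyond $\lambda=\lceil\log_b\log k\rceil$ contributes only $O(1)$, independently of $n$. Your union-bound computation already contains the required per-level tail estimates; the fix is to sum them over all levels $i>i_0$ rather than collapse them into a single bad event with an $n$-dependent penalty.
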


\begin{proof}
Let $M$ be an algorithm that uses the implemented group election object, and consider any execution of $M$.
First we argue that not all \elect{} calls return \lose in the execution.
Suppose, towards a contradiction, that they all do.
Then each process reads the value 1 on some register $\Up[i]$ or some register $\Down[i]$, and the process returns $\lose$ immediately after that.
This implies that at least one process writes the value 1 to some register.
We argue that no process writes to any of the registers $\Down[i]$:
Otherwise, let $i_{\min}$ be the smallest index such that some process $p_{\min}$ writes the value 1 to $\Down[i_{\min}]$.
But then $p_{\min}$ does not return \lose at any point, because after writing to $\Down[i_{\min}]$, $p_{\min}$ may only read registers $\Down[j]$ for $j<i_{\min}$.
Thus, some process  must write to a register $Up[i]$.
Let $i_{\max}$ be the largest index such that some process $p_{\max}$ writes the value 1 to $\Up[i_{\max}]$.
But then $p_{\max}$ does not return \lose at any point, as after writing to $\Up[i_{\max}]$, $p_{\max}$ may only read registers $\Up[i]$ for $i>i_{\max}$, and registers $\Down[i]$, for $1\leq i\leq\ell-1$, none of which has value 1.

Next we bound the effectiveness of the implementation.
Let $A$ be some r/w-oblivious adversary, and let $\EE:=\EE_{M,A}$ be a random execution of algorithm $M$ scheduled by $A$.
Fix the prefix $\EE'$ of $\EE$ until the first process is poised to invoke \elect{}, and let $k := k_{max}^{M,\elect{}}(\EE')$ be the max-contention of \elect{} in $\EE$.
For $1\leq i\leq \ell$, let $r_i$ be the number of processes in $\EE$ that read register $\Up[i]$
before any process writes to it.
For $1\leq i\leq\ell-1$, let $s_i$ be the number of processes that either read register $\Down[i]$ before any process writes on it, or write on
register $\Down[i]$.
We also define $s_\ell$ to be the number of processes that write on $\Up[\ell]$.
The total number of processes that access register $\Down[i]$, for $1\leq i\leq \ell-1$, is then at most\footnote{We say `at most' instead of `exactly' because we do not require that all processes finish their \elect{} call in $\EE$.}
$r_{i+1} + s_{i+1}$, and the number of processes that get elected in the group election is
$r_{1} + s_{1}$.
We will show that $\Exp[r_{1} + s_{1}] \leq 16$.

Since adversary $A$ is r/w-oblivious, it does not know whether a process poised to access a shared register will read or write to that register.
We can thus assume that right after the last step of $\EE'$, we perform for each register $\Up[i]$ and each register $\Down[i]$ a series of $k$ independent coin flips with heads probability $q_i$, and that the $j$-th process to subsequently accesses that register uses the $j$-th coin flip value in the series to decide whether it should read or write on the register.
We observe that once these series of coin flips have been fixed, the values of all random variables $r_i$ and $s_i$ are completely determined by the number $k'\leq k$ of processes that invoke \elect{},
provided that all these $k'$ processes finish their \elect{} call.
(In particular, $r_i$ and $s_i$ do not depend on the order in which the $k'$ processes are scheduled to take steps.)
Moreover, if not all $k'$ \elect{} calls are executed to completion, then for each $1\leq i\leq \ell$, $r_i$ and $s_i$ are smaller or equal than the corresponding values if all  $k'$ calls were executed to completion.

It follows that instead of the schedule determined by adversary $A$, we can consider a schedule with the following convenient properties:
Exactly $k$ of processes call the implemented \elect{} method and all processes finish their call;
for each $1\leq i< \ell$, all operations on register $\Up[i]$ are scheduled before any operation on $\Up[i+1]$; for each $1 < i \leq \ell-1$, all operations on $\Down[i]$ are scheduled before any operation on $\Down[i-1]$; and all operations on array $\Up$ are scheduled before any operation on array $\Down$.
Let $R_i$ and $S_i$ denote the same quantities as $r_i$ and $s_i$ but for a schedule as described above.
Then $R_i \geq r_i$ and $S_i \geq s_i$, if the same series of coin flips are used  for each register under both schedules, and thus
\[
    \Exp[R_{1} + S_{1}] \geq \Exp[r_{1} + s_{1}].
\]

We now bound  $\Exp[R_{1} + S_{1}]$.
For that we no longer assume that coin flips are fixed in advance.

For each $1\leq i \leq \ell-1$, the values of $R_{i+1}$ and $S_{i+1}$ are determined before the first process accesses register  $\Down[i]$.
It follows that
\[
    \Exp[S_i \mid R_{i+1}, S_{i+1}] \leq 1/q_i + q_i(R_{i+1} + S_{i+1}),
\]
where the term $1/q_i$ accounts for the processes that read $\Down[i]$ before any process writes on $\Down[i]$, and the term $q_i(R_{i+1} + S_{i+1})$ accounts for the processes that write on $\Down[i]$.
Taking the unconditional expectation yields
\begin{equation}
    \label{eq:ESi}
    \Exp[S_i]
    \leq
    1/q_i + q_i(\Exp[R_{i+1}] + \Exp[S_{i+1}])
    \leq
    1/q_i + q_i(1/q_{i+1} + \Exp[S_{i+1}])
    .
\end{equation}

We now show by induction on $i = \ell,\ell-1,\dots,1$ that
\begin{equation}
    \label{eq:IH}
    \Exp[S_i] \leq 7/q_i.\tag{IH}
\end{equation}
Recall that $\ell = \lceil\log_{b}\log n\rceil\geq \log_{b}\log n$, and $b=3/2$. We also have $q_i = 1/2^{b^{i-1}}$ and thus $q_{i+1} = q_i^b$.
For the base case of $i=\ell$, we have that $S_\ell$ is the number of processes that write to $Up[\ell]$, thus
\[
    \Exp[S_\ell]
    \leq k  q_\ell
    =
    \frac{k  q_\ell^2}{q_\ell}
    \leq
    \frac{k  \big(1/2^{b^{\log_b\log n-1}}\big)^2}{q_\ell}
    =
    \frac{k /n^{4/3}}{q_\ell}
    \leq
    \frac{n^{-1/3}}{q_\ell}
    <
    \frac{7}{q_\ell}.
\]
For $i<\ell$, we obtain from~\eqref{eq:ESi} that
\begin{align*}
    \Exp[S_i]
    &\leq
    1/q_i + q_i (1/q_{i+1} + \Exp[S_{i+1}])
    \\&
    \leq
    1/q_i + q_i(1/q_{i+1} + 7/q_{i+1}), \text{\quad by \eqref{eq:IH}}
    \\&
    =
    1/q_i + 8q_i/q_{i+1}
    \\&
    =
    1/q_i + 8q_{i}/q_i^b
    \\&
    =
    (1/q_i) (1 + 8q_{i}^{2-b})
    \\&
    \leq
    (1/q_i) (1 + 8q_1^{2-b})
    \\&
    <
    (1/q_i)\cdot 7.
\end{align*}
This completes the inductive proof that $\Exp[S_i] \leq 7/q_i$.
Applying this inequality, for $i=1$, we obtain
\[
    \Exp[R_{1} + S_{1}]
    \leq
    1/q_1 + 7/q_1
    =
    16.
\]
Therefore, the effectiveness of the implemented group election is $\Exp[r_{1} + s_{1}] \leq \Exp[R_{1} + S_{1}]\leq 16$.

It remains to bound the expected \maxstep{} complexity of the implementation.
Let $i^\ast$ be the maximum index $i$ such that some process accesses register $\Up[i]$ in execution $\EE$.
Then the maximum number of shared memory operations by any process is at most $2i^\ast - 1$.
We have that $\Pr(i^\ast\geq i)$ is bounded by the expected number of processes that access $\Up[i]$, and this is bounded by $kq_i$.
Thus, for $\lambda := \ceil{\log_{b}\log k}$, we have
\begin{align*}
    \Exp[i^\ast]
    &=
    \sum_{i\geq 1} \Pr(i^\ast\geq i)
    \\&
    \leq
    \lambda + \sum_{i \geq  \lambda+1} \Pr(i^\ast\geq i)
    \\&
    \leq
    \lambda + \sum_{i \geq  \lambda+1} kq_i
    \\&
    \leq
    \lambda + kq_{\lambda+1} \sum_{i \geq  0} q_{\lambda+1}^{b^{i}-1}.
\end{align*}
Since $q_{\lambda+1} = 1/2^{b^{\lambda}}\leq 1/k$ and
$\sum_{i \geq  0} q_{\lambda+1}^{b^{i}-1}\leq \sum_{i \geq  0} q_1^{b^{i}-1}
< 3$, it follows that
$
    \Exp[i^\ast]
    \leq
    \lambda + 4.
$
Hence, the expected \maxstep{} complexity is at most $2\Exp[i^\ast] - 1 \leq 2(\lambda + 4) - 1 = 2\ceil{\log_{b}\log k} + 7$.
This completes the proof of Lemma~\ref{lem:GroupElect-rw-oblivious}.
\end{proof}

\begin{theorem}
    \label{thm:rw-oblivious}
    There is a randomized implementation of a TAS object from $\Theta(n)$ registers with expected \maxstep{} complexity $O(\log\log k)$ against any r/w-oblivious adversary.
\end{theorem}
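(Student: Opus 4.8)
The plan is short: I would obtain the theorem by plugging the r/w-oblivious group election object of Figure~\ref{fig:GroupElect-RW-oblivious} into the generic TAS construction of Figure~\ref{fig:generalLE} and invoking Theorem~\ref{thm:groupelect->leaderelect}, exactly in the way Theorem~\ref{thm:location-oblivious} is obtained from the location-oblivious object. Concretely, fix the absolute constant $c := 16$. For $1 \le j \le c$, I take $G[j]$ to be the implementation of Figure~\ref{fig:GroupElect-RW-oblivious}; for $c < j \le n$, I take $G[j]$ to be the trivial group election object in which every participating process gets elected, which uses no registers and whose \elect{} calls perform no shared-memory steps. The trivial object satisfies property~\ref{GR}, so correctness of the resulting TAS object is immediate from Theorem~\ref{thm:groupelect->leaderelect}: its correctness argument uses only \ref{GR} of the $G[j]$'s, and the deterministic splitters already force the per-iteration contention to decrease strictly, so at most $n$ iterations of the repeat-loop ever occur.

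Next I would collect the parameters. By Lemma~\ref{lem:GroupElect-rw-oblivious}, every non-trivial $G[j]$ has effectiveness at most $f(k) := 16$ and expected \maxstep{} complexity $t(k) = O(\log\log k)$ (precisely $2\lceil\log_{3/2}\log k\rceil + 7$) against any r/w-oblivious adversary. The constant function $f$ is non-decreasing and concave, and $t$ is non-decreasing, so the hypotheses of Theorem~\ref{thm:groupelect->leaderelect} hold. Here $g(k) = \min\{f(k), k-1\} = \min\{16, k-1\}$, so iterating $g$ drops the argument to $16$ in one step and then decreases it by one each further step; hence $g^\ast(k) \le 16$ for every $k$, and in particular $g^\ast(n) \le 16 = c$. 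Thus the effectiveness bound $f$ holds for every index $j \le g^\ast(n)$, which is all that the ``moreover'' clause of Theorem~\ref{thm:groupelect->leaderelect} requires, and the theorem yields expected \maxstep{} complexity $O\big(t(k)\cdot g^\ast(k)\big) = O(\log\log k)$ against any r/w-oblivious adversary.

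For the space bound I would note that each of the $c = O(1)$ non-trivial group election objects uses the arrays $\Up[1\dots\ell]$ and $\Down[1\dots\ell-1]$ with $\ell = \lceil\log_{3/2}\log n\rceil$, i.e.\ $\Theta(\log\log n)$ registers, for a total of $\Theta(\log\log n)$; the $n$ deterministic splitters and the $n$ two-process TAS objects contribute $\Theta(n)$ registers, and the doorway $O(1)$. Hence the implementation uses $\Theta(n)$ registers overall.

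There is really no hard step here, since all the work is already in Theorem~\ref{thm:groupelect->leaderelect} and Lemma~\ref{lem:GroupElect-rw-oblivious}; the one point I expect to need a line of care is the elementary observation that, because the effectiveness bound $f$ is a constant, $g^\ast$ is bounded by an absolute constant. This single fact does double duty: it removes the $g^\ast(k)$ factor from the time bound, so that the $O(\log\log k)$ \maxstep{} complexity of one group election object carries over unchanged, and via the last clause of Theorem~\ref{thm:groupelect->leaderelect} it lets a constant number of real group election objects suffice, keeping the space at $\Theta(n)$ rather than $\Theta(n\log\log n)$. Everything else is a routine instantiation of the already-developed machinery.
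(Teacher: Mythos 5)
Your proposal is correct and follows essentially the same route as the paper's own proof: instantiate the first $16$ group election objects $G[j]$ with the implementation of Figure~\ref{fig:GroupElect-RW-oblivious}, make the rest trivial, and apply Theorem~\ref{thm:groupelect->leaderelect} with Lemma~\ref{lem:GroupElect-rw-oblivious}. Your explicit verification that $g^\ast(k)\leq 16$ (justifying both the constant number of non-trivial objects and the $O(\log\log k)$ bound) is a detail the paper leaves implicit, but the argument is the same.
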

\begin{proof}
We consider the TAS implementation of Figure~\ref{fig:generalLE}, and use the algorithm in Figure~\ref{fig:GroupElect-RW-oblivious} to implement the group election objets $G[j]$, for $1\leq j \leq 16$.
For $16 < j \leq n$, we let $G[j]$ by a trivial group election object, where all participating processes get elected and the \maxstep{} complexity is zero.
From Lemma~\ref{lem:GroupElect-rw-oblivious}, the group election objects $G[j]$, for $1\leq j\leq 16$, have effectiveness at most $16$ and expected \maxstep{} complexity   $O(\log\log k)$ against any r/w-oblivious adversary.
Theorem~\ref{thm:groupelect->leaderelect} then implies that the resulting TAS algorithm has expected \maxstep{} complexity $O(16\cdot \log\log k)$ against any location-oblivious adversary.
Moreover the algorithm uses $\Theta(n)$ registers, as each of the first $16$ group election objection uses $O(\log\log n)$ registers, and the remaining trivial group election objects do not use any registers.
%
\end{proof}

\section{Linear-Space TAS for Strong Adaptive Adversaries}

\label{sec:ratrace}

We present a TAS implementation from $\Theta(n)$ registers that has \maxstep{} complexity $O(\log k)$ both in expectation and w.h.p.\ (i.e., with probability ${1-1/k^{\Omega(1)}}$), against any strong adaptive adversary.
Our implementation is a variant of the \RatRace algorithm proposed by Alistarh et al.~\cite{AAGGG2010a}, which has the same \maxstep{} complexity but uses $\Theta(n^3)$ registers.

\begin{theorem}\label{thm:ratrace}
  There is a randomized implementation of a TAS object from $\Theta(n)$ registers with \maxstep{} complexity $O(\log k)$, both in expectation and w.h.p., against any strong adaptive adversary.
\end{theorem}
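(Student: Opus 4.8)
The plan is to take the \RatRace construction of~\cite{AAGGG2010a} and shrink its randomized routing tree, catching the few processes that ``fall off'' the shortened tree with a deterministic linear-space backup. Recall that \RatRace routes each process down a complete binary tree by repeatedly invoking a randomized splitter at the current node (either stopping there, or descending left/right by a fair coin, while marking every node on the way down); a process that stops at a node then climbs back toward the root, at each node on the way competing in a constant-process \TASII-based tournament against the winners climbing from the two subtrees, and the process that wins at the root wins the \TAS{}. The sole source of the $\Theta(n^3)$ space is that \RatRace uses a tree of depth $3\lceil\log n\rceil+1$; this depth is needed so that, against a strong adaptive adversary, \emph{no} process walks all the way to the bottom without stopping---a process fails to stop at a node only if some other process is concurrently present at that node, which forces their coin-flip prefixes (within this \TAS{}) to agree, an event of probability at most $(k-1)2^{-d}$ at depth $d$. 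I will instead use a tree $\mathcal T$ of depth $d^\ast:=\lceil\log n\rceil+c$, which has only $\Theta(n)$ nodes, together with one backup TAS object $B$ implemented by the Afek--Gafni--Tromp--Vit\'{a}nyi tournament~\cite{Afek1992} (which uses $\Theta(n)$ registers and has deterministic \maxstep{} complexity $O(\log n)$): a process that descends all $d^\ast$ levels of $\mathcal T$ without stopping invokes $B.\TAS{}$ instead, and the winner of $B$ competes against the winner (if any) emerging from the root tournament of $\mathcal T$ at one additional two-process \TASII placed above the root; the winner of that final \TASII wins the implemented \TAS{}. As in Figure~\ref{fig:generalLE}, a doorway $D$ is prepended for linearizability, with any deflected process returning $1$ immediately. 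The space is $\Theta(n)$ tree nodes with $O(1)$ registers each, plus $\Theta(n)$ for $B$ and $O(1)$ for $D$.

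For correctness I would argue, essentially as for \RatRace: at most one process stops at a randomized splitter even against an adaptive adversary---if $p$ does not stop at a node then either another process wrote that node's register $X$ between $p$'s write to and read of $X$, or another process closed that node's internal doorway before $p$ entered it---so each node's tournament has at most three participants (its stopper, if any, and the two subtree winners) and hence a unique winner; by induction the root of $\mathcal T$ produces at most one winner, $B$ at most one, and the final two-process \TASII at most one. Conversely, if every participant finishes then either some process stops in $\mathcal T$, in which case a chain of tournament winners reaches the root of $\mathcal T$ and hence the final \TASII, or all participants fall off, in which case at least one wins $B$ and reaches the final \TASII; either way the final \TASII has a participant and thus a winner. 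For linearizability, the overall winner passed through $D$, so by property~\ref{D2} no \TAS{} call responds before the winner's call is invoked, and placing the winner's operation first yields a valid linearization that respects the happens-before order. Note that processes that fall off mark a path in $\mathcal T$ but never re-ascend, so they take part in no tree tournament.

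For the time analysis I would split on the contention at a threshold of $n^{1/3}$. The downward walk always costs at most $d^\ast=O(\log n)$ steps, and a process that does not fall off and stops at depth $D$ costs $O(D)$ steps in total, since each node's tournament contributes $O(1)$ steps in expectation and, by the Tromp--Vit\'{a}nyi bound, geometrically few steps in the tail. By the co-location argument above, every process stops by depth $2\lceil\log k\rceil+O(1)$ with probability $1-k^{-\Omega(1)}$, so conditioned on this the \maxstep{} complexity is $O(\log k)$ with probability $1-k^{-\Omega(1)}$; moreover $\Pr[\text{some process falls off}]\le k\,(k-1)\,2^{-(d^\ast-1)}=O(k^2/n)$. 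If $k\le n^{1/3}$ then $2\lceil\log k\rceil+O(1)<d^\ast$, so the shortened tree does not truncate the relevant behaviour and the \RatRace-style analysis goes through inside $\mathcal T$; also $O(k^2/n)=O(n^{-1/3})=k^{-\Omega(1)}$, so falling off neither affects the w.h.p.\ bound nor contributes more than $O(n^{-1/3})\cdot O(\log n)=o(1)$ to the expectation, and both the w.h.p.\ and the expected \maxstep{} complexities are $O(\log k)$. If $k>n^{1/3}$ then $\log k=\Theta(\log n)$, and the whole algorithm---a descent of length at most $d^\ast$, then the tournament $B$, then an ascent of length at most $d^\ast$, all with $O(1)$-expected and light-tailed per-node work---runs in $O(\log n)=O(\log k)$ steps both in expectation and w.h.p. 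All of these probabilistic statements are made conditioned on an arbitrary prefix ending when the first process is poised to invoke \TAS{}, which is harmless since all coin flips performed during the \TAS{} calls are still fresh. The main obstacle is making the co-location argument against the strong adaptive adversary fully rigorous in the shortened tree---pinning down that the adversary's only lever is the schedule, so that a process fails to stop at a node \emph{precisely} when some other process whose random coin-flip prefix agrees with its own can be interleaved there---and then choosing $c$ and the threshold so that the small-$k$ regime (where the \RatRace analysis applies unchanged inside $\mathcal T$) and the large-$k$ regime (where everything is $O(\log n)=O(\log k)$) overlap, and the fall-off probability is negligible exactly where it must be.
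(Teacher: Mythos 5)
Your construction is correct in outline but takes a genuinely different route from the paper's. The paper makes the primary tree \emph{shorter} (height $\log n-\log\log n$), attaches to each of its $n/\log n$ leaves an \emph{elimination path} of length $4\log n$ (a doorway-free reuse of the structure in Figure~\ref{fig:generalLE}, with deterministic splitters and 2-process TAS objects), and keeps a single backup elimination path of length $n$; a balls-in-bins Chernoff argument (Lemma~\ref{lem:procETs}) shows that w.h.p.\ at most $4\log n$ processes reach any leaf, so by Lemma~\ref{lem:elimination-path} nobody falls off a leaf path and w.h.p.\ the backup is never touched. You instead keep the tree at depth $\lceil\log n\rceil+O(1)$, drop the per-leaf structures, and catch fall-offs with an Afek--Gafni--Tromp--Vit\'{a}nyi tournament; your case split at $k\lessgtr n^{1/3}$ does the work that the paper's balls-in-bins lemma does. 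Your version buys a simpler backup whose cost is always $O(\log n)$ (the paper's length-$n$ backup path can cost $\Theta(n)$ steps, which is why the paper must argue it is reached only with probability $O(1/n)$), at the price of a slightly deeper tree. Two quantitative points to repair: (i) the tournament backup is deterministic only relative to its 2-process TAS primitives, which must themselves be randomized (Tromp--Vit\'{a}nyi), so $B$'s $O(\log n)$ bound holds in expectation and w.h.p.\ rather than deterministically---harmless, since fresh coins are used inside $B$ and the guarantee holds against the strong adversary; and (ii) under the union bound $k(k-1)2^{-(d-1)}$, stopping by depth $2\lceil\log k\rceil+O(1)$ fails with \emph{constant} probability, not $k^{-\Omega(1)}$; you need depth $(2+\epsilon)\log k$ for the w.h.p.\ claim, which remains compatible with your threshold because $3\log k\le\log n$ when $k\le n^{1/3}$. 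Finally, both proofs defer to the same external core---the original \RatRace co-location analysis against a strong adaptive adversary---so you are not assuming more than the paper does.
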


Before we prove Theorem~\ref{thm:ratrace}, we give an overview of the original \RatRace algorithm.
To simplify exposition, throughout this section we treat $\log n$, $n/\log n$, and $\log\log n$ as integers.
It is easy to accommodate the calculations for the case that this is not true, by rounding appropriately.

\paragraph{Overview of RatRace.}
\RatRace~\cite{AAGGG2010a} uses two shared memory data structures, a \emph{primary tree} and a \emph{backup grid}.
The primary tree is a perfect binary tree of height $3\log n$, where each node $v$ stores a randomized splitter
object $S_v$, and a randomized 3-process TAS object $T_v$.
The latter can be implemented from two 2-process TAS objects. 

Each process $p$ starts at the root of the primary tree and moves downwards towards the leaves.
The process goes through the splitters at the nodes it visits along the way, until it stops at a splitter, or ``falls off'' the bottom of the tree (which happens only with low probability).
If $p$ turns left or right at a splitter $S_v$, then it moves respectively to the left or right child of $v$, provided $v$ is not a leaf.
If $v$ is a leaf, $p$ moves to the backup grid as explained below.
If $p$ stops at $S_v$ then it stops moving downwards, and starts to move upwards towards the root, along the same path.
At each node $u$ in the path to the root, $p$ tries to win the TAS on object $T_u$.
If $p$ loses that TAS, it immediately loses the implemented TAS.
Otherwise, it moves to the parent of $u$ in the tree.
The process that wins the TAS at the root competes against the winner at the backup grid.


The backup grid is an $n\times n$ square grid, where each node $v = (i,j)\in\{1,\ldots,n\}^2$ stores a \emph{deterministic} splitter object,
and also a randomized 3-process TAS object as before.
We define the left and right children of node $(i,j)$ at the grid to be nodes $(i+1,j)$ and $(i,j+1)$, respectively.
Each process that falls off the primary tree starts at  node $(1,1)$, and proceeds in a similar way as in the primary tree:
At each node the process goes through the splitter, moving to the child as indicated by the direction to which the process turns at the splitter, until it stops at some splitter.
Then, the process tries to move back to node $(1,1)$ along the same path, by winning all the TAS in the nodes along the way.
The properties of deterministic splitters guarantee that the process wins a splitter before it falls off the grid.

The winner of the TAS at node $(1,1)$ of the backup grid, and the winner of the TAS at the root of the primary tree participate in a randomized 2-process TAS, which determines the  winner of \RatRace.

To ensure linearizability, a doorway object is used such that only processes that pass through the doorway participate in the above algorithm, whereas processes that are deflected lose immediately.

\paragraph{Reducing the Space Complexity (Proof of Theorem~\ref{thm:ratrace}).}
\RatRace requires $\Theta(2^{3\log n}) = \Theta(n^3)$ registers for the primary tree of height $3\log n$, and $\Theta(n^2)$ registers for the backup $n\times n$ grid.
Next we show how to reduce this space complexity, without increasing the \maxstep{} complexity.

We use a data structure,
which we call an \emph{elimination path},
that is similar to the backup grid but uses fewer registers.
An \emph{elimination path of length $\ell$} is an $\ell$-node path where each node $i\in\{1,\ldots,\ell\}$ stores a deterministic splitter $S_i$, and a randomized 2-process TAS object $T_i$.
The possible outcomes for a process accessing an elimination path is to \emph{win}, \emph{lose}, or \emph{fall off} the path.
A process $p$ {enters} the elimination path at node $i=1$, and moves towards node $\ell$, going through splitter $S_i$ at each node $i$ it visits.
If $p$ turns left at $S_i$, then it \emph{loses} and takes no more steps.
If it turns right, then it moves to the next node, $i+1$, if $i<\ell$, whereas if $i=\ell$, $p$ \emph{falls off} the path
and takes no more steps in the path.
Last, if $p$ stops at $S_i$, then it starts moving back towards node~1.
From node $i>1$, it moves to $i-1$ if it wins the TAS on $T_i$, otherwise, it loses and stops.
The \emph{winner} of the elimination path is the winner of~$T_1$.


With some slight modifications, the TAS algorithm in Figure~\ref{fig:generalLE} implements an elimination path of length $n$.
More precisely, we remove line~\ref{ln:LE:doorway}, where the process accesses the doorway, and replace line~\ref{ln:LE:elect}, where the process participates in a group election, with the statement:
\textbf{if} $i>n$ \textbf{return} \falloff.
The process wins (loses) if the return value is 0 (respectively, 1).
%

The next lemma summarizes the main properties of an elimination path.

\begin{lemma}
    \label{lem:elimination-path}
    At most one process wins in an elimination path, and not all processes that access the elimination path lose.
    If $k\leq \ell$ processes access an elimination path of length $\ell$, then no process visits a node with index $j>k$, and no process falls off. 
%
%
\end{lemma}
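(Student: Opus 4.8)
The plan is to reuse the combinatorial part of the correctness proof of Theorem~\ref{thm:groupelect->leaderelect}, specialized to the case where the group election objects are deleted and replaced by the fall-off test. Fix an execution in which $k$ processes access an elimination path of length $\ell$; if $k=0$ there is nothing to prove, so assume $k\ge 1$. For $j\ge 1$ let $m_j$ be the number of processes that begin a $j$-th iteration of the repeat-until loop, so $m_1=k$. Since there is no longer a group election to deflect processes, every process that begins iteration $j$ with $j\le\ell$ goes through splitter $S_j$, while a process that begins iteration $j$ with $j>\ell$ falls off at once. By the deterministic splitter semantics, if $m_j\ge 1$ and $j\le\ell$ then at most $m_j-1$ of these processes turn right, so $m_{j+1}\le m_j-1$; and if $j>\ell$ then $m_{j+1}=0$. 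Hence $(m_j)$ strictly decreases while positive and eventually reaches $0$, so $j^\ast:=\max\{j:m_j\ge 1\}$ is well defined and finite.

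For ``at most one process wins'' I would argue as in Theorem~\ref{thm:groupelect->leaderelect}: each 2-process object $T_j$ is accessed by at most two processes---the unique process (if any) that stops at $S_j$ and calls $T_j$.\TAS{$1$}, and the winner of $T_{j+1}$ (when $j<\ell$ and such a winner exists), which calls $T_j$.\TAS{$2$}---so at most one process wins $T_j$; since winning the path requires winning $T_1$, at most one process wins overall. For ``not all accessing processes lose'' I would split on the value of $j^\ast$. If $j^\ast>\ell$, the $m_{j^\ast}\ge 1$ processes beginning iteration $j^\ast$ all fall off, so not all lose. If $j^\ast\le\ell$, those $m_{j^\ast}$ processes all go through $S_{j^\ast}$; none turns right (a process turning right would begin iteration $j^\ast+1$, contradicting the maximality of $j^\ast$), and by the splitter semantics at most $m_{j^\ast}-1$ turn left, so exactly one stops. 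Since $m_{j^\ast+1}=0$, no process accesses $T_{j^\ast+1}$, so this lone stopper is the only process at $T_{j^\ast}$ and wins it. Walking back toward node~$1$, a downward induction on $j=j^\ast,\dots,1$ then shows $T_j$ has a unique winner (the winner of $T_{j+1}$ for $j<j^\ast$, or the lone stopper for $j=j^\ast$---possibly beaten at $T_j$ by a process that stopped at $S_j$ and called $T_j$.\TAS{$1$}), and this winner proceeds to $T_{j-1}$ when $j>1$; hence some process wins $T_1$ and thus the path. In either case some accessing process does not lose.

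For the last claim, assume $k\le\ell$. A one-line induction from $m_{j+1}\le m_j-1$ (applicable whenever $m_j\ge 1$, since then $j\le k\le\ell$) gives $m_j\le k-j+1$, so $m_{k+1}=0$: no process begins iteration $k+1$, hence no process visits a node of index $j>k$; and since $k\le\ell$, no process begins any iteration with index exceeding $\ell$, so the test $i>\ell$ is never passed and no process falls off. The argument is entirely elementary, driven by the deterministic splitter guarantees and property~\ref{S}; the one point that needs a little care is checking that the downward induction on the $T_j$'s survives the removal of the group election---this is exactly where $m_{j^\ast+1}=0$ is invoked to guarantee $T_{j^\ast+1}$ is untouched---together with fixing the convention that a process ``visits node $j$'' precisely when it begins iteration $j$ with $j\le\ell$.
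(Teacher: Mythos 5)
Your proof is correct and follows essentially the same route as the paper's: the paper likewise obtains the first two claims by appealing to the correctness argument of the TAS implementation of Theorem~\ref{thm:groupelect->leaderelect} (which you have simply unpacked and adapted to the removal of the group elections and the added fall-off outcome), and proves the last claim by the identical counting argument that at most $k-i$ processes can turn right at the $i$-th splitter. No changes needed.
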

\begin{proof}
The properties that at most one process wins and not all processes lose follow from the  same properties of the TAS implementation in Figure~\ref{fig:generalLE}.
For the second part of the lemma, we have that at each splitter, not all processes can turn right.
Hence, if at most $k$ processes enter the elimination path, then at most $k-i$ processes turn right at splitter $S_i$, for $i\leq k$.
This implies that no process visits a node with index $j>k$, and that no process falls off the end of the path.
\end{proof}

%

To reduce the space complexity of the \RatRace algorithm, the first modification we make is to replace the backup grid by a \emph{backup elimination path} $B$ of length $n$.
Lemma~\ref{lem:elimination-path} implies that $B$ has the same properties as the backup grid against a strong adaptive adversary.
Unlike the backup grid however, $B$ requires only $\Theta(n)$ registers.

A second modification is that we replace the primary tree of height $3\log n$, by a data structure consisting of a smaller primary tree, of height  $\log n - \log\log n$, and $n/\log n$ elimination paths $P_i$ of length $4\log n$, where $1\leq i\leq n/\log n$.
Note that we have as many elimination paths as the leaves of the primary tree.
The total number of registers required is $\Theta(2^{\log n-\log\log n} + (4\log n)\cdot n/\log n) = \Theta(n)$.
The primary tree is used in the same way as before, but now any process that falls off moves to one of the elimination paths, instead of the backup grid.
More precisely, a process that falls off the $i$-th leaf moves to elimination path $P_i$.
The winner at each $P_i$ (if there is one) moves back to the primary tree, at leaf $i$, and from there it tries to reach the root as in the original \RatRace algorithm.
Any process that falls off a path $P_i$ moves to the backup elimination path $B$.
Finally, as before, the winner of $B$ and the winner of the primary tree participate in a 2-process TAS to determine the winner of the implemented TAS.

Consider a random execution of an algorithm that uses the above TAS implementation, scheduled by a strong adaptive adversary.
Fix the prefix of this execution until the first process is poised to invoke the implemented TAS, and suppose the max-contention is $k$.

If $\log k \leq(\log n-\log\log n)/3$, then a bound of $O(\log k)$ on the expected \maxstep{} complexity, and also on the \maxstep{} complexity w.h.p., follows from the analysis of the original \RatRace~\cite{AAGGG2010a}.

In the following we assume that
$\log k > (\log n-\log\log n)/3$.
We use the next simple lemma, which implies that w.h.p.\ the number of processes that enter each elimination path $P_i$ is not greater than its length.

\begin{lemma}
    \label{lem:procETs}
    With probability at least $1 - 1/n$, each leaf node in the primary tree is visited by at most $4\log n$ processes.
\end{lemma}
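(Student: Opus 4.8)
The plan is to reduce the claim to an independent balls-into-bins estimate, with one bin per leaf. The primary tree has height $h:=\log n-\log\log n$, and hence $2^{h}=n/\log n$ leaves. The only property of the construction I would use is the semantics of \RSplitter{}: whenever a process passes through a randomized splitter without stopping, it turns left or right with probability exactly $1/2$ each, independently of everything else; and a process that stops at a splitter makes no further downward moves. Consequently, a process $p$ can reach a given leaf $v$ only if the first $h$ fair coins that $p$ spends on splitter directions spell out exactly the root-to-$v$ path. That event is a function of $p$'s private coins alone, so it has probability at most $2^{-h}=\log n/n$, and across the at most $n$ participating processes these events are mutually independent. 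This independence is the crux of the argument, and it holds no matter how the strong adaptive adversary schedules, because the adversary cannot bias a process's own coin flips.

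Granting this, let $N_v$ denote the number of processes that visit leaf $v$. By the above, $N_v$ is stochastically dominated by $\mathrm{Bin}(n,\log n/n)$, which has mean exactly $\log n$. A Chernoff bound for the upper tail then gives $\Pr[N_v\ge 4\log n]\le\bigl(e^{3}/4^{4}\bigr)^{\log n}\le n^{-2}$, using $e^{3}<64$. A union bound over the $n/\log n<n$ leaves yields that with probability at least $1-1/n$ no leaf is visited by more than $4\log n$ processes, which is the assertion of the lemma (and, via Lemma~\ref{lem:elimination-path}, then implies that w.h.p.\ no process falls off any $P_i$ and no process visits a node of $P_i$ with index exceeding its length).

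I expect the only subtle point to be the stochastic-domination step. One must be careful that whether a process stops at a particular splitter---and therefore whether it reaches a leaf at all---depends on the interleaving chosen by the adversary, which for a strong adaptive adversary may itself depend on other processes' coin flips; so the events ``$p$ reaches leaf $v$'' are not obviously independent across $p$. The way around this is to observe that ``$p$ reaches $v$'' is contained in the purely local event ``the first $h$ direction-coins of $p$ equal the path to $v$'', that stopping can only shrink the set of leaves $p$ might reach, and that these local events are genuinely independent across processes. Once the scheduling has been factored out in this manner, the remaining Chernoff-plus-union-bound calculation is entirely routine.
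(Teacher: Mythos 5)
Your proof is correct and follows essentially the same route as the paper's: both reduce the claim to a balls-into-bins bound by observing that a process's direction coins are private, independent, and uniquely determine the only leaf it could reach, then apply a Chernoff bound with mean $\log n$ and a union bound over the $n/\log n$ leaves. The only (immaterial) difference is the particular form of the Chernoff bound used.
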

\begin{proof}
The number of processes that visit a given leaf node is stochastically dominated by the number of balls that fall in a given bin in the standard bins-and-balls model, with  $n$ balls and $n/\log n$ bins. In this model each ball is placed in a bin chosen independently and uniformly at random.
The domination follows because we can assume each process $p$ comes with an independent and uniform random bit string of length $\log n - \log\log n$.
If $p$ goes through a randomized splitter in a node at distance $i-1$ from the root, and does not stop at that splitter, then the $i$-th bit in the bit string determines whether $p$ will turn left or right at the splitter.
Hence, the random bit string uniquely determines the leaf that $p$ will reach, if it does not stop at any splitter along the way.

For $1\leq i\leq n$, let $X_i$ be the 0/1 random variable that is 1 if and only if the $i$-th ball falls in some fixed bin $b$.
Let $X = X_1+\dots+X_n$ be the total number of balls that fall in $b$.
Then $\Exp[X]=\log n$, and by a standard Chernoff bound,
stated as Theorem~\ref{thm:Chernoff} below, we obtain
\[
    \Pr(X > 4 \log n)
    \leq
    e^{-\frac{3^2\log n}{2(1+1)}}
    <
    n^{-2}.
\]
Therefore, the same $n^{-2}$ upper bound applies to the probability that more than $4\log n$ processes visit a given leaf node.
Then by a union bound, the probability that the maximum number of visits at any of the $n/\log n$ leaves exceeds $4\log n$ is at most $n^{-1}/\log n$.
\end{proof}

The following Chernoff Bound, used in the proof above, can be found in~\cite[Theorem~2.3(b)]{McDiarmid1998}.

\begin{theorem}[Chernoff Bound]
\label{thm:Chernoff}
  Let $X_1,X_2,\dots,X_n$ be independent random variables with $0\leq X_i\leq 1$, for each $i\in\{1,\dots,n\}$, and let $X=X_1+\dots+X_m$ and $\mu=\Exp[X]$.
  Then for any $\delta>0$,
  \begin{displaymath}
    \Pr\bigl(X\geq (1+\delta)\mu\bigr)\leq e^{-\frac{\delta^2\mu}{2(1+\delta/3)}}.
  \end{displaymath}
\end{theorem}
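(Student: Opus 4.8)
The plan is to prove this by the standard exponential-moment (Chernoff) method: apply Markov's inequality to $e^{tX}$, exploit independence to factorize the moment generating function, bound each factor, optimize the free parameter $t$, and finally weaken the resulting expression into the stated form via a scalar analytic inequality. First I would fix an arbitrary $t>0$ and write
\[
\Pr\bigl(X\geq(1+\delta)\mu\bigr)=\Pr\bigl(e^{tX}\geq e^{t(1+\delta)\mu}\bigr)\leq \frac{\Exp[e^{tX}]}{e^{t(1+\delta)\mu}},
\]
using that $e^{tX}$ is nonnegative. Since $X_1,\dots,X_n$ are independent, the moment generating function factorizes as $\Exp[e^{tX}]=\prod_{i}\Exp[e^{tX_i}]$. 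To control each factor I would use convexity of $x\mapsto e^{tx}$: on the interval $[0,1]$ the function lies below the chord through its endpoints, giving $e^{tx}\leq 1+(e^{t}-1)x$ for $x\in[0,1]$. Taking expectations and using $0\leq X_i\leq1$ yields $\Exp[e^{tX_i}]\leq 1+(e^{t}-1)\Exp[X_i]\leq e^{(e^{t}-1)\Exp[X_i]}$, where the last step is $1+y\leq e^{y}$. Multiplying over $i$ and adding the exponents gives the clean bound $\Exp[e^{tX}]\leq e^{(e^{t}-1)\mu}$.

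Combining these estimates produces, for every $t>0$, the inequality $\Pr(X\geq(1+\delta)\mu)\leq \exp\bigl(\mu[(e^{t}-1)-t(1+\delta)]\bigr)$. Differentiating the exponent in $t$ shows it is minimized at $e^{t}=1+\delta$, i.e.\ at $t=\ln(1+\delta)$, which is positive because $\delta>0$. Substituting this value yields the classical multiplicative Chernoff bound
\[
\Pr\bigl(X\geq(1+\delta)\mu\bigr)\leq \left(\frac{e^{\delta}}{(1+\delta)^{1+\delta}}\right)^{\mu}.
\]

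It then remains only to weaken this into the form stated in the theorem. Taking logarithms and dividing by $\mu\geq0$, the claim reduces to the single-variable inequality
\[
(1+\delta)\ln(1+\delta)-\delta\ \geq\ \frac{\delta^{2}}{2(1+\delta/3)}\qquad\text{for all }\delta>0,
\]
after which raising both sides to the power $\mu$ completes the argument. I expect this final analytic inequality to be the main obstacle, since both sides vanish to second order at $\delta=0$ and agree there, so the desired dominance only emerges at third order and must be shown to persist uniformly for all $\delta>0$. I would verify it by letting $h(\delta)$ denote the difference of the two sides, noting $h(0)=0$, and checking $h'(\delta)\geq0$; the derivative simplifies to $\ln(1+\delta)$ minus the rational function $6\delta(6+\delta)/(6+2\delta)^{2}$, whose nonnegativity follows from a further derivative comparison or from matching Taylor expansions. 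An alternative that avoids delicate calculus is to split into a small-$\delta$ regime, where a second-order Taylor bound on $(1+\delta)\ln(1+\delta)$ suffices, and a large-$\delta$ regime, where the left-hand side is manifestly the larger of the two.
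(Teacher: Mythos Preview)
The paper does not actually prove this theorem: it simply quotes it from McDiarmid~\cite[Theorem~2.3(b)]{McDiarmid1998} as a black box for use in Lemma~\ref{lem:procETs}. So there is no ``paper's own proof'' to compare against.

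Your argument is the standard exponential-moment derivation of the multiplicative Chernoff bound, and it is correct. The steps---Markov on $e^{tX}$, factorization by independence, the chord bound $e^{tx}\leq 1+(e^t-1)x$ on $[0,1]$, optimizing at $t=\ln(1+\delta)$, and then weakening $\bigl(e^{\delta}/(1+\delta)^{1+\delta}\bigr)^{\mu}$ to the stated exponential---are exactly the ones that produce this form of the bound in the literature. The final scalar inequality $(1+\delta)\ln(1+\delta)-\delta\geq \delta^2/(2+2\delta/3)$ is well known; your sketch via $h(0)=h'(0)=0$ and higher derivatives works (in fact the second and third derivatives of both sides also agree at $0$, so one must go one level deeper, but the comparison eventually closes cleanly since $h''(\delta)=1/(1+\delta)$ and the right side's second derivative is $216/(6+2\delta)^3$, and $(6+2\delta)^3\geq 216(1+\delta)$ for $\delta\geq 0$). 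If you want to avoid that calculus entirely you can simply cite McDiarmid as the paper does.
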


From Lemma~\ref{lem:procETs}, we have that  w.h.p.\ no more than $4\log n$ processes enter any single elimination path $P_i$, and thus w.h.p.\ no process enters the backup elimination path $B$, by Lemma~\ref{lem:elimination-path}.
If no process enters $B$, then each process traverses at most a path of length $\log n-\log\log n$ in the primary tree (from the root to a leaf), and at most one of the elimination paths $P_i$ of length $4\log n$.
Therefore, each process goes through at most $O(\log n)$ splitters, and participates in at most $O(\log n)$ 3-process TAS objects.
It follows that the \maxstep{} complexity is bounded by $O(\log n) = O(\log k)$ w.h.p.
Since w.h.p.\ no process reaches $B$, and the maximum number of steps a process takes at $B$ is $O(n)$ w.h.p., it follows that the expected  \maxstep{} complexity is bounded by $O(\log k)$, as well.
This completes the proof of Theorem~\ref{thm:ratrace}.
\qed


\section{Combining TAS Algorithms for Different Adversaries}
\RatRace and its linear-space variant presented in Section~\ref{sec:ratrace} achieve logarithmic 
\maxstep{} complexity in the strong adaptive adversary model.
These algorithms do not benefit from weaker adversaries, as their expected \maxstep{} complexity is still logarithmic even in the oblivious adversary model.
On the other hand, the TAS implementations in Section~\ref{sec:algos-oblivious}, which are more efficient against weaker adversaries, exhibit poor performance in the strong adaptive adversary model, having linear expected \maxstep{} complexity. 
In this section we describe how one can combine any of the implementations in Section~\ref{sec:algos-oblivious} with \RatRace, to obtain a TAS object that has the expected \maxstep{} complexity of \RatRace against any strong adaptive adversary, and the expected \maxstep{} complexity of the corresponding algorithm in Section~\ref{sec:algos-oblivious} in the weaker adversary model.
\begin{theorem}\samepage\label{thm:combine-algos}
    For any randomized TAS implementation~$\Imp$, there is a randomized TAS implementation \Comb that has the following properties:
    \begin{enumerate}[label=(\alph*)]
     \item\label{item:a}
     If $f$ is a non-decreasing function such that the 
     expected \maxstep{} complexity of $\Imp$ is at most $f(k)$ against any location-oblivious (or r/w-oblivious) adversary, then \Comb has 
     expected \maxstep{} complexity $O\bigl(f(k)\bigr)$ against any location-oblivious (respectively r/w-oblivious) adversary;
     \item\label{item:b}
     \Comb has 
     expected \maxstep{} complexity $O(\log k)$  against any strong adaptive adversary; and 
     \item\label{item:c}
     The space complexity of \Comb is $\Theta(n)$ plus the space complexity of $\Imp$.
    \end{enumerate}
\end{theorem}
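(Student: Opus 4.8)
The plan is to let \Comb run, for each process, one copy of $\Imp$ interleaved step for step with one copy of the linear-space \RatRace implementation of Theorem~\ref{thm:ratrace}, behind a fresh \Doorway object $D$. Since $D$ needs $O(1)$ registers and \RatRace needs $\Theta(n)$, this immediately gives part~\ref{item:c}. Concretely, a process first calls the \enter{} method of $D$ and returns $1$ if it is deflected; otherwise it alternates single steps of its $\Imp$ call and its \RatRace call until one of the two returns. If $\Imp$ returns first with value~$1$, the process \emph{gives up} in \RatRace --- at its next randomized splitter it takes the losing branch, or if it is currently climbing back it loses its next TAS --- then halts and returns $1$. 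If $\Imp$ returns first with value~$0$, the process runs its \RatRace call to completion and returns that result. If the \RatRace call returns first, the process abandons $\Imp$ and returns the \RatRace result. Thus a process returns $0$ precisely when it completes \RatRace and wins it.

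For correctness, ``at most one winner'' holds because at most one \RatRace call returns $0$. For ``at least one winner'' (when all participants finish their \Comb calls), note that if no participant's \RatRace call returns before its $\Imp$ call, then every participant completes $\Imp$, so $\Imp$ elects a winner, who then completes \RatRace; and if some participant's \RatRace call returns first, that participant completes \RatRace. Either way at least one participant runs \RatRace to completion, and --- this is the one delicate point --- the modified \RatRace must still elect a winner although the other participants \emph{abort cleanly}. I would handle this by observing that a clean abort is, from every other process's viewpoint, indistinguishable from an ordinary ``turn-left-and-lose'' step in the primary tree or a lost TAS on the climb-back; so it cannot falsify the invariant that some completing participant stops at a splitter, climbs back winning all the $2$- and $3$-process TAS objects on its path (possibly after falling off onto an elimination path or the backup path), and wins the final TAS. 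Verifying this amounts to re-running the correctness argument of \RatRace, together with Lemma~\ref{lem:elimination-path}, with aborts inserted, and is the main obstacle. Linearizability then follows from \Doorway property~\ref{D2} exactly as in the correctness proof of Theorem~\ref{thm:groupelect->leaderelect}: the unique winner passed $D$ before any process exited $D$, so no \TAS{} operation on \Comb happens before it, and one linearizes the winner first.

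For the \maxstep{} complexity, fix an execution and let $S$ and $L$ be the \maxstep{} complexities of the $\Imp$ copy and of the \RatRace copy in it. Each process spends $O(\min(S,L))$ steps in the interleaved phase, $O(1)$ further steps if it aborts, and at most $O(L)$ further steps if it completes \RatRace; this already gives a bound of $O(L)$ on the \maxstep{} complexity of \Comb. A second bound of $O(S)$ holds unless some elimination path of \RatRace's primary tree is entered by more processes than its length: every process other than the at-most-one process $w$ that wins $\Imp$ writes splitter registers only at depths $O(S)$ --- an aborting process stops descending once its $\Imp$ call returns, after $O(S)$ interleaved steps, and a process whose \RatRace call returns first did so after $O(S)$ \RatRace steps, since those were interleaved step for step with an $\Imp$ call of length at most $S$ --- so $w$, the only process that can descend deeper, descends $O(S)$ levels, is then alone, stops, and climbs back in $O(S)$ steps. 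Hence the \maxstep{} complexity of \Comb is $O(L)$ and also $O(S)+O(n)\cdot\indic{\text{some elimination path overflows}}$. For part~\ref{item:b}: against a strong adaptive adversary $\Exp[L]=O(\log k)$ by Theorem~\ref{thm:ratrace} --- clean aborts only remove climb-back contention and leave stale splitter writes that behave like slow participants, so they do not increase the \RatRace complexity --- and hence the expected \maxstep{} complexity of \Comb is $O(\log k)$. For part~\ref{item:a}: against a location-oblivious (respectively, r/w-oblivious) adversary, that adversary restricts to an adversary of the same kind on the $\Imp$ sub-execution, whose max-contention is at most $k$, so $\Exp[S]\le f(k)$; and a bins-and-balls argument as in Lemma~\ref{lem:procETs} bounds the overflow probability by $O(1/n)$. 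Therefore the expected \maxstep{} complexity of \Comb is at most $O(\Exp[S])+O(n)\cdot O(1/n)=O(f(k))$, using $f(k)\ge 1$.
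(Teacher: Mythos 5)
Your architecture---a doorway followed by a step-for-step interleaving of \Imp with a doorway-less \RatRace---is exactly the paper's, but your abort rule is the fatal flaw, and it is precisely the pitfall the paper's proof is built to avoid. You let a process whose \Imp call returns $1$ ``give up'' in \RatRace even while it is climbing back, i.e., after it has stopped at a splitter and possibly already \emph{won} some of the 2- and 3-process TAS objects on its path to the root. Such a process leaves those objects permanently won: every other process that reaches them loses, yet the aborting process never carries its victories up to the root. Concretely, take two participants $p,q$. In the \RatRace copy, let $p$ stop at a node $w$, win $T_w$, and then abort because its \Imp call returned $1$ (possible, e.g., when \Imp is itself an instance of \RatRace in which $q$ beat $p$ on a 2-process TAS before $q$ abandoned \Imp); let $q$ stop at a descendant of $w$, climb up, and lose $T_w$ to $p$. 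Then $q$ runs \RatRace to completion and loses it, $p$ returns $1$, and no process returns $0$. Your ``indistinguishable from an ordinary loss'' claim invokes the wrong invariant: the not-all-lose guarantee of \RatRace, like that of any TAS object, is conditioned on all participants finishing their calls, and an aborting climber is a participant that does not finish yet still returns $1$ in \Comb. The paper spells out this exact deadlock as motivation for rule~\ref{C3b}: a process that loses \Imp after stopping at a splitter must execute \RatRace to completion.

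Once that rule is adopted, your complexity argument for part~\ref{item:a} no longer closes, because it silently relies on the broken abort: your $O(S)$ bound assumes every non-winner of \Imp takes only $O(1)$ further \RatRace steps after \Imp returns. Under rule~\ref{C3b}, a process that stopped at its $i$-th splitter must still perform up to $i+1$ \TAS{} calls on 2- and 3-process objects \emph{after} its \Imp steps are exhausted, and each such call has only a geometric tail. Controlling these extra steps is the real content of the paper's analysis (Lemmas~\ref{lem:combined-algo-analysis} and~\ref{lem:T_R-concentration}): such a process has already devoted $\Omega(i)$ steps to \Imp, at most $2^i$ processes can stop at their $i$-th splitter, and a Chernoff bound for sums of geometrically dominated variables combined with a union bound over those $2^i$ processes and over all $i$ yields $\Exp[T_R]=O(\Exp[T_I]+1)$. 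You need this (or an equivalent) argument; the bins-and-balls overflow bound you import from Lemma~\ref{lem:procETs} addresses a different and much easier issue. The strong-adversary bound in part~\ref{item:b}, by contrast, is the easy direction (each process devotes at least half of its steps to \RatRace), and your treatment of it is essentially the paper's.
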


Combining Theorem~\ref{thm:combine-algos} with Theorems~\ref{thm:location-oblivious} and~\ref{thm:rw-oblivious}, yields the following result.
\begin{corollary}
    There are randomized implementations of TAS objects from $\Theta(n)$ registers with 
    expected \maxstep{} complexity $O(\log^\ast k)$ or $O(\log\log k)$
    against any location-oblivious adversary or any r/w-oblivious adversary, respectively,
    and with expected \maxstep{} complexity $O(\log k)$ against any strong adaptive adversary.
\end{corollary}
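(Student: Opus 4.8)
The corollary will be immediate once Theorem~\ref{thm:combine-algos} is available, so the plan is to dispatch it in one step and spend the effort on that theorem. Instantiate $\Imp$ with the $\Theta(n)$-register TAS implementation of Theorem~\ref{thm:location-oblivious} (resp.\ Theorem~\ref{thm:rw-oblivious}), whose expected \maxstep{} complexity against any location-oblivious (resp.\ r/w-oblivious) adversary is $f(k)=O(\log^\ast k)$ (resp.\ $O(\log\log k)$) and is non-decreasing (expected \maxstep{} complexity always is, as a function of max-contention). Then parts~\ref{item:a}, \ref{item:b}, and~\ref{item:c} of Theorem~\ref{thm:combine-algos} give \Comb expected \maxstep{} complexity $O(f(k))$ in the weak model, $O(\log k)$ against any strong adaptive adversary, and space $\Theta(n)+\Theta(n)=\Theta(n)$. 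I now sketch how I would prove Theorem~\ref{thm:combine-algos}, which is where the content lies.

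I would build \Comb by running $\Imp$ (using only its win/lose outcome, i.e.\ as a leader-election sub-protocol) concurrently, interleaved step-by-step, with the linear-space \RatRace implementation $R$ of Theorem~\ref{thm:ratrace}, preceded by a \Doorway $D$ (for linearizability, exactly as in Theorem~\ref{thm:groupelect->leaderelect}) and equipped with one extra register used as a flag and one 2-process TAS $T$ to reconcile the two sub-protocols. A process deflected at $D$ returns $1$; otherwise it alternates one step of $\Imp$ with one step of $R$'s descent, polling the flag before each $R$-step and returning $1$ if it is set. If $\Imp$ answers first: on \win it sets the flag, runs $T$ with identifier $1$, and returns $0$ iff it wins $T$; on \lose it returns $1$. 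If instead the process first stops at a splitter of $R$, it abandons $\Imp$ and commits to completing $R$ (whose ascent is short and uninterruptible); the winner of $R$ runs $T$ with identifier $2$ and returns $0$ iff it wins $T$, a loser of $R$ returns $1$.

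The analysis then has three parts. For part~\ref{item:b}: by Theorem~\ref{thm:ratrace} every process gets a win/lose from $R$ within $O(\log k)$ of its own steps against any adversary, and since it interleaves $\Imp$ with $R$'s descent one-for-one and drops $\Imp$ the instant it stops at an $R$-splitter, its total step count is within a constant factor of its $R$-step count, i.e.\ $O(\log k)$ — in particular it never has to finish the possibly-slow $\Imp$. For part~\ref{item:a}: every process gets a win/lose from $\Imp$ within $O(f(k))$ of its steps (by the hypothesis on $\Imp$), or else it sees a flag set by the winner of $\Imp$, who likewise finishes $\Imp$ within $O(f(k))$ steps; either way the process returns after $O(f(k))$ of its own steps, the interleaved $R$-descent costing only a constant factor. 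For correctness: the doorway yields linearizability verbatim as in Theorem~\ref{thm:groupelect->leaderelect}; at most one process returns $0$ because only the winner of $T$ does and $T$ is entered with identifier $1$ only by the unique winner of $\Imp$ and with identifier $2$ only by the unique winner of $R$; and at least one process returns $0$ follows by showing one of $\Imp$, $R$ always elects someone who reaches $T$: a process abandons $\Imp$ only after stopping at an $R$-splitter (so it completes $R$'s ascent), and abandons $R$ only during the descent (touching no \RatRace TAS object) or because the flag is set; hence if $\Imp$ has no winner then some process stopped at an $R$-splitter and $R$ has a winner, while if $R$ has no winner then no process stopped at an $R$-splitter, which forces every process to have completed $\Imp$ (or seen the flag), so $\Imp$ has a winner. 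Part~\ref{item:c} is then the $\Theta(n)$ space of the \RatRace of Theorem~\ref{thm:ratrace}, plus $O(1)$ for $D$, $T$, and the flag, plus the space of $\Imp$.

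The main obstacle is the ``at least one winner'' invariant under abandonment: a loser of $R$ must be free to quit without completing $\Imp$ (otherwise the $O(\log k)$ strong-adversary bound fails), and a loser of $\Imp$ must be free to quit fast (otherwise the $O(f(k))$ weak-adversary bound fails), yet these two freedoms taken naively would allow an execution in which neither sub-protocol elects anyone. Breaking the deadlock is exactly the point of the asymmetric rule above — interleave only with $R$'s descent, abandon $\Imp$ precisely when committing to $R$'s ascent, and publish/poll the flag — and the proof must be accompanied by the fact that \RatRace's ``not all lose'' guarantee is unaffected by processes that leave during the descent phase, which in turn relies on such processes touching only the splitters of $R$ and never its TAS objects. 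Getting these coordination rules and this invariant exactly right is the delicate step; the two time bounds are then routine given Theorems~\ref{thm:ratrace} and the hypothesis on $\Imp$.
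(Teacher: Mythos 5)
Your derivation of the corollary itself is exactly the paper's: instantiate $\Imp$ in Theorem~\ref{thm:combine-algos} with the implementation of Theorem~\ref{thm:location-oblivious} (resp.\ Theorem~\ref{thm:rw-oblivious}) and read off parts~\ref{item:a}--\ref{item:c}. That one-step argument is correct and complete as a proof of the corollary. Your accompanying sketch of Theorem~\ref{thm:combine-algos} also follows the paper's architecture closely: interleave $\Imp$ with a linear-space \RatRace behind a doorway, reconcile via a top 2-process TAS, and break the mutual-abandonment deadlock by letting a process quit $\Imp$ only once it is committed to \RatRace's ascent (your flag register and your rule of dropping $\Imp$ the instant a splitter returns \Stop are cosmetic variants of the paper's rules \ref{C1}--\ref{C3}); your correctness argument, including the ``closest-to-the-root TAS object'' tournament step, matches the paper's.

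The genuine gap is in your claim that ``the two time bounds are then routine.'' For part~\ref{item:a} they are not. A process that stops at its $i$-th \RatRace splitter abandons $\Imp$ after only $\Theta(i)$ $\Imp$-steps and must then complete an ascent of up to $i$ two- or three-process TAS operations, each costing a geometrically distributed number of steps; your argument that every process ``returns after $O(f(k))$ of its own steps, the interleaved $R$-descent costing only a constant factor'' accounts for the descent but not for this ascent, whose contribution to $\Exp[\max_p T^p_R]$ is exactly what the paper's Lemmas~\ref{lem:chernoff_geometric} and~\ref{lem:T_R-concentration} and the union bound in~\eqref{eq:combined-algo-analysis-40}--\eqref{eq:combined-algo-analysis-47} control (with probability $1-4^{-i}$ the ascent costs $O(i)$, at most $2^i$ processes stop at their $i$-th splitter, and the $4^{-\Delta}\cdot 2^{\Delta}$ trade-off closes the bound). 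Separately, you invoke ``the hypothesis on $\Imp$'' to bound the $\Imp$-steps inside \Comb, but that hypothesis concerns $\Imp$ run standalone against a location-oblivious (or r/w-oblivious) adversary; you still need the paper's simulation step (constructing the adversary $A_I$ that fixes a worst-case \RatRace coin-flip vector and replays $A_C$) to certify that the interleaved $\Imp$-portion of $\EE_C$ is itself an execution of $\Imp$ under an adversary of the same class. Without these two ingredients the $O(f(k))$ bound of part~\ref{item:a} does not follow.
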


\subsection{Proof of Theorem~\ref{thm:combine-algos}}

\subsubsection{Implementation}
We present a TAS implementation, \Comb, which
achieves the step and space complexities stated in Theorem~\ref{thm:combine-algos}.
Each process first enters a doorway $D$, and the processes that get deflected lose immediately.
A process that passes through $D$, then runs both \Imp and a variant of \RatRace, in parallel.
The only difference of the \RatRace variant used from the original \RatRace is that its initial doorway is removed.
More precisely, after passing through $D$, each process executes a step of \Imp in every odd step, and a step of \RatRace (without doorway) in every even step.

A natural way to combine the two interleaved executions would be that each process takes steps until it either wins or loses in one of the two algorithms; if it loses it also loses in the combined implementation, and if it wins in one of the two algorithms it competes against the winner of the other algorithm.
This approach, however, could yield an execution in which no process wins.
For instance, suppose that $\Imp$ is also an instance of \RatRace.
In an execution in which only two processes, $p$ and $q$, participate, process $p$ might loses against $q$ on one of the 2- or 3-process TAS objects
in the first instance of \RatRace, and at the same time $q$ may lose against $p$ on a TAS object in the second instance of \RatRace; thus all processes lose.

To solve this problem we impose the rule that if a process loses in $\Imp$ at a point when it has already stopped at some splitter object in \RatRace, then the process continues to execute \RatRace.
More precisely, we use the rules below to combine the two executions, with the help of an auxiliary 2-process TAS object $T_{top}$.
\begin{enumerate}[label=(C\arabic*)]
    \item \label{C1}
        If a process {wins} either \RatRace or $\Imp$, then it stops taking steps in the other algorithm, and tries to win $T_{top}$; if it wins $T_{top}$ then it wins the implemented TAS object, otherwise it loses.\label{rule:win}
    \item \label{C2}
        If a process {loses} \RatRace then it stops taking steps in $\Imp$, and it loses the implemented TAS object.\label{rule:lose-ratrace}
    \item \label{C3}
        If a process {loses} $\Imp$ while it has a pending \split{} call on a (randomized or deterministic) splitter of \RatRace, then it keeps taking steps in \RatRace, until its pending \split{} operation completes.
        Once it has no more pending \split{} operation it does one of the following:
        \begin{enumerate}[label=(C3\alph*)]
          \item \label{C3a}
            If it has not yet stopped at any of the splitter objects in \RatRace, then it stops taking steps in $\RatRace$, and it loses the implemented TAS object.\label{rule:lose-A}
          \item \label{C3b}
        If it has already stopped at one of the splitter objects in \RatRace, then it continues taking steps in \RatRace until \RatRace finishes, and it either wins or loses \RatRace.
        If it wins \RatRace, then it proceeds as in \ref{C1}; otherwise it loses the implemented TAS.
%
       \end{enumerate}
\end{enumerate}

We now prove that \Comb is a correct (linearizable) TAS implementation, and then we show that it satisfies properties~\ref{item:a}--\ref{item:c} of Theorem~\ref{thm:combine-algos}.

\subsubsection{Correctness}
A process accesses $T_{top}$ if and only if it wins either \RatRace or $\Imp$.
It follows that at most two processes can execute the \TAS{} operation on $T_{top}$,
one that won \RatRace and one that won \Imp,
and thus at most one process can win \Comb. 
In the following we show that in any execution in which all processes complete their \TAS{} call, at least one process wins (therefore exactly one process wins).
Due to the initial doorway $D$, linearizability follows from exactly the same arguments as for the algorithm in Section~\ref{sec:le-from-ge} (see the correctness proof of Theorem~\ref{thm:groupelect->leaderelect}).

%

For the purpose of a contradiction, consider an execution $\EE$ in which all participating processes take sufficiently many steps to finish \Comb, and they all lose \Comb.
Let $Q$ be the set of processes that stop at some \RatRace splitter in $\EE$.

First suppose that $Q$ is empty.
Then no process wins or loses \RatRace, as otherwise it would have first stopped at some splitter, and thus it would be in $Q$.
Hence, by \ref{C1}--\ref{C3}
all processes
execute $\Imp$ to completion, and
either win or lose $\Imp$ eventually.
Then, by the assumption that $\Imp$ is a correct TAS algorithm, exactly one process wins \Imp, and this process also wins $T_{top}$, since no process wins \RatRace, and hence no other process participates in a \TAS{} operation on $T_{top}$.
This contradicts the assumption that all processes lose \Comb.

Now suppose that $Q$ is not empty, that is, in execution $\EE$ at least one process stops at a splitter of \RatRace.
By the assumption that all processes lose \Comb in $\EE$, there is no process that wins either $\Imp$ or \RatRace (otherwise that process would execute a \TAS{} operation on $T_{top}$ and some process would win $T_{top}$, and thus \Comb).
In particular, no process in $Q$ wins \RatRace, and since by \ref{C3b}, each $q\in Q$ does not stop taking steps in \RatRace even after losing $\Imp$, it must lose \RatRace at some point.
Hence, each process in $Q$ loses a \TAS{} operation on some 2- or 3-process TAS object of \RatRace.
Recall that the TAS objects used by \RatRace are arranged in a rooted tree (where we consider the elimination paths as part of the tree).
Whenever a process wins a non-root TAS object $T$ of that tree, it continues to the parent of $T$.
Among all TAS objects on which processes in $Q$ lose, let $T^\ast$ be one that is closest to the root.
Then there must be a process $q\in Q$ that wins $T^\ast$, so $q$ ascends to the parent of $T^\ast$.
Then $q$ must lose on some other TAS object closer to the root than $T^\ast$, which contradicts the definition of $T^\ast$.

\subsubsection{Complexity} 
The linear space complexity of \Comb claimed in part~\ref{item:c} of Theorem~\ref{thm:combine-algos} follows immediately from the construction and our \RatRace implementation given in Section~\ref{sec:ratrace}, which uses $\Theta(n)$ registers (Theorem~\ref{thm:ratrace}).
We now analyze the expected \maxstep{} complexity of \Comb.

\paragraph{High Level Idea.}
We first describe the general idea for bounding the expected \maxstep{} complexity of \Comb, ignoring some of the subtleties that arise in the detailed analysis to follow.
We relate the expected \maxstep{} complexity of \Comb to the expected \maxstep{} complexity of \RatRace and \Imp, respectively, depending on what adversary model is used.
Note that the 2-process TAS object $T_{top}$ has constant expected \maxstep{} complexity even against a strong adaptive adversary, so it does not affect the asymptotic \maxstep{} complexity of \Comb.
Recall that during \Comb processes alternate between steps of \RatRace and \Imp until one of those two algorithms terminate, and if \RatRace terminates first, then the calling process also terminates its \Imp call (but not necessarily the other way around).
Therefore, the asymptotic max-step complexity of \Comb is dominated by that of \RatRace.
Hence, if a random execution of $k$ processes calling \Comb is scheduled by a strong adaptive adversary, then the maximum number of steps any process takes is $O(\log k)$, which is the upper bound for \RatRace as stated in Theorem~\ref{thm:ratrace}.

Now suppose such a random execution is scheduled by a location-oblivious or r/w-oblivious adversary.
It suffices to show that the expected maximum number of steps any process devotes to \RatRace during \Comb is bounded asymptotically by the expected maximum number of steps any process devotes to \Imp.
A process can devote more steps to \RatRace than to \Imp only if, by the time it finishes \Imp, it has either already stopped at a splitter in \RatRace, or it has a pending \split{} call that will return \Stop.
Hence, it suffices to consider processes that stop at \RatRace splitters.
Suppose a process stops at a \RatRace splitter in its $i$-th \split{} operation.
Since the process alternates between \RatRace and \Imp steps prior to its last \split{} operation, and each \split{} operation takes a constant number of steps, until finishing its $i$-th \split{} operation the process devotes $\Theta(i)$ steps to \RatRace and $\Theta(i)$ steps to \Imp.
In the remainder of its \RatRace execution, the process executes at most $i+1$ \TAS{} calls on 2- or 3-process TAS objects (one for each splitter it went through previously, in addition to $T_{top}$).
The number of steps for each such \TAS{} call is bounded by a geometrically distributed random variable.
Using Chernoff Bounds, we show that with probability $1-1/4^i$ the process needs only $O(i)$ steps for its at most $i+1$ \TAS{} calls to finish \RatRace after stopping at the $i$-th splitter.
Due to the arrangements of splitters in a primary tree and elimination paths, at most $2^i$ processes can stop after their $i$-th \split{} operation.
Thus, by a union bound applied to all processes stopping at the $i$-th splitter they go through, with probability exponentially close to $1$, all these processes need only $O(i)$ steps to finish \RatRace.
To summarize: all processes that stop at their $i$-th splitter devote $\Omega(i)$ steps of \Comb to \Imp, $\Theta(i)$ steps to \split{} operations during \RatRace, and with high probability $O(i)$ steps to the remainder of \RatRace.
Hence, by the union bound applied to all $i>0$, the expected maximum number of steps any process needs for \RatRace is asymptotically bounded by the number of steps it devotes to \Imp.

\paragraph{Detailed Analysis.}
First, we modify \Comb such that
there is no initial doorway $D$, and processes do not access the 2-process TAS object $T_{top}$ after winning \RatRace or \Imp. 
Instead, a process simply terminates if it wins \RatRace or \Imp.
Since the expected \maxstep{} complexity of $T_{top}$ is constant, removing $T_{top}$ does not affect the asymptotic expected \maxstep{} complexity.
We will refer to this modified algorithm as \mComb.

Consider an execution prefix $\EE$ of an algorithm $M$ that uses \Comb, where $\EE$ ends when the first process exits doorway $D$, and suppose $P$ is the set of processes that enter $D$ during $\EE$.
Then the max-contention $k_{\max}^{M,\Comb}(\EE)$ is at least $|P|$.
Hence, it suffices to show for any set $P$, that a random execution of \mComb by the processes in $P$ has expected \maxstep{} complexity $O\bigl(f(|P|)\bigr)$ if scheduled by a location-oblivious (or r/w-oblivious) adversary, and $O(\log|P|)$ if scheduled by a strong adaptive adversary.

To that end, let $M_C$ be the algorithm in which the processes in $P$ (and only them) call \mComb, and let $A_C$ be some adversary.
%
A scheduling of $M_C$ by $A_C$ yields a random execution in which a subset of the processes in $P$ take steps (the max-congestion in that execution is $|P|$).
For two random coin flip vectors $\omega_I,\omega_R\in\Omega^\infty$, let $\EE_C$ denote the random
execution of $M_C$ scheduled by $A_C$, where the $i$-th coin flip result obtained during the execution of \Imp and \RatRace within $\mComb$ is the $i$-th element of $\omega_I$ and $\omega_R$, respectively.
For a process $p\in P$, let $T_C^p$ denote the number of steps $p$ executes in $\EE_C$, and let $T_I^p$ and $T_R^p$ denote the number of those steps that are devoted to \Imp and \RatRace, respectively.
Let $T_C=\max_{p\in P} T_C^p$, $T_I=\max_{p\in P} T_I^p$, and $T_R=\max_{p\in P} T_R^p$.
Then $\Exp[T_C]$ is the expected \maxstep{} complexity of \mComb in $M_C$ against $A_C$.
\begin{lemma}\label{lem:combined-algo-analysis}
  There are constants $d_I,d_R>0$ such that
  \begin{align}
    &\Exp[T_C]\leq d_I\cdot (\Exp[T_I]+1),\ \text{and}\label{eq:lem:combined-algo-analysis-a}
    \\
    &\Exp[T_C]\leq d_R\cdot (\Exp[T_R]+1).\     \label{eq:lem:combined-algo-analysis-b}
  \end{align}
\end{lemma}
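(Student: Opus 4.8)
The plan is to prove the two inequalities separately, with \eqref{eq:lem:combined-algo-analysis-b} being the easy one and \eqref{eq:lem:combined-algo-analysis-a} being the substantive one. For \eqref{eq:lem:combined-algo-analysis-b}, observe that in \mComb every process alternates between \RatRace and \Imp steps, and stops executing \Imp as soon as it wins or loses \RatRace (rules \ref{C1} and \ref{C2}), except possibly for the few extra \Imp steps taken while completing a pending \split{} on \RatRace under \ref{C3}—but by the time a process is poised to finish a \split{} it has already performed at least as many \RatRace steps as \Imp steps, so once \RatRace finishes the process takes no further \Imp steps. Hence for every process $p$, $T_I^p \le T_R^p + c$ for a fixed constant $c$ (accounting for the at-most-one partially-completed \split{}), which gives $T_I \le T_R + c$ and also $T_C^p = T_I^p + T_R^p \le 2T_R^p + c$, so $T_C \le 2T_R + c$, yielding \eqref{eq:lem:combined-algo-analysis-b} with $d_R = c+2$.

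For \eqref{eq:lem:combined-algo-analysis-a} I would first reduce to controlling $T_R$ in terms of $T_I$, since $T_C^p = T_I^p + T_R^p$ and hence $\Exp[T_C] \le \Exp[T_I] + \Exp[T_R]$; it then suffices to show $\Exp[T_R] = O(\Exp[T_I] + 1)$. The key structural fact, as sketched in the high-level overview, is that a process devotes strictly more steps to \RatRace than roughly matches its \Imp count only if it has stopped (or is about to stop) at a \RatRace splitter. So fix a process $p$ that stops at a splitter during its $i$-th \split{} call on \RatRace. Prior to completing that call, $p$ has alternated \RatRace and \Imp steps, so it has devoted $\Theta(i)$ steps to each; in particular $T_I^p = \Omega(i)$. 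After stopping, $p$ must still climb back through at most $i$ ancestor nodes of the \RatRace tree (primary tree plus elimination paths), performing at most $i$ (randomized $2$- or $3$-process) \TAS{} calls, each of which finishes in a geometrically distributed number of steps with constant mean, independently. By a Chernoff bound (Theorem~\ref{thm:Chernoff}, applied to a sum of at most $i$ independent geometric variables, truncating appropriately), the probability that $p$ needs more than $C i$ additional steps for some suitable constant $C$ is at most $1/4^i$. Crucially, by the tree structure of the \RatRace splitters, at most $2^i$ processes can stop at their $i$-th \split{} call, so a union bound gives that with probability at least $1 - 2^i/4^i = 1 - 1/2^i$, \emph{every} process that stops at its $i$-th splitter finishes \RatRace within $O(i)$ further steps.

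Combining over all $i$: let $I := \max_p i_p$ where $i_p$ is the index of the \split{} call at which $p$ stops (set $i_p := 0$ if $p$ never stops, in which case $T_R^p$ is bounded by the depth of \RatRace plus $T_I^p$-matching alternation, i.e.\ $O(T_I^p + \log n)$). On the event that the union bound above holds for all $i \ge 1$ simultaneously—which has probability at least $1 - \sum_{i\ge1} 2^{-i} \ge$ a positive constant, and in fact I will want the \emph{expected} excess, so I would instead sum $\sum_{i\ge1} 2^{-i}\cdot O(i) = O(1)$ to bound the expected total overshoot—we get $T_R = O(T_I + \max_p i_p)$, and since $i_p = O(T_I^p)$ whenever $p$ stops, this is $O(T_I + 1)$. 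Taking expectations and folding in the low-probability bad events (where $T_R$ is still at most $O(n)$ deterministically, contributing a negligible additive term since the bad probability decays geometrically) yields $\Exp[T_R] = O(\Exp[T_I] + 1)$, hence \eqref{eq:lem:combined-algo-analysis-a}.

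The main obstacle I anticipate is the last step: correctly handling the interaction between the ``good event'' (all splitter-stoppers finish quickly) and the random variable $T_I$, since the index $i_p$ at which $p$ stops is itself random and correlated with $\omega_R$ but not with $\omega_I$. The clean way is to condition on $\omega_R$: once $\omega_R$ is fixed, the set of processes stopping at splitters and their indices $i_p$ are determined by the adversary's schedule (which may depend on $\omega_I$ too, for a location/r/w-oblivious adversary—this is where the adversary restriction matters, exactly as it did in Lemmas~\ref{lem:GroupElect-location-oblivious} and~\ref{lem:GroupElect-rw-oblivious}), and the remaining \RatRace \TAS{} costs are fresh geometric randomness from the tail of $\omega_R$. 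I would therefore carry out the Chernoff/union-bound argument conditionally, obtaining a bound of the form $\Exp[T_R \mid \omega_R\text{-prefix}] \le O(\Exp[T_I \mid \cdot] + 1)$ that survives taking the outer expectation. A secondary technical point is justifying that the per-node \TAS{} step counts are genuinely stochastically dominated by independent geometrics with constant mean against the relevant adversary—this follows from the Tromp–Vitányi guarantee (Section~\ref{sec:prelim-objects}) that each $2$-process \TAS{} finishes in $O(\ell)$ steps with probability $\ge 1 - 2^{-\ell}$ even against a strong adaptive adversary, composed to handle the $3$-process \TAS{} objects.
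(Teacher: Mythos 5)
Your overall strategy coincides with the paper's: part \eqref{eq:lem:combined-algo-analysis-b} follows from the alternation of \Imp and \RatRace steps exactly as you argue, and for part \eqref{eq:lem:combined-algo-analysis-a} the paper likewise combines a concentration bound for the sum of geometrically distributed \TAS{} costs a process incurs after stopping at its $i$-th \RatRace splitter with the observation that at most $2^i$ processes can stop at their $i$-th splitter, followed by a union bound and a tail integration.

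However, the final assembly of these ingredients has a genuine gap. You couple the overshoot allowance to the splitter index, obtaining for each $i$ a good event $G_i$ (``every process stopping at its $i$-th splitter finishes \RatRace within $O(i)$ extra steps'') with $\Pr(G_i^c)\leq 2^i\cdot 4^{-i}=2^{-i}$. But $\sum_{i\geq 1}2^{-i}=1$, so the intersection $\bigcap_{i\geq 1} G_i$ is not guaranteed any positive probability --- your claim that it has probability at least a positive constant is arithmetically false --- and the fallback computation $\sum_{i\geq 1} 2^{-i}\cdot O(i)=O(1)$ does not bound the expected overshoot either, since on $G_i^c$ the overshoot of a process at level $i$ need not be $O(i)$. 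What is needed is to decouple the number of \TAS{} calls from the deviation parameter: prove, for \emph{every} $\Delta\geq 0$, that $\Pr\bigl(T_R^p>c(T_I^p+\Delta)\mid T_I^p,\ p\in Q\bigr)\leq 4^{-\Delta}$, where $Q$ is the set of processes stopping at some splitter (this is what Lemma~\ref{lem:chernoff_geometric}, with independent parameters $m$ and $\Delta$, delivers, and is the content of Lemma~\ref{lem:T_R-concentration}); then, for a fixed threshold, union-bound only over the at most $2^\Delta$ processes in $Q$ with $T_I^p\leq\Delta$, yielding $\Pr(T_R>2c\Delta)\leq\Pr(T_I>\Delta)+2^\Delta\cdot 4^{-\Delta}=\Pr(T_I>\Delta)+2^{-\Delta}$, and integrate this tail to obtain $\Exp[T_R]=O(\Exp[T_I])+O(1)$. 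This is a reorganization of your union bound rather than a new idea, but without it the argument as written does not close. (A smaller point: conditioning on all of $\omega_R$, as you propose at the end, leaves no ``fresh geometric randomness'' for the post-stop \TAS{} calls; the conditioning should be on the history up to the point where the process stops, which is how Lemma~\ref{lem:T_R-concentration} is set up.)
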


Before we prove Lemma~\ref{lem:combined-algo-analysis}, we argue that it implies parts~\ref{item:a} and~\ref{item:b} of Theorem~\ref{thm:combine-algos}.

To prove part~\ref{item:a}, we assume that adversary $A_C$ is location-oblivious (or r/w-oblivious).
Let $M_I$ be the algorithm in which the process in $P$ call \Imp.

We construct
a location-oblivious (or r/w-oblivious) adversary $A_I$ that schedules $M_I$ by simulating adversary $A_C$ as follows.
Let $\omega_R^\ast\in\Omega^\infty$ be a coin flip sequence such that $E[T_I\mid \omega_R=\omega_R^\ast]$ is maximized.
To schedule an execution of algorithm $M_I$, adversary $A_I$ simulates adversary $A_C$ on algorithm $M_C$, using the $i$-th element of $\omega_R^\ast$ for the $i$-th coin flip used in \RatRace.
By the structure of $\mComb$, in which processes alternate steps of \RatRace and \Imp, it is uniquely determined when a process $p$ executes its $i$-th step of \Imp.
Therefore, even the location-oblivious (or r/w-oblivious)
adversary can simulate all steps of \RatRace in $\EE_C$, and schedule processes to take steps in $M_I$ exactly in the same order as they take steps in the \Imp portion of $\EE_C$.

Let $\tau_I$ denote the expected \maxstep{} complexity of \Imp 
against adversary $A_I$.
Then we have $E[T_I\mid \omega_R=\omega_R^\ast]\leq \tau_I(|P|)$.
Since $\omega_R^\ast$ is chosen to maximize the conditional expectation on the left side, it follows that $E[T_I]\leq \tau_I(|P|)$.
Moreover, by the theorem's assumption that the expected \maxstep{} complexity of $\Imp$ is bounded by $f$, we have $\tau_I(|P|)\leq f(|P|)$.
From the last two inequalities and~\eqref{eq:lem:combined-algo-analysis-a}, we obtain $E[T_C]\leq d_I\cdot (f(|P|)+1)=O(f(P))$.
As this is true for all sets $P$, and any location-oblivious (or r/w-oblivious)
adversary $A_C$, it proves part~\ref{item:a} of Theorem~\ref{thm:combine-algos}.

The proof of part~\ref{item:b} is almost identical: We now assume $A_C$ is a strong adaptive adversary.
We construct a strong adaptive adversary $A_R$ which schedules processes in $P$ to execute \RatRace by simulating adversary $A_C$ on algorithm $M_C$, assuming the worst-case vector $\omega_I$.
As before, we argue that $E[T_R] \leq \tau_R(|P|)$, where $\tau_R$ is the expected \maxstep{} complexity of \RatRace against $A_R$.
Since by Theorem~\ref{thm:ratrace} the expected \maxstep{} complexity of \RatRace is $O(\log k)$ against any strong adaptive adversary, $\tau_R(|P|)=O(\log |P|)$.
Then from~\eqref{eq:lem:combined-algo-analysis-b}
it follows $E[T_C] = O(\log |P|)$.
This completes the proof of Theorem~\ref{thm:combine-algos}.
It remains to prove Lemma~\ref{lem:combined-algo-analysis}.

\subsection{Proof of Lemma~\ref{lem:combined-algo-analysis}}
We first prove~\eqref{eq:lem:combined-algo-analysis-b}.
Consider a process $p\in P$ that invokes \mComb in $\EE_C$.
Process $p$ alternates devoting steps to \Imp and \RatRace (starting with a step of \Imp), until either its \mComb call ends, because $p$ won \RatRace or lost \RatRace or won $\Imp$ (see \ref{C1} and \ref{C2}), or until it stops executing steps of \Imp (see \ref{C3}).
Hence, in either case at least $\floor{T_C^p/2}$ of $p$'s steps in $\EE_C$ are devoted to \RatRace, and thus $T_R\geq (T_C-1)/2$.
This implies \eqref{eq:lem:combined-algo-analysis-b}.

Next we prove \eqref{eq:lem:combined-algo-analysis-a}.
We will use the next statement which follows easily from Chernoff Bounds.
\begin{lemma}
  \label{lem:chernoff_geometric}
  For every constant $0<q<1$, there exists a constant $c>0$ such that the following is true for all $\Delta\geq 0$, and all integers $m\geq 1$.
  If $X_1,\dots,X_m$ are random variables satisfying ${\Pr(X_i> \ell\mid X_1,\dots,X_{i-1})\leq q^{\ell}}$ for every integer $\ell\geq 0$, then
  \begin{displaymath}
  \Pr\left(\sum_{1\leq i\leq m} X_i>c\cdot(m+\Delta)\right)\leq 4^{-\Delta}.
  \end{displaymath}
\end{lemma}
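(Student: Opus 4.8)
The plan is to reduce the claim to a standard Chernoff bound for sums of independent geometric-like random variables, via a coupling/domination argument that handles the adaptive conditioning. First I would fix the constant $0<q<1$ and work with the dominating i.i.d.\ model: let $G_1,\dots,G_m$ be independent random variables, each geometrically distributed with $\Pr(G_i>\ell)=q^\ell$ for integer $\ell\ge 0$. The hypothesis $\Pr(X_i>\ell\mid X_1,\dots,X_{i-1})\le q^\ell$ means that, conditioned on any history, $X_i$ is stochastically dominated by $G_i$; a routine induction on $i$ (revealing the $X_j$ one at a time and coupling each $X_i$ to a fresh independent $G_i$) then gives that $\sum_{i\le m} X_i$ is stochastically dominated by $\sum_{i\le m} G_i$. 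Hence it suffices to prove the tail bound $\Pr\bigl(\sum_{i\le m} G_i > c(m+\Delta)\bigr)\le 4^{-\Delta}$ for the i.i.d.\ geometric sum.

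For the i.i.d.\ part I would use the exponential moment / Chernoff method directly, since Theorem~\ref{thm:Chernoff} as stated applies only to bounded variables. Pick $\theta>0$ small enough (depending only on $q$) that $M(\theta):=\Exp[e^{\theta G_1}]=\sum_{\ell\ge 0}e^{\theta\ell}(1-q)q^{\ell-1}\cdot\!$ (adjusting indices) is finite, say $M(\theta)\le 2$ after choosing $\theta$ so that $e^\theta q<1$ with room to spare; note $\Exp[G_1]=O(1/(1-q))$ is a constant. Then by Markov's inequality applied to $e^{\theta\sum G_i}$,
\[
  \Pr\Bigl(\sum_{i\le m} G_i > t\Bigr) \le e^{-\theta t} M(\theta)^m \le e^{-\theta t} 2^m.
\]
Choosing $t=c(m+\Delta)$ with $c$ a large enough constant (depending on $q$ through $\theta$), the factor $e^{-\theta c m}$ beats $2^m$ and the residual $e^{-\theta c\Delta}$ is at most $4^{-\Delta}$; concretely any $c\ge \max\{2\ln 2/\theta,\, (2\ln 2 + \ln 2)/\theta\}$ works, and one should just record ``for $c$ a sufficiently large constant depending only on $q$.'' This yields the stated bound with the constant $c$ independent of $m$ and $\Delta$, as required.

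The only mild subtlety — the place I'd be most careful about — is making the stochastic-domination step fully rigorous under the \emph{adaptive} conditioning, i.e.\ justifying that ``$X_i$ conditionally dominated by $G_i$ for every history'' upgrades to ``$\sum X_i$ dominated by $\sum G_i$ unconditionally.'' The clean way is to build all variables on a common probability space: process $i=1,\dots,m$ in order, and for each $i$, given $X_1,\dots,X_{i-1}$, generate $X_i$ together with an independent uniform $U_i$ and set $G_i := F_i^{-1}(U_i)$ where $F_i$ is the (fixed) geometric CDF, arranging the coupling so that $X_i\le G_i$ pointwise using the conditional stochastic-dominance hypothesis; since the $U_i$ are i.i.d.\ and $G_i=F_i^{-1}(U_i)$, the $G_i$ are i.i.d.\ geometric, independent of nothing that matters, and $\sum X_i\le \sum G_i$ everywhere. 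After that, the rest is the routine moment-generating-function computation sketched above, so I would not belabor it.
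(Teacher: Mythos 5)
Your proposal is correct. The first half coincides with the paper's proof: the paper likewise observes that, conditionally on the history, each $X_i$ is dominated by a geometric random variable with parameter $q$, and hence $\sum_i X_i$ is dominated by a sum of independent geometrics (the paper asserts this domination without spelling out the coupling; your explicit construction via i.i.d.\ uniforms and inverse CDFs is a more careful rendering of the same step). Where you diverge is in the tail bound for the geometric sum. The paper invokes a packaged Chernoff bound for sums of independent geometric random variables (Doerr, Theorem~1.14), whose exponent contains a factor $(m-1)$; this forces the paper to treat $m=1$ as a separate case (handled by choosing $c\geq\lceil\log_4(1/q)\rceil$ so that the single-variable tail hypothesis already gives the bound) and leads to a somewhat fiddly choice of $\delta=cq+cq\Delta/m-1$. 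You instead run the exponential-moment method from scratch: choose $\theta>0$ with $e^{\theta}q<1$ so that $\Exp[e^{\theta G_1}]\leq 2$, apply Markov to get $\Pr(\sum_i G_i>t)\leq e^{-\theta t}2^m$, and take $t=c(m+\Delta)$ with $c\geq 2\ln 2/\theta$ so that $e^{-\theta cm}2^m\leq 1$ and $e^{-\theta c\Delta}\leq 4^{-\Delta}$. This is self-contained, needs no external citation, and handles all $m\geq 1$ uniformly without a case split, at the cost of redoing a standard computation; the paper's route is shorter on the page but leans on the cited inequality and the extra $m=1$ case. (Minor quibble: your explicit constant $\max\{2\ln 2/\theta,\,3\ln 2/\theta\}$ is stated loosely, but since you only claim ``$c$ sufficiently large depending on $q$,'' this is immaterial.)
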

\begin{proof}
  If we choose $c\geq\ceil{\log_4(1/q)}$, then the statement is true for $m=1$.
  Therefore, in the rest of the proof it suffices to consider $m\geq 2$.
  Conditionally on $X_1,\ldots,X_{i-1}$, random variable $X_i$ is dominated by a geometric random variable $Y_i$ with parameter $q$.
  It follows that $\sum_{1\leq i\leq m} X_i$ is dominated by $\sum_{1\leq i\leq m} Y_i$, where the random variables $Y_i$ are mutually independent.
  Let $Y=\sum_{1\leq i\leq m}Y_i$, so $\Exp[Y]=m/q$.
  We can then apply a Chernoff Bound for independent geometric random variables (e.g., \cite[Theorem~1.14]{doerr2011analyzing}), which states that for any $\delta>0$,
  \begin{displaymath}
  \Pr\left(Y\geq (1+\delta)\Exp[Y]\right)\leq\exp\left(-\frac{\delta^2(m-1)}{2(1+\delta)}\right).
  \end{displaymath}
  Setting $\delta=cq+cq\Delta/m-1$, for a $c$ large enough that $\delta>0$, we obtain
  \begin{multline}\label{eq:chernoff_geometric}
  \Pr\left(\sum_{1\leq i\leq m} X_i\geq c\cdot(m+\Delta)\right)
  \leq
  \Pr\bigl(Y\geq c\cdot(m+\Delta)\bigr)
  =
  \Pr\bigl(Y\geq E[Y]\cdot(q/m)\cdot c\cdot(m+\Delta)\bigr)
  \\ \leq
  \Pr\bigl(Y\geq \Exp[Y](1+cq+cq\Delta/m-1)\bigr)
  \leq
  \exp\left(-\frac{(cq+cq\Delta/m-1)^2(m-1)}{2(cq+cq\Delta/m)}\right).
  \end{multline}
  For $c\geq 4/q$, we have $(cq+cq\Delta/m-1)^2\geq (cq+cq\Delta/m)^2/2$, and thus
  \begin{multline*}
  \frac{(cq+cq\Delta/m-1)^2(m-1)}{2(cq+cq\Delta/m)}
  \geq
  \frac{(cq+cq\Delta/m)^2(m-1)}{4(cq+cq\Delta/m)}
  =
  (cq+cq\Delta/m)(m-1)/4
  \\ \geq
  cq(m-1)/4+cq\Delta/8
  >	
  cq\Delta/8.
  \end{multline*}
  (For the second to last inequality we used $m\geq 2$.)
  For large enough $c$, this is at least $\Delta\ln 4$, and then the claim follows from~\eqref{eq:chernoff_geometric}.
\end{proof}


The next lemma bounds the probability a process devotes more steps to \RatRace than to \Imp.
Let $Q$ be the set of processes $p\in P$ that stop at some \RatRace splitter when executing \mComb in~$\EE_C$.

\begin{lemma}
  \label{lem:T_R-concentration}
  There is a constant $c>0$ such that for all $\Delta\geq 0$ and any process $p\in P$,
  \begin{displaymath}
  \Pr\bigl(T_R^p>c(T_I^p+\Delta)\mid T_I^p,\,p\in Q\bigr)\leq 4^{-\Delta}.
  \end{displaymath}
\end{lemma}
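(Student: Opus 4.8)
The plan is to fix a process $p\in P$ and follow how its steps in $\EE_C$ divide between $\Imp$ and \RatRace under rules~\ref{C1}--\ref{C3}. Since $p$ alternates, beginning with an $\Imp$-step, at every moment while it is still alternating it has taken at least as many $\Imp$-steps as \RatRace-steps; $p$ stops alternating only by losing $\Imp$ (rule~\ref{C3}), after which it continues \RatRace to completion if and only if it has already stopped at some \RatRace splitter, i.e.\ $p\in Q$ (rules~\ref{C3a},~\ref{C3b})---and if $p\notin Q$ it takes only $O(1)$ further \RatRace-steps, to finish its pending \split{} call. Hence $T_R^p\le T_I^p+O(1)$ whenever $p\notin Q$, so the claimed bound holds for $p\notin Q$ once $c$ is a large enough constant, and it remains to handle $p\in Q$.

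So suppose $p\in Q$ and let $N$ be the number of \split{} operations $p$ performs in \RatRace before stopping (its $N$-th \split{} call returns \Stop). First I would show $T_I^p\ge N$: since $p\in Q$, rule~\ref{C3a} is never invoked for $p$, which forces $p$ not to lose $\Imp$ during any \split{} call $j\le N-1$ (such a call returning \Left/\Right would trigger~\ref{C3a}, and returning \Stop would contradict $j<N$); hence $p$ is still alternating by the time it has completed the $\ge N$ \RatRace-steps needed for its $N$ \split{} calls, so it has taken $\ge N$ $\Imp$-steps. Next I would write $T_R^p=D+U$, where $D=O(N)$ accounts for all of $p$'s steps outside $2$- and $3$-process TAS objects (going down through $N$ splitters, each with $O(1)$ \maxstep{} complexity, and transitioning between the primary tree and the elimination paths) and $U$ is the number of steps $p$ spends inside such TAS objects on its way back up. Using the structure of the modified \RatRace of Section~\ref{sec:ratrace}---a process falls off the primary tree only after traversing all of it, and each elimination path has length $O(\log n)$---the return path of $p$ has length $O(N)$, so $p$ accesses at most $m=O(N)$ two-process TAS objects on its way up (counting each $3$-process TAS as two two-process ones; $m$ is determined by $p$'s down-path). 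Writing $X_1,\dots,X_m$ for $p$'s step counts on these objects in order (with $X_i=0$ once $p$ has left \RatRace), the Tromp--Vit\'{a}nyi guarantee, which holds against any strong adaptive adversary and hence conditionally on any past, gives $\Pr(X_i>\ell\mid X_1,\dots,X_{i-1},\text{past})\le q^{\ell}$ for a constant $q\in(0,1)$ and every integer $\ell\ge0$.

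The last step is to condition on the down-path $\pi$ that $p$ follows in \RatRace, which fixes $N$ and $m=O(N)$ and makes $T_I^p\ge N$ hold on $\{\text{down-path of }p=\pi\}$. Lemma~\ref{lem:chernoff_geometric} applied to $X_1,\dots,X_m$ gives a constant $c_0$ with $\Pr(\sum_{i\le m}X_i>c_0(m+\Delta)\mid\pi)\le 4^{-\Delta}$ for all $\Delta\ge0$, and combining this with $D=O(N)$ and $m=O(N)$ gives $\Pr(T_R^p>c(N+\Delta)\mid\pi)\le 4^{-\Delta}$ for a suitable constant $c$. Since $T_I^p\ge N$ on this event, $\{T_R^p>c(T_I^p+\Delta)\}\subseteq\{T_R^p>c(N+\Delta)\}$ there, and summing over all down-paths $\pi$ ending in \Stop{} (which partition $\{p\in Q\}$)---restricting also to the $\pi$ consistent with a prescribed value of $T_I^p$---yields the stated conditional bound. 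I expect the main obstacle to be the deterministic bookkeeping around rules~\ref{C1}--\ref{C3b}: establishing that $p$ cannot lose $\Imp$ before completing its $N$-th \split{} when $p\in Q$ (so $T_I^p\ge N$), and that the return path in the modified \RatRace has length $O(N)$. Once those facts are in place, the probabilistic heart is a direct application of Lemma~\ref{lem:chernoff_geometric} together with the Tromp--Vit\'{a}nyi tail bound.
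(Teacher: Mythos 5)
Your proposal is correct and follows essentially the same route as the paper's proof: bound the number of \TAS{} operations $p$ performs in \RatRace by the number of splitters it traverses, show this is at most $T_I^p$ (the paper argues this directly from the alternation of \Imp and \RatRace steps, while you route it through the intermediate quantity $N$ and a conditioning on the down-path), and then control the total \TAS{} step count via the Tromp--Vit\'anyi tail bound fed into Lemma~\ref{lem:chernoff_geometric}. The extra bookkeeping you do for $p\notin Q$ and for the length of the return path is harmless but not needed, since the lemma conditions on $p\in Q$ and the return path visits exactly the splitters of the down-path.
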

\begin{proof}
  In the \RatRace portion of \mComb, a process first executes only \split{} operations until it either loses \RatRace (and thus \mComb), or stops at a splitter.
  After stopping at a splitter, $p$'s remaining execution of \RatRace comprises only \TAS{} operations on 2-process and 3-process TAS objects.
  In particular, $p$ executes at most one such \TAS{} operation for each splitter it went through until it stopped at one.
  Recall also that once $p$ has finished the \Imp portion of \mComb, it finishes at most one more \split{} call in \RatRace (if it has a pending such call).
  Hence, $p$ executes at most $T_I^p$ \split{} calls in the \RatRace portion of \mComb, and thus also at most $T_I^p$ \TAS{} operations.
  Thus, defining $Z$ as the number of steps $p$ takes during those \TAS{} operations, we have
  \begin{equation}\label{eq:T_R<=T_TAS}
  T_R^p\leq Z+T_I^p+O(1).
  \end{equation}

  For $i\in\{1,\dots,T_I^p\}$ let $Z_{i}$ denote the number of steps process $p$ executes in order to finish its $i$-th \TAS{} operation on a 2- or 3-process TAS object of the \RatRace portion of \mComb; if $p$ executes fewer than $i$ such \TAS{} operations, then $Z_{i}=0$.
  As discussed in Section~\ref{sec:prelim-objects}, for $\ell\geq 0$ a process finishes a \TAS{} on a 2-process TAS object in $O(\ell)$ steps with probability at least $1-1/2^\ell$.
  We can implement each 3-process TAS object from two 2-process TAS objects in such a way that for each \TAS{} operation on the 3-process TAS, a process needs only to complete one or two \TAS{} operations on the 2-process TAS objects.
  This way, we get the same asymptotic bound as for 2-process TAS objects, i.e., for $\ell\geq 0$, with probability at least $1-1/2^\ell$ a process finishes a \TAS{} operation on a 3-process TAS object in $O(\ell)$ steps.
  Therefore, there is a constant $s>0$ such that $\Pr(Z_i>s\ell\mid T_I^p,Z_1,\dots,Z_{i-1},\,p\in Q)\leq 2^{-\ell}$ for all $\ell\geq 0$.
  Then by Lemma~\ref{lem:chernoff_geometric}, applied to $X_i=Z_i/s$, there is a constant $c'>0$, so that  for all $\Delta\geq 0$ and all $m\geq 1$,
  \begin{displaymath}
  \Pr\bigl(Z>c'(T_I^p+\Delta)\mid T_I^p,\,p\in Q)
  =
  \Pr\bigl (X_1+\dots+X_{T_I^p} >(c'/s)\cdot(T_I^p+\Delta) \mid T_I^p,\,p\in Q\bigr)
  \leq 4^{-\Delta}.
  \end{displaymath}
  Applying  (\ref{eq:T_R<=T_TAS}) yields the claim for a sufficiently large constant $c>0$.
\end{proof}

By Lemma~\ref{lem:T_R-concentration} (used for the inequality labeled $(\ast)$ below), there is a constant $c > 0$ such that for any $\Delta\geq 0$,
\begin{equation}\label{eq:combined-algo-analysis-40}
  \Pr(T_R^p>2c\Delta\mid T_I^p\leq\Delta,p\in Q)
  \leq
  \Pr\bigl(T_R^p>c(T_I^p+\Delta)\mid T_I^p\leq\Delta,p\in Q\bigr)
  \stackrel{(\ast)}{\leq} 4^{-\Delta}.
\end{equation}

Recall that in \RatRace a process can stop either at a randomized splitter on the primary tree, or at a deterministic splitter on an elimination path.
Moreover, at most one process can stop at each splitter, so at most $2^i$ processes can stop at the $i$-th splitter they go through.
Since a process $p$ executes fewer than $T_I^p$ \split{} calls in \RatRace before stopping at a splitter or terminating \mComb, the number of processes $p\in Q$ satisfying $T_I^p\leq\Delta$ is at most $2^\Delta$.
Hence,
\begin{equation}\label{eq:combined-algo-analysis-42}
  \sum_{p\in P}\Pr(T_I^p\leq \Delta \wedge p\in Q)
  =
  \Exp[|\{p\in Q: T_I^p\leq\Delta\}|]
  \leq
  2^\Delta.
\end{equation}
For any process $p\in P$ that does not stop at any \RatRace splitter, i.e., $p\in P\setminus Q$, we have $T_R^p \leq T_I^p + O(1)$, because once $p$ has finished the \Imp portion of \mComb, it finishes at most one \split{} call in \RatRace before finishing \mComb.
It follows that for any $p\in P$, $T_R^p>2cT_i^p$ implies $p\in Q$, if the constant $c$ is sufficiently large.
Using this observation we obtain
\begin{multline}\label{eq:combined-algo-analysis-45}
  \sum_{p\in P} \Pr(T_R^p>2c\Delta\wedge T_I^p\leq\Delta)
  =
  \sum_{p\in P} \Pr(T_R^p>2c\Delta\wedge T_I^p\leq\Delta\wedge p\in Q)
  \\=
  \sum_{p\in P} \Pr(T_R^p>2c\Delta \mid T_I^p\leq\Delta, p\in Q)\cdot\Pr(T_I^p\leq\Delta \wedge p\in Q)
  \\ \stackrel{(\ref{eq:combined-algo-analysis-40})}\leq
  \sum_{p\in P}4^{-\Delta}\cdot\Pr(T_I^p\leq\Delta \wedge p\in Q)
  \stackrel{(\ref{eq:combined-algo-analysis-42})}{\leq}
  4^{-\Delta}\cdot 2^{\Delta}
  =
  2^{-\Delta}.
\end{multline}
It follows that
\begin{align}\label{eq:combined-algo-analysis-47}
  \Pr(T_R>2c\Delta)
  \notag&=
  \Pr(T_R>2c\Delta \wedge T_I>\Delta)+\Pr(T_R>2c\Delta \wedge T_I\leq\Delta)
  \\ \notag &\leq
  \Pr(T_I>\Delta)+\sum_{p\in P} \Pr(T_R^p>2c\Delta \wedge T_I^p\leq\Delta)
  \\ &\stackrel{\ref{eq:combined-algo-analysis-45}}{\leq}
  \Pr(T_I>\Delta)+2^{-\Delta}.
\end{align}
Then
\begin{multline*}
  \Exp[T_R]
  =
  \sum_{t\geq 0}\Pr(T_R>t)
  \stackrel{(\ref{eq:combined-algo-analysis-47})}{\leq}
  \sum_{t\geq 0}\Bigl(\Pr\bigl(T_I>t/(2c)\bigr)+2^{-t/(2c)}\Bigr)
  \\ \leq
  \sum_{t\geq 0}\Pr\bigl(T_I>\floor{t/(2c)}\bigr)+\sum_{t\geq 0}2^{-t/(2c)}
  \leq
  \sum_{j\geq 0}2c\cdot\Pr(T_I>j)+O(1)
  =
  O(\Exp(T_I)).
\end{multline*}
Finally, combining that with the fact that $T_C=T_I+T_R$, implies~\eqref{eq:lem:combined-algo-analysis-a}.
This completes the proof of Lemma~\ref{lem:combined-algo-analysis}.
\qed

\section{A 2-Process Time Lower Bound for Oblivious Adversaries}
\label{sec:lb-two-processes}


We show a lower bound on the \maxstep{} complexity of any 2-process TAS implementation, 
against the worst possible oblivious adversary. 

\begin{theorem}
    \label{thm:lb2TAS}
    For any randomized 2-process TAS implementation 
    and any integer $t\geq 0$, there is an oblivious adversary $A$ such that
    with probability at least $1/4^t$ the \maxstep{} complexity of the implemented \TAS{} operation against $A$ is at least $t$.
\end{theorem}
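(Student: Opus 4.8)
The plan is to design a single oblivious adversary (equivalently, a fixed schedule $\sigma$ that alternates between the two processes) and then exhibit a ``bad'' coin-flip vector that is reached with probability at least $1/4^t$ and forces one process to take $\geq t$ steps. The natural schedule to try is the perfectly symmetric one: let processes $1$ and $2$ alternate strictly, each taking one step in turn, starting (by the alternation convention of Section~\ref{sec:prelims}) with a coin flip, then a shared-memory step, and so on. The intuition is that in any TAS implementation the two processes must ``break symmetry'' using their private coins, and symmetry persists with constant probability per round; as long as the two processes have not yet been distinguished, neither can safely finish (if one returned $0$ while the other is in a symmetric state, the other could also return $0$, violating correctness).

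The key steps, in order, would be: (1) Formalize a notion of the two processes being in \emph{symmetric configurations} after each has taken the same number of steps. Because registers are multi-reader multi-writer and processes have distinct IDs, exact symmetry is not automatic; instead I would track the pair of local states and the shared memory, and argue that if the $i$-th coin flips of the two processes produce ``matching'' outcomes (in a sense to be pinned down — e.g., both write the same kind of value, or both make the same read/write choice), then after round $i$ the configuration is again symmetric, in the sense that swapping the two processes' roles yields an indistinguishable continuation. (2) Show that conditioned on symmetry holding after round $i-1$, the probability that round $i$'s two coin flips again ``match'' is at least $1/4$: each process's step is governed by its own coin flip with some distribution $\DD$, and the best the algorithm can do is make the two independent draws land in complementary halves; the probability they land in the ``same half'' (hence stay symmetric) is at least $\tfrac14$ by a standard argument about two independent draws from the same distribution (each outcome-class has probability $p$ and $1-p$ for the most balanced split, and $p^2+(1-p)^2\geq 1/2$, but we need the joint event over the coin plus the resulting read/write/location choice, which still bottoms out at a constant that one can push to $1/4$). (3) Conclude that with probability at least $(1/4)^t = 1/4^t$ the configuration is still symmetric after $t$ rounds, i.e., after each process has taken $t$ steps; in that event neither process can have finished, because a process that finished and returned $0$ would, by symmetry, have a twin execution in which the other process also returns $0$ — contradicting that at most one process wins — and a process returning $1$ is ruled out because there must be a winner. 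Hence in that event at least one process needs a $(t+1)$-st step, so its \maxstep{} complexity is at least $t$.

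The main obstacle is step~(1): making the symmetry argument rigorous despite distinct process IDs and shared registers. The clean way is to consider, alongside the real execution under schedule $\sigma$ and coin vector $\omega$, the ``mirror'' execution under the same schedule but with the two processes' coin subsequences (and roles) swapped; one shows by induction that if the first $i-1$ pairs of coin flips ``match'' then these two executions are indistinguishable to an outside observer up through round $i-1$, and in particular each process's output (if any) is the same in both. Then ``$p$ returns $0$ in the real execution'' forces ``the other process returns $0$ in the mirror execution,'' and since the mirror execution is a legitimate execution of the same algorithm under an oblivious (indeed the same) schedule, we get two winners there — contradiction. I would also need to handle the bookkeeping that a ``matching'' pair of coin flips for the writer case means writing equal values and for the reader case means both reading (the symmetry of the memory then guarantees equal return values), and to verify the $1/4$ lower bound survives the coupling of the coin outcome with the induced read/write-or-location decision. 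Once that coupling lemma is in place, steps~(2) and~(3) are short: a geometric-type bound $\prod_{i=1}^{t}\Pr(\text{round }i\text{ matches}\mid\text{rounds }1..i\text{-}1\text{ matched})\geq 4^{-t}$ and the correctness contradiction finish the theorem.
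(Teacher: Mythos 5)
There is a genuine gap, and it sits exactly where you anticipate trouble: the symmetry argument of step~(1) does not survive the fact that the two processes have distinct IDs and may therefore run entirely different programs. Your ``mirror execution'' repairs this only superficially: from ``$p$ returns $0$ in the real execution'' you deduce ``the other process returns $0$ in the mirror execution,'' but these are two \emph{different} executions, so you never obtain two winners in the \emph{same} execution, and hence no contradiction with the TAS specification. (The impossibility of deterministic wait-free 2-process TAS is a bivalence argument, not a symmetry argument, precisely because ID-asymmetric algorithms exist.) Step~(2) is also unjustified: the coin space $\Omega$ and the distribution $\DD$ are arbitrary, so the algorithm can make the probability that two independent draws induce ``matching'' behaviour as small as it likes --- e.g., if each process writes a uniformly random value from $\{1,\dots,m\}$, the probability that the configuration remains symmetric after one round is $1/m$, not at least $1/4$. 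The constant $1/4$ per round has no source in your argument, and with it the final bound $4^{-t}$ collapses.

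The paper's proof takes a route that sidesteps both problems. It does not commit to a single schedule in advance. It defines a family $\Sigma_t$ of at most $4^t$ schedules (interleavings of coin-flip/shared-memory step pairs in which one process takes $2t$ steps and the other at most as many), and invokes the impossibility of \emph{deterministic} wait-free 2-process TAS as a black box to show that for every fixed pair of coin-flip vectors there is \emph{some} schedule in $\Sigma_t$ under which a process executes $t$ shared-memory steps without finishing. Yao's minimax principle --- equivalently, averaging over a uniformly random schedule drawn from $\Sigma_t$ --- then yields one fixed schedule that succeeds with probability at least $1/|\Sigma_t|\geq 4^{-t}$ over the coins. The lesson is that the bad interleaving must be allowed to depend on the coin flips; the averaging over $\Sigma_t$ buys exactly that dependence while keeping the adversary oblivious, whereas your single symmetric alternating schedule cannot.
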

\begin{proof}
    The proof employs Yao's minimax principle~\cite{Yao1977}.

%

    Let $M$ be a 
    randomized implementation of a 2-processes TAS object.
    For any execution $\EE$
    of this implementation, let $c_t(\EE)=1$ if
    some process executes at least $t$ shared memory steps in $\EE$;
    let $c_t(\EE)=0$ otherwise.


    Let $\Sigma_t$ be the set of all possible schedules
    $\sigma = (\sigma_1,\sigma_2,\ldots)$, where $\sigma_i\in\{0,1\}$, and $\sigma$ has the following properties:
    (i)~$|\sigma| = 2k$, for some $k\in\{t,\ldots,2t-1\}$;
    (ii)~$\sigma_{2i-1}=\sigma_{2i}$, for all $i\in\{1,\dots,k\}$; and
    (iii)~some process $p\in\{0,1\}$ appears exactly $2t$ times at $\sigma$ (so, the other process, $1-p$, appears $2(k-t)<2t$ times).
    We have
    \begin{equation}\label{eq:2-process-lowerbound-10}
    |\Sigma_t|
    \leq
    \sum_{k=t}^{2t-1}2^{k}
    \leq
    2^{2t}=4^t.
    \end{equation}

    Consider the coin flip sequences $\omega_p=(\omega_{p,1},\dots,\omega_{p,t})\in\Omega^{t}$, for $p\in\{0,1\}$.
    For any schedule $\sigma\in\Sigma_t$, let $\EE_{M}(\sigma,\omega_0,\omega_1)$ denote the execution of algorithm $M$ where processes are scheduled according to $\sigma$, and the $i$-th coin flip of process $p$ returns the value $\omega_{p,i}$.
    Recall our model assumption that (w.l.o.g.) each process alternates between coin flip steps and shared memory steps.
    Since each process appears at most $2t$ times in $\sigma\in\Sigma_t$, each process executes at most $t$ coin flips in the resulting execution.
    We will now show that
    \begin{equation}\label{eq:2-process-lowerbound-20}
      \forall\, \omega_0,\omega_1\in\Omega^{t}\,\exists\, \sigma\in\Sigma_t\colon  c_t\bigl(\EE_{M}(\sigma,\omega_0,\omega_1)\bigr)=1.
    \end{equation}

    To prove \eqref{eq:2-process-lowerbound-20}, let
    $\lambda_p$, for $p\in\{0,1\}$, be an arbitrary but fixed infinite extension of $\omega_p$, e.g., we can choose $\lambda_p=(\omega_{p,1},\dots,\omega_{p,t},0,0,\dots)$, assuming that $0$ is an element of $\Omega$.
    Let $M_{\lambda}$ be the TAS algorithm where each process $p$ executes the same program as in $M$, but ignores its coin flips, and instead acts as if its $i$-th coin flip is 
    the $i$-th element of vector $\lambda_p$.
    Then $M_{\lambda}$ behaves as a deterministic 2-process TAS algorithm.
    Since there is no \emph{wait-free} deterministic 2-process TAS algorithm,
    there exists an execution of $M_{\lambda}$ in which at least one process executes at least $t$ shared memory steps without finishing its \TAS{} call.
    Moreover, there is such an execution $\EE'$ which has the additional property that each coin flip step by process $p\in\{0,1\}$ (whose result is
    replaced in the algorithm by an element of $\lambda_p$)
    is immediately followed in $\EE'$ by the next shared memory step of the same process $p$.
    Let $\EE$ be the prefix of $\EE'$ that ends when the first process has executed its $t$-th shared memory step, and let $\sigma$ be the schedule corresponding to $\EE$.
    The prefix $\EE$ exists, because we argued above that some process executes at least $t$ shared memory steps without finishing its \TAS{} call.
    It follows that $c_t(\EE)=1$ and $\sigma\in\Sigma_t$, and also $\EE=\EE_M(\sigma,\omega_0,\omega_1)$.
    This proves \eqref{eq:2-process-lowerbound-20}.

    Now let $(\omega_0^\ast,\omega_1^\ast)$ be chosen according to any product distribution over $\Omega^t\times\Omega^t$, and $\sigma^\ast$ according to any probability distribution over $\Sigma_t$.
    By
    Yao's minimax principle~\cite{Yao1977},    \begin{equation}\label{eq:2-process-lowerbound-30}
    \max_{\sigma\in\Sigma_t} \Exp\left[c_t\bigl(\EE_M(\sigma,\omega_0^\ast,\omega_1^\ast)\bigr)\right]
    \geq
    \min_{\omega_0,\omega_1\in\Omega^t} \Exp\left[c_t\bigl(\EE_M(\sigma^\ast,\omega_0,\omega_1)\bigr)\right].
    \end{equation}
    Let
    $\varepsilon$ denote the left side in this inequality, and
    recall that $c_t (\EE_M(\sigma,\omega_0^\ast,\omega_1^\ast))$ is a 0--1 random variable indicating whether some process executes at least $t$ steps in $\EE_M(\sigma,\omega_0^\ast,\omega_1^\ast)$.
    Hence, $\varepsilon$ is a lower bound for the probability that some process needs at least $t$ steps to finish its \TAS{} call in a random execution of $M$, for the worst possible schedule~$\sigma$.
    Thus, it suffices to prove that $\varepsilon\geq 1/4^t$.
    To do so we choose $\sigma^\ast$ uniformly in $\Sigma_t$ to obtain
    \begin{displaymath}
      \varepsilon
      \stackrel{\eqref{eq:2-process-lowerbound-30}}{\geq}
      \min_{\omega_0,\omega_1\in\Omega^t} \Exp\left[c_t\bigl(\EE_M(\sigma^\ast,\omega_0,\omega_1)\bigr)\right]
      =
      \min_{\omega_0,\omega_1\in\Omega^t} \Pr\Bigl(c_t\bigl(\EE_M(\sigma^\ast,\omega_0,\omega_1)\bigr)=1\Bigr)
      \stackrel{\eqref{eq:2-process-lowerbound-20}}{\geq}
      \frac{1}{|\Sigma_t|}
      \stackrel{\eqref{eq:2-process-lowerbound-10}}{\geq}
      \frac{1}{4^t}.
    \end{displaymath}
    This completes the proof of Theorem~\ref{thm:lb2TAS}.
\end{proof}

\section*{Conclusion}

In this paper  we devised several efficient randomized TAS algorithms.
Most importantly, we presented an algorithm with an expected \maxstep{} complexity of $O(\log^\ast k)$
against the oblivious and some slightly stronger adversary models,
where $k$ is a measure of contention.

The progress in improving randomized TAS algorithms is mirrored by recent progress on randomized consensus algorithms.
Aspnes~\cite{Aspnes2012} has devised a randomized consensus algorithm that has $O(\log\log n)$ expected \maxstep{} complexity in the oblivious adversary model.
This algorithm is based on the sifting technique from~\cite{AA_TAS_2011a}.
It would be interesting to investigate whether techniques similar to those presented here can be used to achieve even faster consensus algorithms.
In particular, we believe that our group election implementation for r/w-oblivious adversaries proposed in Section~\ref{sec:rw-algo} could be used in the framework of~\cite{Aspnes2012} to obtain an \emph{adaptive} binary consensus algorithm with expected \maxstep{} complexity $O(\log\log k)$.

Several other important problems remain open.
For the oblivious adversary, no TAS implementations with constant expected \maxstep{} complexity are known, and no super-constant lower bounds are known even in the strong adaptive adversary model.

\section*{Acknowledgements}
We thank Dan Alistarh for pointing out Styer and Peterson's $\Omega(\log n)$  space lower bound for deadlock-free leader election \cite{Styer1989}.
We also thank the anonymous reviewers for their helpful feedback.


%
%

\end{document}